\documentclass[11pt, oneside]{article} 
\pdfoutput=1

\usepackage{jheppub}
\usepackage{amsmath,amssymb,amsfonts,amsthm}
\usepackage{wrapfig}
\usepackage{overpic}
\usepackage{color}
\usepackage{booktabs}
\definecolor{darkblue}{rgb}{0.1,0.1,.7}
\usepackage[]{amsmath}
\usepackage{graphicx} 

\usepackage[]{latexsym}
\usepackage{amscd}
\usepackage[all,cmtip]{xy}
\usepackage{mathrsfs}
\usepackage{bbold}
\usepackage{subfigure}
\usepackage[margin=10pt,font=small,labelfont=bf]{caption}
\usepackage{subcaption} 
\usepackage{simplewick}
\usepackage{changepage}
\usepackage[]{algorithm2e}
\usepackage{booktabs,multirow}
\hyphenpenalty=1000
\usepackage{float}

\usepackage[OT2,OT1]{fontenc}
\usepackage{braket}
\usepackage{tikz}
\usepackage{pgfplots}
\pgfplotsset{compat=1.10}
\usepgfplotslibrary{fillbetween}
\usetikzlibrary{patterns}

\newcommand{\abs}[1]{\left\lvert#1\right\rvert}



\newtheorem{theorem}{Theorem}[section]
\newtheorem{lemma}[theorem]{Lemma}
\newtheorem{corollary}[theorem]{Corollary}
\newtheorem{proposition}[theorem]{Proposition}
\newtheorem{conjecture}[theorem]{Conjecture}
\theoremstyle{remark}
\newtheorem{remark}[theorem]{Remark}


\makeatletter
\def\@fpheader{\ }
\makeatother

\title{A universal inequality on the unitary 2D CFT partition function}
\author{Indranil Dey$^{1}$, Sridip Pal$^{2}$, Jiaxin Qiao$^{3}$}
\affiliation{$^1$Department of Theoretical Physics, Tata Institute of Fundamental Research,
Colaba, Mumbai, India, 400005 \\$^2$Walter Burke Institute for Theoretical Physics,  California Institute of Technology,  Pasadena, CA, USA  \\  $^3$ Laboratory for Theoretical Fundamental Physics, Institute of Physics, École Polytechnique Fédérale de Lausanne (EPFL), CH-1015 Lausanne, Switzerland\\ }

\abstract{We prove the conjecture proposed by Hartman,  Keller and Stoica (HKS) \cite{Hartman:2014oaa}: the  grand-canonical free energy of a unitary 2D CFT with a sparse spectrum below the scaling dimension $\frac{c}{12}+\epsilon$ and below the twist $\frac{c}{12}$ is universal in the large $c$ limit for all $\beta_L\beta_R \neq 4\pi^2$.  

\vspace{0.3cm}
The technique of the proof allows us to derive a one-parameter (with parameter $\alpha\in(0,1]$) family of universal inequalities on the unitary 2D CFT partition function with general central charge $c\geqslant 0$,  using analytical modular bootstrap.  We derive an iterative equation for the domain of validity of the inequality on the $(\beta_L,\beta_R)$ plane.  The infinite iteration of this equation gives the boundary of maximal-validity domain, which depends on the parameter $\alpha$ in the inequality.
\vspace{0.1cm}

In the $c \to \infty$ limit, with the additional assumption of a sparse spectrum below the scaling dimension $\frac{c}{12} + \epsilon$ and the twist $\frac{\alpha c}{12}$ (with $\alpha \in (0,1]$ fixed), our inequality shows that the grand-canonical free energy exhibits a universal large $c$ behavior in the maximal-validity domain. This domain, however, does not cover the entire $(\beta_L, \beta_R)$ plane, except in the case of $\alpha = 1$. For $\alpha = 1$, this proves the conjecture proposed by HKS \cite{Hartman:2014oaa}, and for $\alpha < 1$, it quantifies how sparseness in twist affects the regime of universality. Furthermore, this implies a precise lower bound on the temperature of near-extremal BTZ black holes, above which we can trust the black hole thermodynamics.

}

\begin{document}
	\begin{flushright}
CALT-TH 2024-039
\end{flushright}
\maketitle

\section{Introduction \& summary}
Fundamental physical principles such as locality and crossing symmetry impose nonperturbative constraints on the physical observables in conformal field theories.  In two dimensions,  the partition function of a CFT on a torus is modular invariant.  A famous consequence of modular invariance is the Cardy formula,  a universal formula for the spectral density of high-energy states in a unitary modular invariant CFT,  derived in \cite{Cardy:1986ie} and established rigorously in \cite{Mukhametzhanov:2019pzy,Mukhametzhanov:2020swe,Pal:2019zzr}.  Assuming a twist gap in the spectrum of Virasoro primaries,\footnote{In this paper, the terminology ``twist" refers to $\Delta-J$, where $\Delta$ is the scaling dimension and $J$ is the spin. In 2D CFT, we have $\Delta=h+\bar{h}$ and $J=\abs{h-\bar{h}}$, so the twist is given by $2\min\left\{h,\bar{h}\right\}$.}  the Cardy formula has been extended to the regime of small twist,  large spin for $c>1$ unitary CFTs \cite{Benjamin:2019stq,Pal:2022vqc,Pal:2023cgk}.  Further consequences of modular invariance are explored in \cite{Kraus:2016nwo,Kraus:2017kyl,Kraus:2018pax,Cardy:2017qhl,Das:2017vej,Das:2017cnv,Kusuki:2018wpa,Collier:2018exn,Hikida:2018khg,Romero-Bermudez:2018dim,Brehm:2018ipf,Benjamin:2020zbs} and in \cite{Ganguly:2019ksp,Pal:2019yhz,Pal:2020wwd,Das:2020uax} with mathematical rigor.  The 2D CFT partition function has also been analyzed using the tools of harmonic analysis on the modular group $\mathrm{SL}(2,\mathbb{Z})$.  This avatar of the analytic modular bootstrap has been explored in \cite{Benjamin:2021ygh,Benjamin:2022pnx,DiUbaldo:2023qli,Haehl:2023tkr,Haehl:2023xys,Haehl:2023mhf}. 

Modular invariance has been used along with linear programming to bound the scaling dimension of lowest non-vacuum operators as function of the central charge,  see \cite{Hellerman:2009bu} for the original argument,  \cite{Collier:2016cls} for a revisitation,  \cite{Afkhami-Jeddi:2019zci} for the state of the art numerical result and \cite{Hartman:2019pcd} for the state of the art analytical result  in this direction. 

In this paper,  we use the analytical modular bootstrap method to explore the universality of the free energy for a 2D CFT in a grand-canonical ensemble{\color{black}, which depends on the left and right inverse temperatures, $\beta_{L}$ and $\beta_{R}$.\footnote{$\beta_{L}$ and $\beta_{R}$ are related to the thermodynamic quantitites, the temperature $T$ and the angular potential $\mu$, by $\beta_{L}=(1+\mu)/T$ and $\beta_{R}=(1-\mu)/T$. The standard thermodynamic formula is given by $dE=TdS+\mu dJ$, where $E$ is the energy, $S$ is the entropy and $J$ is the angular momentum.} We would like to study the following question:
\begin{itemize}
	\item For a unitary, modular invariant 2D CFT, suppose we approximate the free energy using the vacuum term of the partition function. When is the relative error small?
\end{itemize}
To be precise, we decompose the free energy into two parts\footnote{In this paper, the terminology ``free energy" differs from the standard one by a minus sign.}
\begin{equation}\label{FE:lowT}
	\begin{split}
		\log Z(\beta_{L},\beta_{R})=\frac{c}{24}(\beta_{L}+\beta_{R})+\mathcal{E}(\beta_{L},\beta_{R}),\quad(\beta_{L}\beta_{R}\geqslant4\pi^2).
	\end{split}
\end{equation}
Here $Z(\beta_{L},\beta_{R})$ is the partition function. In the right-hand side, the first term comes from the vacuum contribution $\log e^{\frac{c}{24}(\beta_{L}+\beta_{R})}$, and the second term $\mathcal{E}(\beta_{L},\beta_{R})$ is the error term which we would like to estimate. The reason we only consider the regime $\beta_{L}\beta_{R}\geqslant4\pi^2$ is that the torus partition function satisfies modular invariance, $Z(\beta_{L},\beta_{R})=Z(4\pi^2/\beta_{L},4\pi^2/\beta_{R})$, which tells us that the partition function in the regime $\beta_{L}\beta_{R}\leqslant4\pi^2$ can be written as
\begin{equation}\label{FE:highT}
	\begin{split}
		\log Z(\beta_{L},\beta_{R})=\frac{c}{24}\left(\frac{4\pi^2}{\beta_{L}}+\frac{4\pi^2}{\beta_{R}}\right)+\mathcal{E}\left(\frac{4\pi^2}{\beta_{L}},\frac{4\pi^2}{\beta_{R}}\right),\quad(0<\beta_{L}\beta_{R}\leqslant4\pi^2).
	\end{split}
\end{equation}
By comparing the vacuum terms in \eqref{FE:lowT} and \eqref{FE:highT}, we see that the one in \eqref{FE:lowT} is larger in the low-temperature regime ($\beta_{L}\beta_{R}\geqslant4\pi^2$) and the one in \eqref{FE:highT} is larger in the high-temperature regime ($\beta_{L}\beta_{R}\leqslant4\pi^2$), and the transition happens at the self-dual line $\beta_{L}\beta_{R}=4\pi^2$.

In a fixed theory, the error term $\mathcal{E}(\beta_{L},\beta_{R})$ is known to be small compared to the vacuum term in some limit. The simplest example is when $\beta_{L}=\beta_{R}=\beta\gg 1/\Delta_{\rm gap}$, where $\Delta_{\rm gap}$ is the scaling dimension gap of the theory. In this case, the error term is controlled by:
\begin{equation}\label{estimate:equaltemp}
	\begin{split}
		\mathcal{E}(\beta,\beta)={\rm O}\left(e^{-\Delta_{\rm gap}\beta}\right),\quad\beta\rightarrow\infty.
	\end{split}
\end{equation}
Here we would like to point out that the coefficient in front of the exponential decay $e^{-\Delta_{\rm gap}\beta}$ is fixed once the theory is fixed. This is how Cardy formula for free energy works in this case \cite{Cardy:1986ie}.

However, it may happen that the theory depends on several parameters and the central charge $c$ is tunable in terms of these parameters. This situation occurs in both high energy and condensed matter systems, such as 3D AdS gravity \cite{Banados:1992wn, Banados:1992gq, Maldacena:1998bw, Maloney:2007ud,Hartman:2019pcd}, Liouville theory \cite{Polyakov:1981rd,Seiberg:1990eb,Teschner:2001rv} and critical loop models \cite{Nienhuis:1984wm,diFrancesco:1987qf,Nivesvivat:2023kfp}. An interesting limit is to take $c\rightarrow\infty$ with $\beta_{L}$ and $\beta_{R}$ fixed. This limit is important for the black hole microstate counting \cite{Strominger:1996sh}, but it goes beyond the scope of the standard Cardy formula. One may still want to derive some bounds on the error term $\mathcal{E}(\beta_{L},\beta_{R})$, which we may expect is small compared to the linear growth in $c$, at least in some regime of $\beta_{L}$ and $\beta_{R}$. 

The main point of this paper is to derive some universal bounds on $\mathcal{E}(\beta_{L},\beta_{R})$ to answer the above question. While the authors' exploration started with an aim of establishing such a universality in the $c\to\infty$ limit and make statements about the partition function of 3D gravity in AdS$_3$,  the main result of this paper is proven for any fixed unitary, modular invariant CFT, without assuming holography.  Hence,  we decide to organize the remaining introduction section in two parts.  The first part is generic,  valid for all unitary CFTs with $c\geqslant 0$.  In the second part,  we discuss the holographic implication, which follows from the first part with additional sparseness conditions on the low lying spectrum. A special case of these sparseness conditions was postulated in \cite{Hartman:2014oaa}.

\paragraph{\textbf{An estimate for the free energy for all unitary 2D CFTs with $c\geqslant0$:\vspace{0.2cm}\\}}
In a holomorphic CFT,  the light spectrum $\Delta\leqslant c/12$,  completely determines the partition function due to Rademacher expansion \cite{rademacher1937convergent}.  However,  given that such an expansion does not apply for a generic CFT,  it is meaningful to ask how much mileage we can have just by knowing the light spectrum.  This question has been expounded on in reference \cite{Kaidi:2020ecu}, and from the point of view of harmonic analysis in \cite{Benjamin:2021ygh} assuming some spectral condition and well definedness of modular completion of the light part of the partition function. 

Let us suppose,  we know the spectrum of a unitary 2D CFT for scaling dimension below $c/12+\epsilon$ with small $\epsilon>0$.  We define
\begin{equation}\label{def:ZwidehatL}
\widehat{Z}_{L}(\beta):=\sum_{\Delta\leqslant c/12+\epsilon} e^{-\beta\Delta}\,.
\end{equation}
Can we estimate the canonical free energy, which is $\log Z(\beta_{L},\beta_{R})$ with $\beta_{L}=\beta_{R}=\beta$, as a function of $\beta$?  This is answered in \cite{Hartman:2014oaa} by the following estimate\footnote{The upper bound is true for any $c\geqslant0$ modular invariant, unitary CFTs. The lower bound requires an extra condition that the theory has a normalizable vacuum.} 
\begin{equation}
\log Z(\beta)=\frac{c}{12}\beta+  \varepsilon(\beta)\,,\quad 0\leqslant \varepsilon(\beta) \leqslant \log\left(\frac{\widehat{Z}_{L}(\beta)}{1-e^{-\left(\beta-\frac{4\pi^2}{\beta}\right)\epsilon}}\right)\,,\quad \beta>2\pi\,.
\end{equation}
The next question is: can we go further into the mixed-temperature regime, where $\beta_{L}$ and $\beta_{R}$ are not necessarily equal? Using unitarity, one can easily extend the above estimate to the regime $\beta_{L},\beta_{R}>2\pi$, where the error term in \eqref{FE:lowT} is bounded as follows
\begin{equation}
	\begin{split}
		0\leqslant \mathcal{E}(\beta_{L},\beta_{R})\leqslant\varepsilon(\beta),\quad\beta=\min\left\{\beta_{L},\beta_{R}\right\}, \\
	\end{split}
\end{equation}
where $\varepsilon(\beta)$ is the same as above. Using modular invariance, one gets a similar estimate in the regime $\beta_{L},\beta_{R}<2\pi$.

Now can we go even further, e.g. into the regime $\beta_{L}\leqslant2\pi<\beta_{R}$? Extending the above estimate into this regime is the main result of this paper.

Let us suppose,  we know the spectrum of a unitary 2D CFT with scaling dimension below $c/12+\epsilon$ with small $\epsilon>0$ as well as the spectrum with twist below $\alpha c/12$, where $\alpha\in(0,1]$. We define the following quantity couting the low-twist contribution to the partition function:
\begin{equation}\label{intro:def:ZtildeLH}
	\begin{split}
		\tilde{Z}_L(\alpha;\beta_{L},\beta_{R})&:=\sum\limits_{\text{min}(h,\bar{h})\leqslant\frac{\alpha c}{24}}n_{h,\bar{h}}\ e^{- h\beta_L-\bar{h}\beta_{R}}. \\
	\end{split}
\end{equation}
Given this, can we estimate the error $\mathcal{E}(\beta_{L},\beta_{R})$ as a function of $\beta_L$ and $\beta_R$? 

In this paper,  we use modular invariance to derive such an estimate on $\mathcal{E}(\beta_{L},\beta_{R})$, which follows from a variant of theorem \ref{theorem:logZ} stated below. Our estimate will be valid for $(\beta_{L},\beta_{R})$ in an open domain $\mathcal{D}_\alpha\subset\{\beta_{L}\beta_{R}>4\pi^2\}$,  given by 
	\begin{equation}\label{def:Dalpha0}
	\begin{split}
		\mathcal{D}_\alpha&:=\left\{(\beta_{L},\beta_{R})\ \Big{|}\ \beta_{R}>2\pi,\ \beta_{L}>\phi_\alpha(\beta_{R});\ {\rm or}\ \beta_{L}>2\pi,\ \beta_{R}>\phi_\alpha(\beta_{L})\right\}, \\
		\phi_\alpha(\beta)&:=\max\left\{\frac{4\pi^2}{\beta},\ \frac{1}{2} \left(\alpha(4\pi - \beta) + \sqrt{\alpha^2(4\pi - \beta)^2 + 16\pi^2 (1 - \alpha)}\right)\right\}\,.
	\end{split}
\end{equation}
For each $(\beta_{L},\beta_{R})\in \mathcal{D}_\alpha$, we algorithmically determine a finite collection of points:
$$(\beta_L^{(1)},\beta_R^{(1)}),\ (\beta_L^{(2)},\beta_R^{(2)}),\ \ldots\ (\beta_L^{(N)},\beta_R^{(N)})\quad (N<\infty),$$
staying in the following regime
\begin{equation}
	\begin{split}
		\beta_{L}^{(i)}\,\beta_{R}^{(i)}&>4\pi^2\ (i=1,2,\ldots,N-1),\quad \beta_{L}^{(N)}, \beta_{R}^{(N)}>2\pi\,. \\
	\end{split}
\end{equation}
\begin{figure}[!ht]
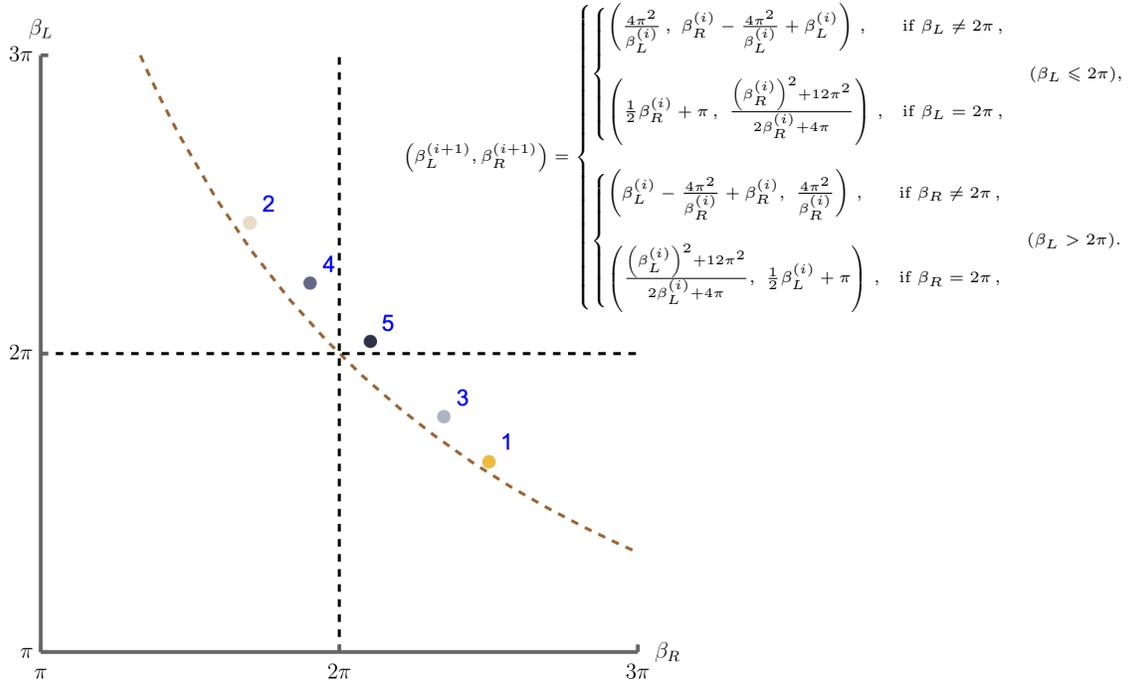

	\begin{overpic}[width=0.6\textwidth,tics=10]{updates\_algo.pdf}
		\put (59, 76) {\fontsize{6pt}{7pt}\selectfont $\left(\beta_{L}^{(i+1)},\beta_{R}^{(i+1)}\right)=\begin{cases}\begin{cases} \left(\frac{4 \pi ^2}{\beta_{L}^{(i)}}\,,\ \beta_{R}^{(i)}-\frac{4 \pi ^2}{\beta_{L}^{(i)}}+\beta_L^{(i)}\right)\,,&\text{if}\ \beta_L\neq 2\pi\,,\\
					\\
					\left(\frac{1}{2}\beta_{R}^{(i)}+\pi\,,\ \frac{\left(\beta_{R}^{(i)}\right)^2+12 \pi^2}{2 \beta_{R}^{(i)}+4 \pi } \right)\,,&\text{if}\ \beta_L=2\pi\,,
				\end{cases}\quad(\beta_L\leqslant 2\pi),\\
				\\
				\begin{cases} \left(\beta_{L}^{(i)}-\frac{4 \pi ^2}{\beta_{R}^{(i)}}+\beta_R^{(i)},\ \frac{4 \pi ^2}{\beta_{R}^{(i)}}\right)\,,&\text{if}\ \beta_R\neq 2\pi\,,\\
					\\
					\left(\frac{\left(\beta_{L}^{(i)}\right)^2+12 \pi^2}{2 \beta_{L}^{(i)}+4 \pi },\ \frac{1}{2}\beta_{L}^{(i)}+\pi \right)\,,&\text{if}\ \beta_R=2\pi\,,
				\end{cases}\quad (\beta_L>2\pi).
			\end{cases}$}
	\end{overpic}
	\caption{An example,  showing the set of points $(\beta_L^{(i)},\beta_R^{(i)})$, obtained using the algorithm, appearing in fig.\  \ref{fig:algorithmbeta},  for $\alpha=1$, starting from an initial choice of $(\beta_L,\beta_R)=(\frac{41\pi}{25},\frac{5\pi}{2})$. We begin with $(\beta_L^{(1)},\beta_R^{(1)})=(\frac{41\pi}{25},\frac{5\pi}{2})$ and the recursion terminates at $N=5$.  The numbers near the points denote at which step they are reached, added by $1$.}
	\label{fig:0}
\end{figure}The algorithm begins with an initial pair $(\beta_L, \beta_R) \in \mathcal{D_\alpha}$. At each step, these values are updated, and the algorithm terminates at step $N$ if $ \beta_{L}^{(N)}, \beta_{R}^{(N)} > 2\pi$.   The algorithm is explained in figure \ref{fig:algorithmbeta}.  {See figure \ref{fig:0} as well for an example with $\alpha=1$.}

A non-trivial result that we establish is that $\mathcal{D_\alpha}$ is the maximal domain in which the iteration terminates after a finite number of steps. At each step, we utilize the modular invariance of the partition function to derive the following key inequality (as explained later in section \ref{sec:prooflemma} in the proof \footnote{In section \ref{sec:prooflemma}, $g(x,y) = e^{-\frac{2\pi c}{24}(x+y)}Z(2\pi x,2\pi y)$ and $g_0(x,y) = \tilde{Z}_L(2\pi x, 2\pi y)$} of lemma \ref{lemma:functionbound}):
	\begin{equation}
		e^{-\frac{c}{24}(\beta_{L}^{(i)} + \beta_{R}^{(i)})}Z(\beta_{L}^{(i)}, \beta_{R}^{(i)}) - \tilde{Z}_L\left(\alpha; \beta_{L}^{(i)}, \beta_{R}^{(i)}\right) \leqslant e^{-\frac{c}{24}(\beta_{L}^{(i+1)} + \beta_{R}^{(i+1)})}Z(\beta_{L}^{(i+1)}, \beta_{R}^{(i+1)}).
	\end{equation}
	Primarily, using this step repeatedly, we show that for any $c\geqslant0$ unitary, modular invariant 2D CFT with a normalizable vaccum, the error term $\mathcal{E}(\beta_{L},\beta_{R})$ satisfies the following bound
\begin{equation}\label{modifiedTh}
	\begin{split}
		0\leqslant\mathcal{E}(\beta_{L},\beta_{R})&\leqslant\log\left[\sum_{i=1}^{N}\tilde{Z}_L\left(\alpha;\beta_{L}^{(i)},\beta_{R}^{(i)}\right)+\frac{\widehat{Z}_{L}(\beta)}{1-e^{-\left(\beta-\frac{4\pi^2}{\beta}\right)\epsilon}}\right], \\
		\beta&:=\text{Min}\left(\beta_{L}^{(N)}, \beta_{R}^{(N)}\right),
	\end{split}
\end{equation}
The key point to emphasize is that for $\alpha=1$,  the domain $\mathcal{D}_1$ is the same as $\{\beta_{L}\beta_{R}>4\pi^2\}$. Therefore, $\mathcal{D}_1$ and its $S$-dual domain $\mathcal{D}_1'=\{\beta_{L}\beta_{R}<4\pi^2\}$ encompass the whole $(\beta_L,\beta_R)$ plane, except for the self-dual line $\beta_L\beta_R=4\pi^2$. So we have an estimate on the error term $\mathcal{E}(\beta_{L},\beta_{R})$ according to \eqref{FE:lowT} and \eqref{FE:highT}, aside from the self-dual line. As we tune $\alpha$,  the domain of validity of the bound reduces.  For more details,  see the theorem\!~\ref{theorem:logZ} and the remarks that follow. 

Eq.\,\eqref{modifiedTh} and theorem \ref{theorem:logZ} do not make use of Virasoro symmetry. By incorporating the full Virasoro symmetry, the result is strengthened in a refined theorem, which applies to the Virasoro-primary partition function $Z^{\rm Vir}(\beta_{L},\beta_{R}) := Z(\beta_{L},\beta_{R}) \eta(\beta_{L}) \eta(\beta_R)$, counting only Virasoro primaries (see theorem \ref{theorem:logZVir}). The main differences are: 1) the analog of $\hat{Z}_L$ is now constructed from Virasoro primaries with $\text{min}(h,\bar{h}) \leqslant \alpha \frac{c-1}{24}$, in contrast to all states with $\text{min}(h,\bar{h}) \leqslant \alpha \frac{c}{24}$; and 2) the bound involves additional factors of $\big(\tfrac{\beta^{(i)}_{L}\beta^{(i)}_{R}}{4\pi^2}\big)^{1/2}$, which arise from the modular covariance properties of $Z^{\rm Vir}(\beta_{L},\beta_{R})$.

\begin{figure}[!ht]
	\centering
	\includegraphics[scale=0.16]{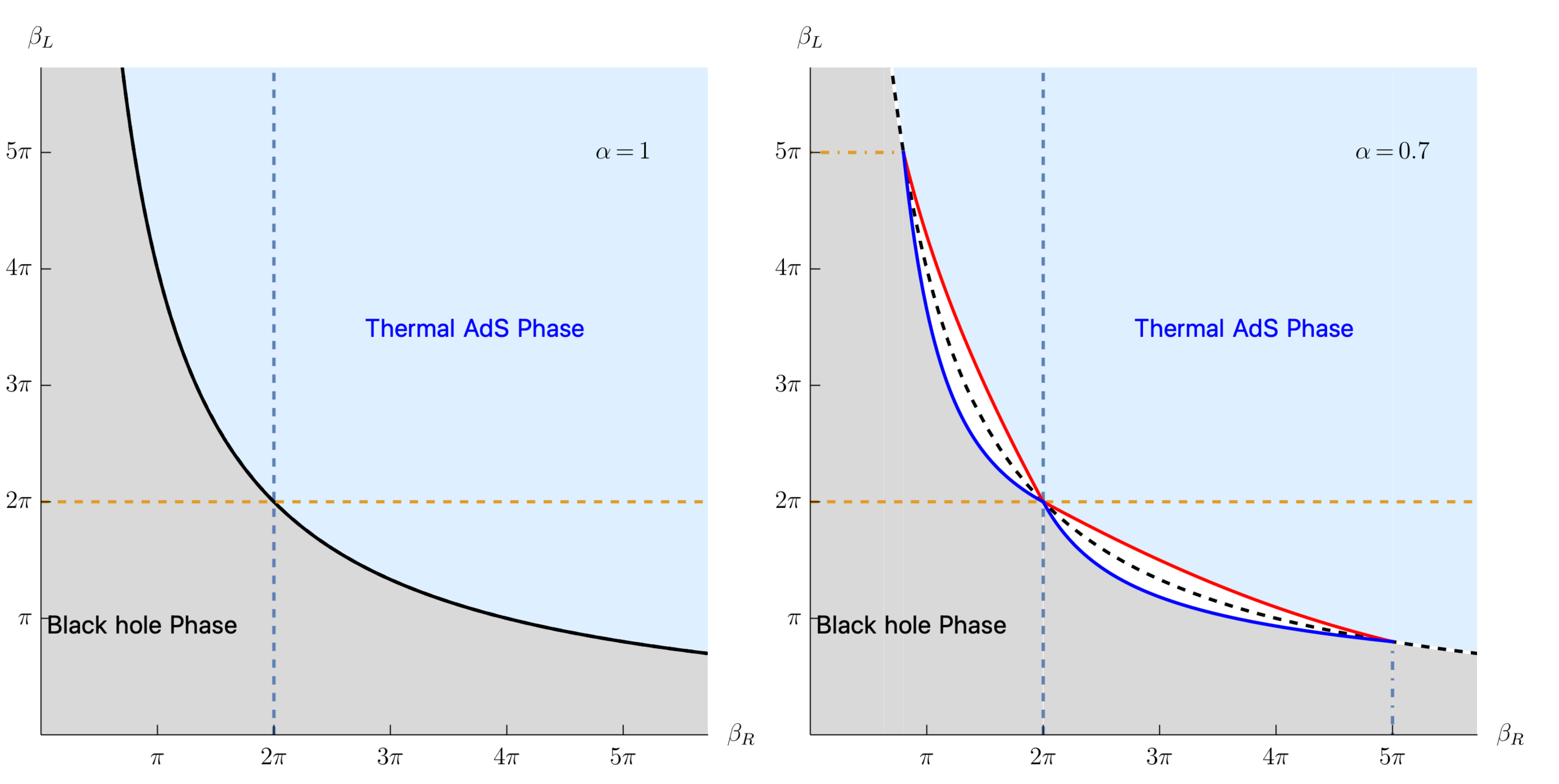}
	\caption{The large-$c$ phase diagrams corresponding to $\alpha=1$ and $\alpha=0.7$.  The grey shaded region is dominated by black hole while the blue shaded region is dominated by thermal AdS.  For $\alpha=1$,  the union of the two regions covers the whole plane except for the hyperbola $\beta_L\beta_R=4\pi^2$,  depicted as the {\bf black} solid curve in the figure on the left.
	For $\alpha<1$ (the figure on the right),  white region,  where we do not have any universality,  starts to emerge.  The boundary of the white region is analytically given by the union of curves: $\frac{\beta_L}{2\pi}=f^{(2)}_\alpha\!(\frac{\beta_R}{2\pi})$, $\frac{\beta_R}{2\pi}=f^{(2)}_\alpha\!(\frac{\beta_L}{2\pi})$ (the {\color{red}\bf red} curves) and $\frac{2\pi}{\beta_L}=f^{(2)}_\alpha\!(\frac{2\pi}{\beta_R})$,  $\frac{2\pi}{\beta_R}= f^{(2)}_\alpha\!(\frac{2\pi}{\beta_L})$ (the {\color{blue}\bf blue} curves), where $f^{(2)}_\alpha$ is defined in \eqref{twosolutions}.   For $\alpha>1/2$,  we show that 
	 the white region must cap off at the intersection point of $\beta_L\beta_R=4\pi^2$ and $\text{Max}(\beta_L,\beta_R)=\frac{2\pi}{2\alpha-1}$, which is $5\pi$ for $\alpha=0.7$ and diverges as $\alpha\to 1/2$ from above. For $\alpha=0.7$,  we have shown this feature with dotdashed line at $\text{Max}(\beta_L,\beta_R)=5\pi$.  The {\bf black} dotted curve in the figure on the right is $\beta_L\beta_R=4\pi^2$. }
	\label{fig:1}
\end{figure}

\paragraph{\textbf{Holographic implication:\vspace{0.2cm}\\}}

Conformal field theories with large central charge and sparse low lying spectra are believed to be dual to weakly coupled theories of AdS gravity,  including Einstein gravity.  The duality posits that physical observables in such CFTs should exhbit universality.  In the seminal paper \cite{Hartman:2014oaa},  Hartman,  Keller and Stoica (henceforth called HKS) showed that for a modular invariant 2D CFT with large central charge and sparse energy spectrum below $E=\Delta-c/12=\epsilon$ with a small positive $\epsilon$ i.e.\ assuming $\hat{Z}_L(\beta)=O(1)$ for $\beta>2\pi$,  a universal Cardy formula for the density of states with energy $E\sim c$ holds.   It is an extended regime of validity of the usual Cardy formula \cite{Cardy:1986ie} and expected from the semiclassical physics of black holes in AdS$_3$.   In particular,  assuming the energy sparseness condition,  the free energy of the CFT is shown to be given by the that of the black hole for $\beta<2\pi$ and of the thermal AdS for $\beta>2\pi$.  

Upon turning on temperatures conjugate to both the left movers and right movers,  HKS explored  the phase diagram (henceforth called mixed-temperature phase diagram) of the free energy on the $(\beta_L,\beta_R)$ plane.  Assuming the twist spectrum is sparse below the twist $\frac{c}{12}$ i.e.\ assuming $\tilde{Z}_L=O(1)$ for $\beta_L\beta_R>4\pi^2$,  HKS conjectured: the free energy is universal in the large $c$ limit for all $\beta_L\beta_R \neq 4\pi^2$.

Using the bound \eqref{modifiedTh} with $\alpha=1$, we show that the universal regime for the large-$c$ free energy is indeed the entire $(\beta_{L},\beta_{R})$-plane except for the self-dual line $\beta_{L}\beta_{R}=4\pi^2$. This proves the HKS conjecture. The details of the argument is presented in section \ref{subsec:HKS}.

We also study the large-$c$ behavior of the CFT free energy under a weaker assumption of sparseness in twist i.e.\ assuming  that the spectrum is sparse below the twist $\frac{\alpha c}{12}$ (with $\alpha>0$ fixed) for various values of $\alpha \in (0,1]$.\footnote{The problem is much simpler for the case of $\alpha>1$. There, the conclusion is the same as $\alpha=1$.}  The free energy has universal large-$c$ behavior for $(\beta_{L},\beta_{R})$ in the domain of validity of \eqref{modifiedTh} and its image of $S$-modular transformation. As examples,  we have plotted the phase diagram corresponding to $\alpha=1,0.7, 0.5,0.3$,  depicted in the fig.\ \ref{fig:1} and \ref{fig:2}.  See  section \ref{subsec:HKSM} for more details.

Using holography,  our result further implies a precise lower bound on near extremal BTZ black hole’s temperature at which we can trust the black hole thermodynamics.  See the conclusion section.

\begin{figure}[!ht]
	\centering
	\includegraphics[scale=0.17]{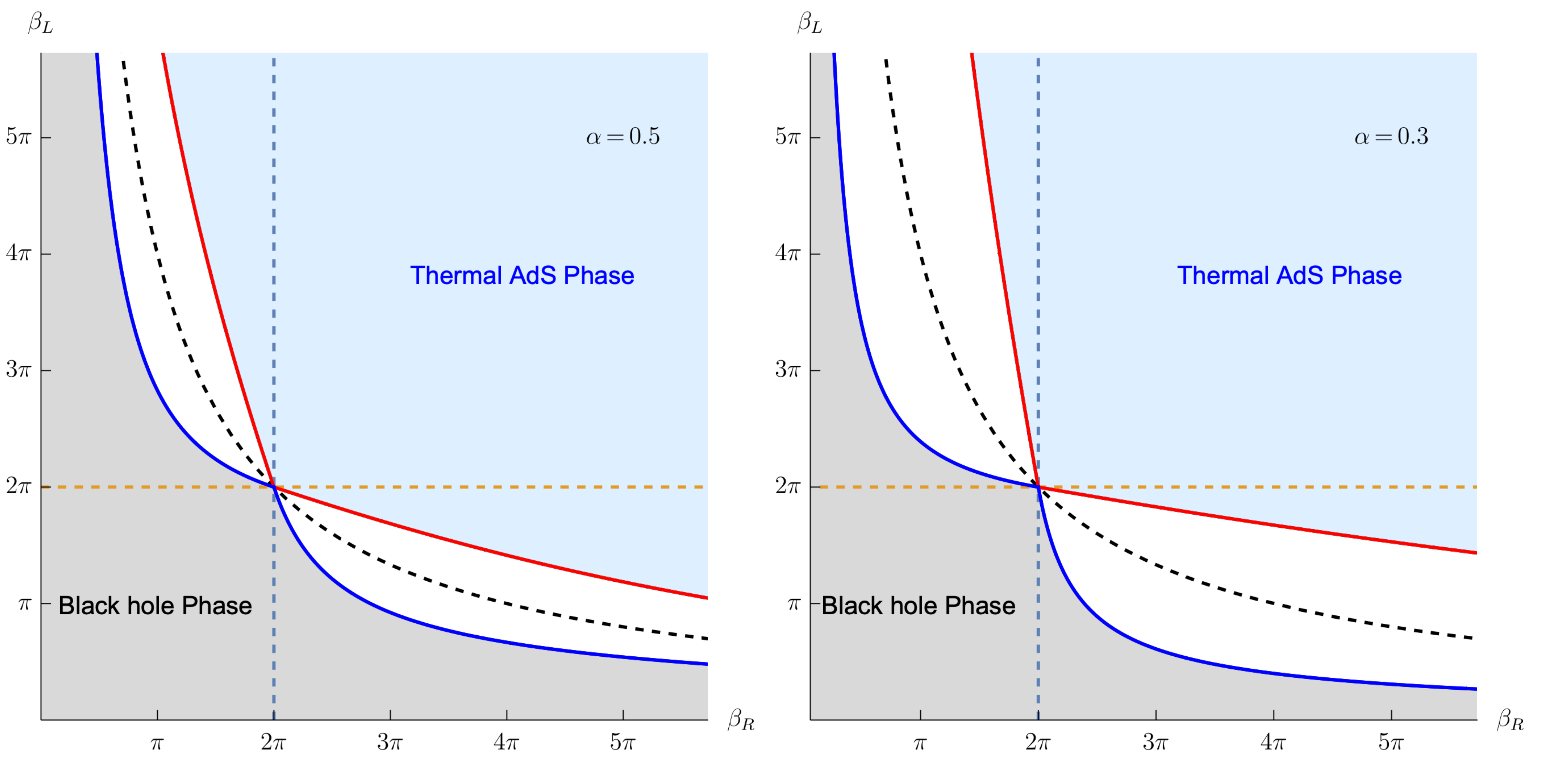}
	\caption{The large-$c$ phase diagrams corresponding to $\alpha=0.5$ and $\alpha=0.3$.  The grey shaded region is dominated by black hole while the blue shaded region is dominated by thermal AdS.  The free energy is not universal in white region.  The boundary of these white regions are analytically given by the union of curves:  $\frac{\beta_L}{2\pi}=f^{(2)}_\alpha\!(\frac{\beta_R}{2\pi})$, $\frac{\beta_R}{2\pi}=f^{(2)}_\alpha\!(\frac{\beta_L}{2\pi})$ and   $\frac{2\pi}{\beta_L}=f^{(2)}_\alpha\!(\frac{2\pi}{\beta_R})$,  $\frac{2\pi}{\beta_R}= f^{(2)}_\alpha\!(\frac{2\pi}{\beta_L})$,  where $f^{(2)}_\alpha$ is defined in \eqref{twosolutions}. For $\alpha\leq 1/2$,  the white region extends to infinity.  
	The black dotted curve is $\beta_L\beta_R=4\pi^2$}
	\label{fig:2}
\end{figure}

Given that there are infinitely many heavy states near the line \(h = \frac{c - 1}{24}\) and the line \(\bar{h} = \frac{c - 1}{24}\) in a 2D CFT with \(c > 1\) and a non-zero twist gap \cite{Collier:2016cls,Afkhami-Jeddi:2017idc,Benjamin:2019stq,Pal:2022vqc}, one might wonder whether our result requires imposing a sparseness condition on an infinite amount of data. However, the strengthened version (namely, theorem \ref{theorem:logZVir}) allows us to relax the twist sparseness condition: it only requires that the Virasoro primaries with twist below \( \frac{\alpha(c-1)}{12}\) (as opposed to all states with twist below \( \frac{\alpha c}{12}\)) are sparse. Yet, this relaxation still enables us to prove the HKS conjecture and arrive at the phase diagrams in figures \ref{fig:1} and \ref{fig:2}.

The organization of the paper is as follows.  In section \ref{sec:MR},  we present the main results.  In particular,  the main theorem \ref{theorem:logZ} is presented in section \ref{subsec:MT}.  The theorem follows from a lemma \ref{lemma:functionbound},  stated in section \ref{section:lemma}.  In section \ref{subsec:HKS},  we present the proof of HKS conjecture as a consequence of the main theorem,  followed by an extension in section \ref{subsec:HKSM}.  In section \ref{sec:prooflemma},  we prove the lemma \ref{lemma:functionbound}.  In section \ref{section:vir}, we refined our results using Virasoro symmetry and proved a strengthened theorem \ref{theorem:logZVir} along with its holographic implications.

We end this section presenting a logical flowchart to guide the readers:
\begin{equation*}
\underset{ \text{Proved in }\S\ref{sec:prooflemma}}{\text{Lemma \ref{lemma:functionbound}}}\quad \overset{\S\ref{section:lemma}}{\Longrightarrow}\quad \underset{\S\ref{subsec:MT}}{\text{Theorem \ref{theorem:logZ}}} \quad\Longrightarrow\quad \begin{cases}
\underset{\S\ref{subsec:HKS}}{\text{The proof of HKS conjecture}}\\
\underset{\S\ref{subsec:HKSM}}{\text{Beyond HKS, varying sparseness in twist}}
\end{cases}
\end{equation*}

\section{Main results}\label{sec:MR}
\subsection{CFT torus partition function, main theorem}\label{subsec:MT}
We begin with $c\geqslant0$ unitary, modular invariant 2D CFTs. The CFT torus partition function is defined by
\begin{equation}\label{def:Z}
	Z (\beta_L,\beta_R) \equiv \text{Tr}_{\mathcal{H}_{\text{CFT}}} \left( e^{- \beta_L
		\left( L_0 - \frac{c}{24} \right)} e^{- \beta_R \left( \bar{L}_0 -
		\frac{c}{24} \right)} \right) . 
\end{equation}
where $\beta_L$ and $\beta_R$ are the inverse temperatures of the left and right movers, $L_0$ and $\bar{L}_0$ are the standard Virasoro algebra generators \cite{DiFrancesco:1997nk} and $\mathcal{H}_{\rm CFT}$ is the CFT Hilbert space which is assumed to be the direct sum of state spaces characterized by conformal weights $h$ and $\bar{h}$
\begin{equation}\label{def:Hilbertspace}
	\begin{split}
		\mathcal{H}_{\rm CFT}&=\bigoplus_{h,\bar{h}} V_{h,\bar{h}}, \\
		(L_0,\bar{L}_0)\big{|}_{V_{h,\bar{h}} }&=(h,\bar{h}),\quad {\rm dim}\left(V_{h,\bar{h}}\right)=n_{h,\bar{h}}\in\mathbb{N}.
	\end{split}
\end{equation}
Here $n_{h,\bar{h}}$ counts the degeneracy of the states with conformal weights $h$ and $\bar{h}$. In this work, the full Virasoro symmetry is not essential. The useful property will be that $L_0$ and $\bar{L}_0$ are diagonalizable with non-negative eigenvalues, i.e. $h,\bar{h}\geqslant0$ in \eqref{def:Hilbertspace}.

Using eqs.\,(\ref{def:Z}) and (\ref{def:Hilbertspace}), the torus partition function can be written as a sum of exponential terms
\begin{equation}\label{Z:exp}
	\begin{split}
		Z(\beta_L,\beta_R)=\sum\limits_{h,\bar{h}}n_{h,\bar{h}}\ e^{- 
			\left( h - \frac{c}{24} \right)\beta_L} e^{-  \left( \bar{h} -
			\frac{c}{24} \right)\beta_R}.
	\end{split}
\end{equation}
We assume that (a) the partition function $Z$ is finite when $\beta_L,\beta_R\in(0,\infty)$; (b) $Z$ is invariant under $S$ modular transformation (which we will call ``modular invariance" below)\footnote{In this work we will not use the full modular invariance in the standard sense. So we allow non-integer values of $h-\bar{h}$ in \eqref{Z:exp}.}
\begin{equation}\label{modularinv}
	\begin{split}
		Z(\beta_L,\beta_R)&=Z\left(\frac{4\pi^2}{\beta_L},\frac{4\pi^2}{\beta_R}\right). \\
	\end{split}
\end{equation}
Because of the modular invariance condition \eqref{modularinv}, to know the partition function, it suffices to know it in the regime $\beta_{L}\beta_{R}\geqslant4\pi^2$. For technical reason we will only focus on the open domain $\beta_{L}\beta_R>4\pi^2$, where we will derive some universal inequalities of the partition function.

Now we would like to present the main result of this work.  

\begin{figure}[!ht]
	\centering
	\includegraphics[scale=0.23]{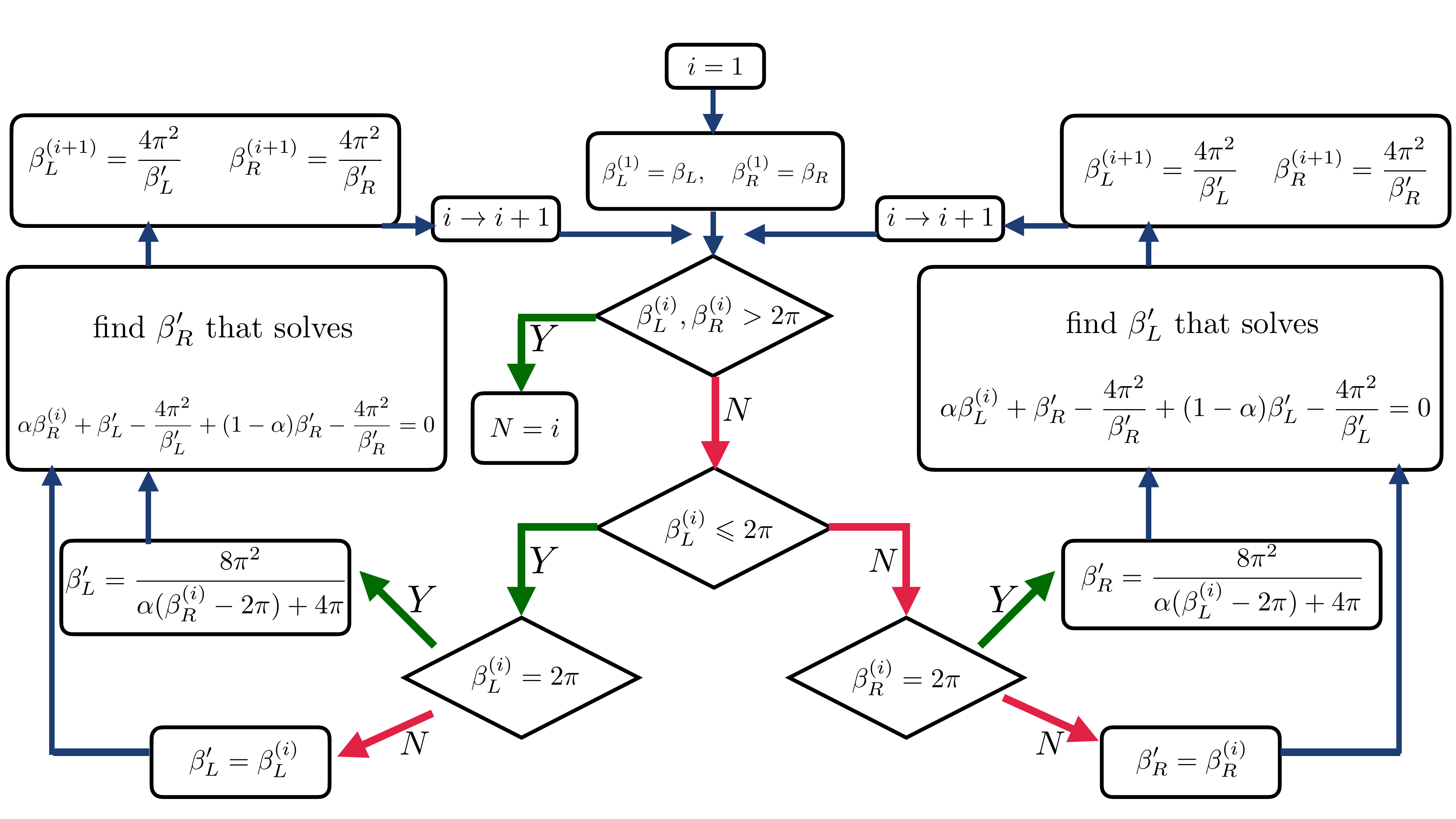}
	\caption{The algorithm for generating $(\beta_{L}^{(i)},\beta_{R}^{(i)})$ ($i=1,2,\ldots,N$). For any $(\beta_{L},\beta_{R})\in\mathcal{D}_\alpha$, this iteration takes finite steps.}
	\label{fig:algorithmbeta}
\end{figure}

\begin{theorem}\label{theorem:logZ}
	Let $\alpha\in(0,1]$ be fixed.  Given any $c\geqslant0$ unitary, modular invariant 2D CFT,  let us define
\begin{equation}\label{def:ZtildeLH}
	\begin{split}
		\tilde{Z}_L(\alpha;\beta_{L},\beta_{R})&:=\sum\limits_{{\rm min}(h,\bar{h})\leqslant\frac{\alpha c}{24}}n_{h,\bar{h}}\ e^{- h\beta_L-\bar{h}\beta_{R}}, \\
		\tilde{Z}_H(\alpha;\beta_{L},\beta_{R})&:=\sum\limits_{h,\bar{h}>\frac{\alpha c}{24}}n_{h,\bar{h}}\ e^{- h\beta_L-\bar{h}\beta_{R}}, \\
	\end{split}
\end{equation}
where the subsripts ``L" and ``H" represent for low twist and high twist. Then, the error term $\mathcal{E}(\beta_{L},\beta_{R})$ of the free energy $\log Z(\beta_{L},\beta_{R})$ (given in \eqref{FE:lowT}) is bounded from above by
	\begin{equation}\label{eq:errorbound}
		\begin{split}
			\mathcal{E}(\beta_{L},\beta_{R})&\leqslant\log\left[\sum_{i=1}^{N}\tilde{Z}_L\left(\alpha;\beta_{L}^{(i)},\beta_{R}^{(i)}\right)+\tilde{Z}_H\left(\alpha;\beta_{L}^{(N)},\beta_{R}^{(N)}\right)\right], \quad(\beta_{L},\beta_R)\in\mathcal{D}_\alpha,\\
		\end{split}
	\end{equation}
	where the domain $\mathcal{D}_\alpha$ is defined as
	\begin{equation}\label{def:Dalpha}
	\begin{split}
		\mathcal{D}_\alpha&:=\left\{(\beta_{L},\beta_{R})\ \Big{|}\ \beta_{R}>2\pi,\ \beta_{L}>\phi_\alpha(\beta_{R});\ {\rm or}\ \beta_{L}>2\pi,\ \beta_{R}>\phi_\alpha(\beta_{L})\right\}, \\
		\phi_\alpha(\beta)&:=\max\left\{\frac{4\pi^2}{\beta},\ \frac{1}{2} \left(\alpha(4\pi - \beta) + \sqrt{\alpha^2(4\pi - \beta)^2 + 16\pi^2 (1 - \alpha)}\right)\right\}\,.
	\end{split}
\end{equation}
	The numbers $N$, $\beta_L^{(i)}$'s and $\beta_R^{(i)}$'s are defined by the algorithm in figure \ref{fig:algorithmbeta}. $N$ is finite for any $(\beta_{L},\beta_{R})\in \mathcal{D}_\alpha$.  Note that $\beta_{L}^{(i)}\,\beta_{R}^{(i)}>4\pi^2\ (i=1,2,\ldots,N-1)$ and $\beta_{L}^{(N)}, \beta_{R}^{(N)}>2\pi\,. $\\
\end{theorem}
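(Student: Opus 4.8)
The plan is to package the free energy into the reduced partition function $F(\beta_L,\beta_R):=e^{-\frac{c}{24}(\beta_L+\beta_R)}Z(\beta_L,\beta_R)=\sum_{h,\bar h}n_{h,\bar h}\,e^{-h\beta_L-\bar h\beta_R}$, so that by \eqref{FE:lowT} the error term is literally $\mathcal{E}(\beta_L,\beta_R)=\log F(\beta_L,\beta_R)$. The structural identity I would lean on is that at every point $F$ splits exactly into its low- and high-twist pieces, $F=\tilde{Z}_L(\alpha;\cdot)+\tilde{Z}_H(\alpha;\cdot)$, directly from \eqref{def:ZtildeLH} (the sum over all $(h,\bar h)$ is partitioned by whether $\min(h,\bar h)\leqslant\frac{\alpha c}{24}$). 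Exponentiating the target \eqref{eq:errorbound}, the theorem reduces to the single numerical inequality $F(\beta_L^{(1)},\beta_R^{(1)})\leqslant\sum_{i=1}^{N}\tilde{Z}_L(\alpha;\beta_L^{(i)},\beta_R^{(i)})+\tilde{Z}_H(\alpha;\beta_L^{(N)},\beta_R^{(N)})$ for the sequence starting at $(\beta_L^{(1)},\beta_R^{(1)})=(\beta_L,\beta_R)$, after which one simply takes logarithms.

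The engine is the one-step inequality quoted in the introduction, namely the content of lemma \ref{lemma:functionbound}, applied to each consecutive pair emitted by the algorithm of figure \ref{fig:algorithmbeta}:
\[
F(\beta_L^{(i)},\beta_R^{(i)})-\tilde{Z}_L(\alpha;\beta_L^{(i)},\beta_R^{(i)})=\tilde{Z}_H(\alpha;\beta_L^{(i)},\beta_R^{(i)})\leqslant F(\beta_L^{(i+1)},\beta_R^{(i+1)}),\quad 1\leqslant i\leqslant N-1.
\]
Given this, the argument is a telescoping: chaining from $i=1$ to $N-1$ gives $F(\beta_L^{(1)},\beta_R^{(1)})\leqslant\sum_{i=1}^{N-1}\tilde{Z}_L(\alpha;\beta_L^{(i)},\beta_R^{(i)})+F(\beta_L^{(N)},\beta_R^{(N)})$, and at the terminal point I apply the splitting once more to write $F(\beta_L^{(N)},\beta_R^{(N)})=\tilde{Z}_L(\alpha;\beta_L^{(N)},\beta_R^{(N)})+\tilde{Z}_H(\alpha;\beta_L^{(N)},\beta_R^{(N)})$, folding the final $\tilde{Z}_L$ into the sum and leaving exactly the claimed right-hand side. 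Nonnegativity of every $n_{h,\bar h}$, hence of $F$, $\tilde{Z}_L$ and $\tilde{Z}_H$, is what lets these inequalities chain without any sign bookkeeping.

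For the chain to be legitimate I must control the algorithm itself, and this is where the real work lies. I would first verify the book-keeping claims — $\beta_L^{(i)}\beta_R^{(i)}>4\pi^2$ for $i<N$ and $\beta_L^{(N)},\beta_R^{(N)}>2\pi$ — and that each intermediate point stays in the regime where lemma \ref{lemma:functionbound} applies. The useful invariant, transparent in the $\alpha=1$ update of figure \ref{fig:0}, is that the generic step conserves the sum $\beta_L^{(i+1)}+\beta_R^{(i+1)}=\beta_L^{(i)}+\beta_R^{(i)}$ and satisfies $\beta_L'\beta_R'>4\pi^2\Leftrightarrow\beta_L\beta_R>4\pi^2$, so the orbit stays on a fixed line $\beta_L+\beta_R=s$ strictly above the self-dual hyperbola. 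The main obstacle is proving termination, $N<\infty$, for exactly the starting data in $\mathcal{D}_\alpha$. Since (for $\alpha=1$) the sum is conserved, the dynamics collapses to a one-dimensional map on the admissible $\beta_L$-interval, and termination means the orbit reaches the sub-interval $(2\pi,s-2\pi)$ where both coordinates exceed $2\pi$; I would track the coordinate being $S$-dualized — which jumps above $2\pi$ — against its partner, show it makes strict progress at each step, and use the special updates at the fixed point $\beta=2\pi$ to prevent stalling. For general $\alpha\in(0,1]$ both the update and the admissibility threshold acquire $\alpha$-dependence through the twist cutoff $\frac{\alpha c}{24}$ in $\tilde{Z}_H$, and I expect the quadratic defining $\phi_\alpha$ in \eqref{def:Dalpha} to emerge as the exact break-even locus where the $S$-dualized image first fails to remain above the hyperbola (equivalently, where the one-step inequality degenerates), so that $\mathcal{D}_\alpha$ is simultaneously the domain of validity of the lemma and the basin of finite termination. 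Establishing that $\phi_\alpha$ is precisely this threshold, and that orbits starting outside $\mathcal{D}_\alpha$ fail to terminate, is the delicate analytic step I expect to dominate the effort.
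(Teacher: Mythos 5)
Your skeleton is exactly the paper's route: the paper proves Theorem \ref{theorem:logZ} by identifying $g(x,y)=e^{-\frac{c}{24}(\beta_L+\beta_R)}Z$, $g_0=\tilde Z_L$, $g_1=\tilde Z_H$ and invoking Lemma \ref{lemma:functionbound}, whose proof is precisely your telescoping of the one-step inequality along the algorithm's orbit, together with an induction that simultaneously establishes validity and termination. The exact splitting $F=\tilde Z_L+\tilde Z_H$, the chaining for $i=1,\dots,N-1$, and the final splitting at the terminal point are all as in the paper, and your conserved-sum observation for $\alpha=1$ is the shadow of the paper's conservation law \eqref{conservation}.

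However, two substantive pieces are missing or misidentified. First, the one-step inequality is not available for free as ``quoted in the introduction'': its proof requires sliding $\tilde Z_H(\beta_L^{(i)},\beta_R^{(i)})$ down to an auxiliary temperature via the twist-gap property \eqref{twistprop}, applying modular invariance, and choosing the auxiliary variable $y'$ so that the resulting exponential prefactor is exactly $1$ (the condition $\kappa(x,y,y')=0$ in \eqref{def:kappa}); crucially, one must also show that this $y'$ can be chosen so that the dual point $(1/x,1/y')$ lands in a domain where the induction hypothesis already applies. This couples the one-step inequality to the domain recursion and is where the functions $f_k$ and the condition $F_k(x)<\alpha y$ enter. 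Second, your expectation that $\phi_\alpha$ is ``the exact break-even locus where the one-step inequality degenerates'' is not correct: the single-step break-even curve is $f_1$, which bounds only $\Omega_\alpha^{(1)}$; the boundary of $\mathcal{D}_\alpha$ is the limit $f_\infty$ of the strictly decreasing sequence $f_{k+1}<f_k$ of iterated break-even loci (Proposition \ref{prop:fsequence}), and computing that limit — via the Lagrangian reformulation \eqref{eq:recur3} with conserved quantity $\alpha z_k-(1-\alpha)z_{k-1}+1/z_{k-1}$, split into the three cases $\alpha=1$, $\tfrac12<\alpha<1$, $\alpha\leqslant\tfrac12$ — is the bulk of the paper's work and is what you have deferred. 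Your one-dimensional ``strict progress'' picture for $\alpha=1$ is plausible but does not by itself deliver the general-$\alpha$ domain $\mathcal{D}_\alpha$ or the maximality claim.
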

\begin{remark}\label{remark:theorem}
	\item (a) The second term in inequality \eqref{eq:errorbound} has a bound in terms of low-energy states \cite{Hartman:2014oaa}:
	\begin{equation}\label{eq:ZHbound}
		\tilde{Z}_H\left(\alpha;\beta_{L},\beta_{R}\right) \leqslant \frac{\widehat{Z}_{L}(\beta)}{1 - e^{-\left(\beta - \frac{4\pi^2}{\beta}\right)\epsilon}}, \quad \beta \equiv \min\left\{\beta_{L}, \beta_{R}\right\} > 2\pi,
	\end{equation}
	where $\widehat{Z}_{L}(\beta)$ is defined in \eqref{def:ZwidehatL}. Eqs.\,\eqref{eq:errorbound} and \eqref{eq:ZHbound} imply the bound \eqref{modifiedTh} stated in the introduction section.

	\item (b) When $\beta_{L},\beta_{R}>2\pi$, according to the algorithm in figure \ref{fig:algorithmbeta}, we simply have $N=1$, $\beta_{L}^{(1)}=\beta_{L}$ and $\beta_{R}^{(1)}=\beta_{R}$ in the right-hand side of \eqref{eq:errorbound}, then the inequality becomes
	\begin{equation}
		\begin{split}
			\log Z(\beta_{L},\beta_{R})-\frac{c}{24}(\beta_{L}+\beta_{R})\leqslant\log Z(\beta_{L},\beta_{R})-\frac{c}{24}(\beta_{L}+\beta_{R}),\quad\beta_{L},\beta_{R}>2\pi,
		\end{split}
	\end{equation}
	which trivially holds. Therefore, the nontrivial part of theorem \ref{theorem:logZ} is the inequality \eqref{eq:errorbound} in the regime where one of $\beta_{L}$ and $\beta_{R}$ is less than $2\pi$.
	
	\item (c) When the theory has a normalizable vacuum, $\mathcal{E}$ has a lower bound given by $\mathcal{E}\geqslant0$, then \eqref{eq:errorbound} becomes the bound on $\abs{\mathcal{E}}$. However, it may happen that the theory does not have a normalizable vacuum, then the error term $\mathcal{E}$ may be negative and large. An example of having a large negative $\mathcal{E}$ is the Liouville CFT with very large central charge.\footnote{The partition function of the Liouville CFT does not depend on the central charge $c$. So if we fix $(\beta_{L},\beta_{R})$ and take very large $c$, the error $\mathcal{E}$ will become negative.}
	
	\item (d) One can verify that $\mathcal{D}_{\alpha_1}\subsetneq\mathcal{D}_{\alpha_2}$ for any $0<\alpha_1<\alpha_2\leqslant1$. When $\alpha=1$, $\mathcal{D}_\alpha$ becomes the domain $\left\{\beta_{L}\beta_{R}>4\pi^2\right\}$.
	
	\item (e) In \eqref{eq:errorbound}, we do not have explicit form for $N$, $\beta_{L}^{(i)}$'s and $\beta_{R}^{(i)}$'s in terms of $\beta_{L}$, $\beta_{R}$ and $\alpha$. The technical reason is that our construction of them is recursive as shown in figure \ref{fig:algorithmbeta}. 
	
	\item (f) It is a non-trivial fact that $\mathcal{D_\alpha}$ is the maximal domain for which the iteration terminates after a finite number of steps, i.e., $N$ is finite.
	
	\item (g) $\alpha$ is a tunable parameter in theorem \ref{theorem:logZ}. In principle, one can choose whatever $\alpha\in(0,1]$ and the theorem always holds. In practice, one may take proper $\alpha$ to minimize the error term $\mathcal{E}$.

	\item (h) In the large central charge limit, the theorem \ref{theorem:logZ} has holographic implications, which will be discussed in sections \ref{subsec:HKS} and \ref{subsec:HKSM}.
	
	\item (i) In this paper, the arguments and results remain unchanged under a slight modification of the definitions of $\tilde{Z}_L$ and $\tilde{Z}_H$, as specified below:
	\begin{equation}
		\begin{split}
			\tilde{Z}_L&:=\sum\limits_{{\rm min}(h,\bar{h})\leqslant\frac{\alpha c}{24}}(\ldots) \quad\longrightarrow\quad \tilde{Z}_L:=\sum\limits_{{\rm min}(h,\bar{h})<\frac{\alpha c}{24}}(\ldots), \\
			\tilde{Z}_H&:=\sum\limits_{h,\bar{h}>\frac{\alpha c}{24}}(\ldots) \quad\longrightarrow\quad \tilde{Z}_H:=\sum\limits_{h,\bar{h}\geqslant\frac{\alpha c}{24}}(\ldots).
		\end{split}
	\end{equation}
\end{remark}

The proof of theorem \ref{theorem:logZ} relies on a technical lemma which we will present in section \ref{section:lemma}. Once the lemma is proven, theorem \ref{theorem:logZ} will follow as an immediate consequence. See the proof below lemma \ref{lemma:functionbound} for details.

\subsection{An inequality for modular invariant functions}\label{section:lemma}
In this subsection we present an inequality that holds for a class of modular invariant functions. The unitary CFT partition functions, after factorizing out the $e^{\frac{c}{24}(\beta_{L}+\beta_{R})}$ factor, belong to this class of functions under a proper change of variables, which will lead to the proof of theorem \ref{theorem:logZ}. However, there may be functions that belong to this function class, but not necessarily satisfy all the properties that a unitary CFT partition function should satisfy. So our argument below may have potential applications to some non-unitary CFTs.

For CFT partition functions, the $S$-modular transformation is $\beta_{L,R}\rightarrow4\pi^2/\beta_{L,R}$ and the self-dual line is $\beta_{L}\beta_{R}=4\pi^2$. To simplify the notation, we rescale the variables by $(2\pi)^{-1}$: $x=\beta_{L}/2\pi$, $y=\beta_{R}/2\pi$. Then the $S$-modular transformation becomes $x,y\rightarrow 1/x,1/y$ and the self-dual line becomes $xy=1$.

Let $\alpha\in(0,1]$ be fixed. Under the above notation, the domain $\mathcal{D}_\alpha$ defined in \eqref{def:Dalpha} is rescaled by
\begin{equation}\label{def:Omegaalpha}
	\begin{split}
		\Omega_\alpha &= \left\{(x,y) \,\Big{|}\, x > 1,\ y > f_\alpha(x);\ {\rm or}\ y > 1,\ x > f_\alpha(y) \right\}, \\
		f_\alpha(x)&=\max\left\{\frac{1}{x},\ \frac{1}{2} \left(\alpha(2 - x) + \sqrt{\alpha^2(2 - x)^2 + 4 (1 - \alpha)}\right)\right\}.
	\end{split}
\end{equation}
Recall that for each $(\beta_{L},\beta_{R})\in\mathcal{D}_\alpha$, we generated a collection of points $(\beta_{L}^{(i)},\beta_{R}^{(i)})$ using the algorithm shown in figure \ref{fig:algorithmbeta}. Similarly, after rescaling by $(2\pi)^{-1}$, for each $(x,y)\in\Omega_\alpha$, we have a collection of points $(x_i,y_i)$ generated by the algorithm shown in figure \ref{fig:algorithmxy}.
\begin{figure}[!ht]
	\centering
	\includegraphics[scale=0.23]{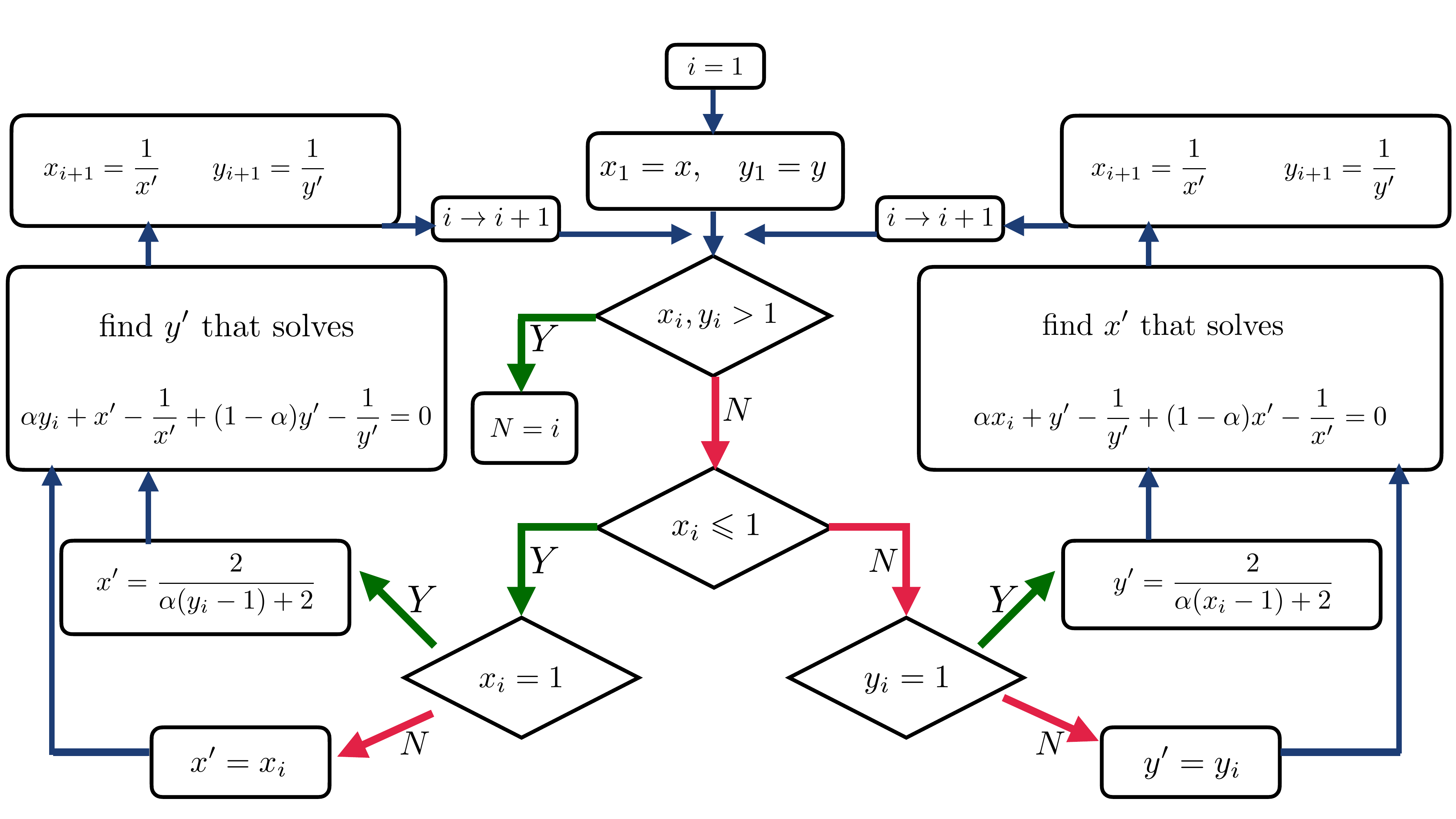}
	\caption{The algorithm for generating $(x_i,y_i)$ ($i=1,2,\ldots,N$). For any $(x,y)\in\Omega_\alpha$, this iteration takes finite steps.}
	\label{fig:algorithmxy}
\end{figure}

Now we introduce a class of positive functions $g(x,y)$ defined on $(0,\infty)^2$ satisfying the following properties
\begin{itemize}
	\item (Modular invariance) There exists $\gamma\geqslant0$ such that
	\begin{equation}\label{g:modinv}
		\begin{split}
			e^{\gamma(x+y)}\,g\left(x,y\right)=e^{\gamma\left(\frac{1}{x}+\frac{1}{y}\right)}\,g\left(\frac{1}{x},\frac{1}{y}\right);
		\end{split}
	\end{equation}
	\item (Twist-gap property) $g$ can be split into two parts
	\begin{equation}\label{positivity}
		\begin{split}
			g(x,y)=g_0(x,y)+g_1(x,y),
		\end{split}
	\end{equation}
	where both $g_0$ and $g_1$ are non-negative. In addition, $g_1$ satisfies the following inequality
	\begin{equation}\label{twistprop}
		\begin{split}
			g_1(x,y)\leqslant e^{-\alpha\gamma\left(x-x'\right)}g_1(x',y),\quad\forall0<x'\leqslant x, \\
			g_1(x,y)\leqslant e^{-\alpha\gamma\left(y-y'\right)}g_1(x,y'),\quad\forall 0<y'\leqslant y; \\
		\end{split}
	\end{equation}
\end{itemize}
Then the following inequality holds for $g(x,y)$.
\begin{lemma}\label{lemma:functionbound}	
	For $g(x,y)$ satisfying the above properties, the following inequality holds:
	\begin{equation}\label{inequality:general}
		\begin{split}
			g(x,y)&\leqslant\sum_{i=1}^{N}g_0\left(x_i,y_i\right)+g_1\left(x_{N},y_{N}\right),\quad\forall(x,y)\in\Omega_\alpha,
		\end{split}
	\end{equation}
	where the domain $\Omega_\alpha$ is defined by \eqref{def:Omegaalpha}. In the inequality \eqref{inequality:general}, $N$, $x_i$'s and $y_i$'s are defined by the algorithm in figure \ref{fig:algorithmxy}. $N$ is finite for any $(x,y)\in\Omega_\alpha$.
\end{lemma}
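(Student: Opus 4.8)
The plan is to prove \eqref{inequality:general} by finite induction on the number of steps $N$ produced by the algorithm of figure \ref{fig:algorithmxy}, the engine being a single-step estimate that peels off one copy of $g_0$ and passes the remaining weight to the next point of the orbit. Concretely, I would isolate the one-step inequality $g(x_i,y_i)\leqslant g_0(x_i,y_i)+g(x_{i+1},y_{i+1})$, equivalently $g_1(x_i,y_i)\leqslant g(x_{i+1},y_{i+1})$, valid at every step. Granting this, the lemma follows by telescoping from $i=1$ to $N-1$, giving $g(x,y)\leqslant\sum_{i=1}^{N-1}g_0(x_i,y_i)+g(x_N,y_N)$, and then splitting $g(x_N,y_N)=g_0(x_N,y_N)+g_1(x_N,y_N)$ at the terminal point (where both coordinates exceed $1$) to recover exactly \eqref{inequality:general}.

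The heart of the argument is establishing the one-step inequality. Here I would play the modular-invariance relation \eqref{g:modinv} — which trades the point $(x_i,y_i)$ (one coordinate below $1$) for its $S$-image, together with an explicit exponential weight $e^{\gamma(\cdots)}$ — against the twist-gap decay \eqref{twistprop}, which controls how fast the heavy piece $g_1$ falls off as a coordinate is increased, at the rate $\alpha\gamma$. The non-negativity of $g_0,g_1$ and the fact that every summand is non-increasing in each argument let me bound the cross terms. On the invariant line $x+y=\mathrm{const}$ (preserved by the generic update), the map amounts to reflecting the offending coordinate through $1$, and this is precisely the update for which the modular weight can be paid for by the $\alpha\gamma$-decay of $g_1$. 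I anticipate that matching these two exponents is where the quadratic inside $f_\alpha$ is forced: the break-even between the modular weight and the twist decay should reproduce the curve $x^2-\alpha(2-x)\,x-(1-\alpha)=0$ defining the boundary of $\Omega_\alpha$.

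Separately, I would supply the combinatorial/dynamical input: that the recursion halts with finite $N$ precisely on $\Omega_\alpha$. I would analyze the algorithm as a one-dimensional map (reflection $x\mapsto 1/x$ of whichever coordinate drops below $1$, plus the special rule at the marginal value), show that the curves $f_\alpha$ are invariant under it so that orbits starting in $\Omega_\alpha$ remain in $\Omega_\alpha$ and advance monotonically toward the terminal region $\{x_N,y_N>1\}$, while orbits starting outside accumulate on the invariant boundary and never terminate. This would yield both finiteness of $N$ on $\Omega_\alpha$ and the maximality asserted in remark (f), and would guarantee that the validity condition for the one-step inequality holds at \emph{every} point of the orbit.

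I expect the main obstacle to be the one-step inequality together with the sharp identification of its domain of validity. The naive one-variable estimates — modular invariance followed by monotonicity or by single-variable twist decay — fail for $\alpha<1$, because the condition that makes the decay point in the useful direction is exactly the condition that makes the modular weight point the wrong way; the two tensions cancel only at $\alpha=1$. Overcoming this will require a genuinely two-dimensional estimate that uses the twist decay in both variables simultaneously and exploits the constraint $x>f_\alpha(y)$ to close the exponent. Tying this analytic break-even to the invariance of $\partial\Omega_\alpha$ under the recursion — so that the estimate remains valid along the whole finite orbit — is, I expect, the crux of the proof.
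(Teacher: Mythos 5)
Your one-step engine is exactly the paper's: the bound $g_1(x_i,y_i)\leqslant g(x_{i+1},y_{i+1})$ is obtained there by lowering one coordinate via the twist-gap decay \eqref{twistprop} (introducing an auxiliary $y'\leqslant y$, at cost $e^{-\alpha\gamma(y-y')}$), dropping $g_1\leqslant g$, applying modular invariance \eqref{g:modinv}, and tuning $y'$ so that the total exponent $\kappa(x,y,y')=\alpha y+x-\tfrac1x+(1-\alpha)y'-\tfrac1{y'}$ vanishes; telescoping and splitting $g(x_N,y_N)$ at the terminal point then gives \eqref{inequality:general}. So the analytic mechanism and the telescoping structure of your proposal match the paper.

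The genuine gap is in the part you treat as "combinatorial/dynamical input." First, the boundary of $\Omega_\alpha$ does \emph{not} come from a single break-even between the modular weight and the twist decay: one application of the exponent-cancellation yields only the curve $f_1(x)=\tfrac12\bigl(\sqrt{\alpha^2(x-1)^2+4}-\alpha(x-1)\bigr)$, i.e.\ the boundary of the first enlarged domain $\Omega_\alpha^{(1)}$. The curve $f_\alpha$ in \eqref{def:Omegaalpha} arises only as the pointwise limit of the recursively defined sequence $f_0>f_1>f_2>\cdots$, where each $f_{k+1}$ is determined by requiring that the tuned point $(1/x,1/y')$ land inside the previously established domain $\Omega_\alpha^{(k)}$. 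Second, knowing that $f_\alpha$ is invariant under the map and that orbits "advance monotonically toward the terminal region" does not yield finiteness of $N$: a monotone orbit could converge to the boundary of $\{x,y>1\}$ without entering it. What actually forces termination is the exhausting filtration $\Omega_\alpha=\bigcup_k\Omega_\alpha^{(k)}$ together with the fact that one step maps $\Omega_\alpha^{(k+1)}\setminus\Omega_\alpha^{(k)}$ into $\Omega_\alpha^{(k)}$, so an integer-valued index strictly decreases. Establishing that the union of the $\Omega_\alpha^{(k)}$ is precisely $\Omega_\alpha$ — i.e.\ computing $f_\infty$ and showing it equals $\max\{1/x,\,f_\alpha^{(2)}(x)\}$ rather than some other fixed point (note $1/x$ always solves the fixed-point equation) — is the bulk of the paper's work: it requires the well-definedness and monotonicity of the $f_k$, the passage to the Lagrangian variables $z_k$, the conservation law $\alpha z_k-(1-\alpha)z_{k-1}+1/z_{k-1}=\alpha(x+1)$, and a case analysis in $\alpha$. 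None of this is supplied, or even identified as necessary, in your plan.
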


The proof of lemma \ref{lemma:functionbound} is technical. We postpone it to section \ref{sec:prooflemma}. Here let us show how lemma \ref{lemma:functionbound} implies theorem \ref{theorem:logZ}.
\newline
\newline\textbf{Proof of theorem \ref{theorem:logZ}:}
\newline For any $c\geqslant0$ unitary CFT partition function $Z(\beta_{L},\beta_{R})$:
\begin{equation}
	\begin{split}
		Z(\beta_{L},\beta_{R})=\sum\limits_{h,\bar{h}}n_{h,\bar{h}}\,e^{-\left(h-\frac{c}{24}\right)\beta_{L}-\left(\bar{h}-\frac{c}{24}\right)\beta_{R}},
	\end{split}
\end{equation}
we have the following identification
\begin{equation}\label{identification}
	\begin{split}
		&\gamma=\frac{\pi c}{12},\quad x=\frac{\beta_{L}}{2\pi},\quad y=\frac{\beta_{R}}{2\pi}, \\
		&g(x,y)=e^{-\frac{c}{24}(\beta_{L}+\beta_{R})}Z(\beta_{L},\beta_{R}), \\
		&g_0(x,y)=\sum\limits_{\text{min}(h,\bar{h})\leqslant\frac{\alpha c}{24}}n_{h,\bar{h}}\,e^{-h\beta_{L}-\bar{h}\beta_{R}}\equiv\tilde{Z}_L(\alpha;\beta_{L},\beta_{R}),  \\
		&g_1(x,y)=\sum\limits_{h,\bar{h}>\frac{\alpha c}{24}}n_{h,\bar{h}}\,e^{-h\beta_{L}-\bar{h}\beta_{R}}\equiv\tilde{Z}_H(\alpha;\beta_{L},\beta_{R}).
	\end{split}
\end{equation}
One can explicitly check that all the conditions of lemma \ref{lemma:functionbound} are satisfied. Then plug \eqref{identification} into \eqref{inequality:general} and take logarithm on both sides the inequality, we get
\begin{equation}
	\begin{split}
		\mathcal{E}(\beta_{L},\beta_{R})\leqslant \log\left[\sum_{i=1}^{N}\tilde{Z}_L\left(\alpha;2\pi x_i,2\pi y_i\right)+\tilde{Z}_H\left(\alpha;2\pi x_{N},2\pi y_{N}\right)\right].
	\end{split}
\end{equation}
Then by identification
\begin{equation}
	\begin{split}
		\beta_{L}^{(i)}=2\pi x_i,\quad\beta_{R}^{(i)}=2\pi y_i,\quad i=1,2,\ldots,N,
	\end{split}
\end{equation}
we get \eqref{eq:errorbound}. Therefore, theorem \ref{theorem:logZ} follows from lemma \ref{lemma:functionbound}.

\subsection{Application to 3d gravity: proof of HKS conjecture}\label{subsec:HKS}
The AdS/CFT correspondence suggests the existence of a 2D CFT dual to Einstein gravity in AdS$_3$, assuming a quantum theory of 3D gravity exists. However, finding a suitable CFT partition function for 3D gravity remains an open problem. To gain an intuitive understanding of the expected form of the CFT partition function for 3D gravity, one may consider the semiclassical regime, where the Newton coupling is very small compared to the AdS length scale:
\begin{equation}
	\begin{split}
		G_{\rm N} \ll \ell_{\text{AdS}}.
	\end{split}
\end{equation}
In this regime, it is expected that the saddle-point approximation holds for the partition function:
\begin{equation}
	\begin{split}
		Z(\beta_{L},\beta_{R}) \approx \sum\limits_{g^{(cl)}_{\mu\nu}} e^{-S_{\rm gravity}[g_{\mu\nu}^{(cl)},\beta_{L},\beta_{R}]},
	\end{split}
\end{equation}
where \(S_{\rm gravity}\) is the Einstein-Hilbert action with a negative cosmological constant, and \(g^{(cl)}_{\mu\nu}\) represents classical solutions to the Einstein equation. The temperatures \(\beta_{L}\) and \(\beta_{R}\) appear as the asymptotic boundary conditions for the metric at infinity.

For a given \((\beta_{L},\beta_{R})\), there exists an SL$(2;\mathbb{Z})$ family of Euclidean solutions, leading to \cite{Maldacena:1998bw}:
\begin{equation}\label{BHsolutions}
	\begin{split}
		&S_{\rm gravity}=\frac{i \pi c}{12}\left(\gamma\cdot\tau-\gamma\cdot\bar{\tau}\right), \\
		&\tau=\frac{i\beta_{L}}{2\pi},\ \bar{\tau}=-\frac{i\beta_{R}}{2\pi},\quad \gamma \in {\rm SL}(2;\mathbb{Z}),
	\end{split}
\end{equation}
where \(c = \frac{3\ell_{\text{AdS}}}{2G_{\rm N}} \gg 1\) \cite{Brown:1986nw}, and $\gamma$ is the SL$(2;\mathbb{Z})$ transformation defined by \(\gamma\cdot\tau \equiv \frac{a\tau+b}{c\tau+d}\) with integer parameters satisfying \(ad - bc = 1\) (the parameter \(c\) here is the integer parameter instead of the central charge). When \(\beta_{L}\) and \(\beta_{R}\) are real and positive, the minimum of \(\text{Re}(S_{\rm gravity})\) corresponds to either thermal AdS$_3$ or the BTZ black hole:
\begin{equation}
	\begin{split}
		\min\limits_{\gamma \in {\rm SL}(2;\mathbb{Z})} \text{Re}(S_{\rm gravity}) = \min \left\{-\frac{c}{24}(\beta_{L} + \beta_{R}),\ -\frac{\pi^2c}{6}\left(\frac{1}{\beta_{L}} + \frac{1}{\beta_{R}}\right)\right\}.
	\end{split}
\end{equation}
It turns out that \(S_{\rm gravity}\) becomes real when \(\text{Re}(S_{\rm gravity})\) takes its minimum. Therefore, the saddle-point analysis predicts two phases at large \(c\), for which the free energy is given by:
\begin{equation}\label{gravity:logZ}
	\begin{split}
		\log Z(\beta_{L},\beta_{R}) = \begin{cases}
			\frac{c}{24}(\beta_{L} + \beta_{R}) + \text{o}(c), & \beta_{L} \beta_{R} > 4\pi^2, \\
			\\
			\frac{\pi^2c}{6} \left(\frac{1}{\beta_{L}} + \frac{1}{\beta_{R}}\right) + \text{o}(c), & \beta_{L} \beta_{R} < 4\pi^2,
		\end{cases}
	\end{split}
\end{equation}
where \(\text{o}(c)\) denotes subleading terms in large \(c\) with \(\beta_{L}\) and \(\beta_{R}\) fixed. The Hawking-Page phase transition \cite{Hawking:1982dh} happens on the self-dual line $\beta_{L}\beta_R=4\pi^2$.

We observe that in \eqref{gravity:logZ}, the leading term of the free energy is Cardy-like: it corresponds to the logarithm of the vacuum term in the direct (low-temperature, \(\beta_{L} \beta_{R}> 4\pi^2\)) or dual (high-temperature, \(\beta_{L} \beta_{R} < 4\pi^2\)) expansion of the 2D CFT partition function. However, this result is not directly implied by the standard Cardy formula \cite{Cardy:1986ie}, where the low-temperature limit with fixed \(c\) is considered. This raises a natural question: what type of CFTs produce such free energy?

The answer to this question was proposed by Hartman, Keller and Stoica in \cite{Hartman:2014oaa}. They considered a unitary, modular invariant CFT satisfying the following properties
\begin{itemize}
	\item The theory has a normalizable vacuum.
	\item The central charge of the theory is tunable and allows the limit \(c \rightarrow \infty\).
	\item There exists a fixed \(\epsilon > 0\), and the spectrum of scaling dimensions \(\Delta := h + \bar{h}\) below \(\frac{c}{12} + \epsilon\) is sparse, such that:
	\begin{equation}\label{HKSenergy}
		\sum\limits_{h + \bar{h} \leqslant \frac{c}{12} + \epsilon} n_{h,\bar{h}} \ e^{-(h + \bar{h})\beta} \leqslant A(\beta), \quad \beta > 2\pi,
	\end{equation}
	where \(A(\beta) < \infty\) for \(\beta > 2\pi\), and \(A\) does not depend on \(c\).
	
	\item The spectrum of twist (defined by \(\min\left\{2h, 2\bar{h}\right\}\)) below \(\frac{c}{12}\) is sparse, such that:
	\begin{equation}\label{HKStwist}
		\sum\limits_{h \, {\rm or} \, \bar{h} \leqslant \frac{c}{24}} n_{h,\bar{h}} \ e^{-h\beta_{L} - \bar{h}\beta_{R}} \leqslant B(\beta_{L}, \beta_{R}), \quad \beta_{L} \beta_{R} > 4\pi^2,
	\end{equation}
	where \(B(\beta_{L}, \beta_{R}) < \infty\) for \(\beta_{L} \beta_{R} > 4\pi^2\), and \(B\) does not depend on \(c\).
\end{itemize}
The first three conditions imply \eqref{gravity:logZ} when there is no angular potential, i.e., \(\beta_{L} = \beta_{R} = \beta\). To extend \eqref{gravity:logZ} to mixed temperatures \((\beta_{L}, \beta_{R})\), HKS imposed a sparseness condition on the spectrum of twist (the last condition) and conjectured that \eqref{gravity:logZ} should hold away from the self-dual line \(\beta_{L} \beta_{R} = 4\pi^2\):

\begin{conjecture}\label{conjecture:HKS}
	(HKS \cite{Hartman:2014oaa}) Under the above assumptions, the free energy satisfies the following asymptotic behavior in the limit \(c \rightarrow \infty\) with \(\beta_{L}\) and \(\beta_{R}\) fixed:
	\begin{equation}\label{eq:HKSconjecture}
		\begin{split}
			\log Z(\beta_{L}, \beta_{R}) = \begin{cases}
				\frac{c}{24}(\beta_{L} + \beta_R) + {\rm O}(1), & \beta_{L} \beta_R > 4\pi^2, \\
				\frac{\pi^2c}{6}\left(\frac{1}{\beta_{L}} + \frac{1}{\beta_{R}}\right) + {\rm O}(1), & \beta_{L} \beta_R < 4\pi^2. \\
			\end{cases}
		\end{split}
	\end{equation}
	Here, by {\rm O(1)} we mean that the error term \(\mathcal{E}(\beta_{L}, \beta_{R})\) is of order one in $c$, i.e. it is bounded in absolute value by some finite number that only depends on $\beta_{L}$ and $\beta_{R}$.\footnote{We can relax the conditions \eqref{HKSenergy} and \eqref{HKStwist} by allowing the functions $A(\beta)$ and $B(\beta_{L}, \beta_{R})$ to depend on $c$, provided that this dependence remains $\text{o}(c)$ as $c \rightarrow \infty$. In this case, \eqref{eq:HKSconjecture} in conjecture \ref{conjecture:HKS} should be replaced by \eqref{gravity:logZ}.}
\end{conjecture}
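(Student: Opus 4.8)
The plan is to deduce the conjecture directly from theorem \ref{theorem:logZ} by specializing to $\alpha=1$, since remark \ref{remark:theorem}(d) tells us that in this case $\mathcal{D}_1=\{\beta_L\beta_R>4\pi^2\}$ is the entire low-temperature regime. I would first establish the case $\beta_L\beta_R>4\pi^2$, and then obtain the high-temperature case $\beta_L\beta_R<4\pi^2$ by an $S$-modular transformation. For the lower bound on $\mathcal{E}$, note that the normalizable vacuum contributes the term $e^{\frac{c}{24}(\beta_L+\beta_R)}$ to the sum \eqref{Z:exp}, so $Z\geqslant e^{\frac{c}{24}(\beta_L+\beta_R)}$ and hence $\mathcal{E}(\beta_L,\beta_R)\geqslant0$ throughout $\beta_L\beta_R\geqslant4\pi^2$.

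For the upper bound I would invoke \eqref{eq:errorbound} with $\alpha=1$:
\[
\mathcal{E}(\beta_L,\beta_R)\leqslant\log\left[\sum_{i=1}^{N}\tilde{Z}_L\!\left(1;\beta_L^{(i)},\beta_R^{(i)}\right)+\tilde{Z}_H\!\left(1;\beta_L^{(N)},\beta_R^{(N)}\right)\right].
\]
The two structural observations that make everything work are: (i) the algorithm of figure \ref{fig:algorithmbeta} updates the $\beta$-values with no reference to $c$, so the integer $N$ and the iterates $(\beta_L^{(i)},\beta_R^{(i)})$ depend only on $(\beta_L,\beta_R)$ and are manifestly $c$-independent; and (ii) by theorem \ref{theorem:logZ} these iterates obey $\beta_L^{(i)}\beta_R^{(i)}>4\pi^2$ for $i<N$ and $\beta_L^{(N)},\beta_R^{(N)}>2\pi$ (hence also $\beta_L^{(N)}\beta_R^{(N)}>4\pi^2$).

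With (ii) in hand, the twist-sparseness condition \eqref{HKStwist} applies at every iterate, and since $\tilde{Z}_L(1;\cdot)$ is exactly the left-hand side of \eqref{HKStwist}, I get $\tilde{Z}_L(1;\beta_L^{(i)},\beta_R^{(i)})\leqslant B(\beta_L^{(i)},\beta_R^{(i)})<\infty$ with $B$ independent of $c$. For the final term I would chain remark \ref{remark:theorem}(a)/\eqref{eq:ZHbound} with the energy-sparseness condition \eqref{HKSenergy}: writing $\beta^{(N)}=\min\{\beta_L^{(N)},\beta_R^{(N)}\}>2\pi$,
\[
\tilde{Z}_H\!\left(1;\beta_L^{(N)},\beta_R^{(N)}\right)\leqslant\frac{\widehat{Z}_L(\beta^{(N)})}{1-e^{-(\beta^{(N)}-4\pi^2/\beta^{(N)})\epsilon}}\leqslant\frac{A(\beta^{(N)})}{1-e^{-(\beta^{(N)}-4\pi^2/\beta^{(N)})\epsilon}},
\]
again $c$-independent. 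Because $N$ is a finite, $c$-independent count, the whole bracket is bounded by a finite number depending only on $(\beta_L,\beta_R,\epsilon)$; its logarithm is therefore ${\rm O}(1)$ in $c$, establishing $\log Z=\frac{c}{24}(\beta_L+\beta_R)+{\rm O}(1)$ for $\beta_L\beta_R>4\pi^2$.

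Finally, for $\beta_L\beta_R<4\pi^2$ I would use modular invariance \eqref{modularinv}: the dual point $(4\pi^2/\beta_L,4\pi^2/\beta_R)$ lies in $\{\beta_L\beta_R>4\pi^2\}$, so the case just proved gives $\log Z(4\pi^2/\beta_L,4\pi^2/\beta_R)=\frac{\pi^2c}{6}\big(\tfrac{1}{\beta_L}+\tfrac{1}{\beta_R}\big)+{\rm O}(1)$, and $Z(\beta_L,\beta_R)$ equals this by \eqref{modularinv}. Since the entire analytic core resides in theorem \ref{theorem:logZ}, I do not expect a genuine obstacle at this stage; the only points demanding care are confirming that $N$ and the iterates are visibly $c$-independent and that every iterate $(\beta_L^{(i)},\beta_R^{(i)})$ lands in the regime where the HKS sparseness bounds \eqref{HKSenergy}--\eqref{HKStwist} are valid, so that the summands are controlled uniformly in $c$. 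The footnote-level refinement (allowing $A,B$ to grow as ${\rm o}(c)$) would follow identically, reproducing \eqref{gravity:logZ} in place of \eqref{eq:HKSconjecture}.
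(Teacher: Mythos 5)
Your proposal is correct and follows essentially the same route as the paper: specialize theorem \ref{theorem:logZ} to $\alpha=1$ so that $\mathcal{D}_1=\{\beta_L\beta_R>4\pi^2\}$, use the normalizable vacuum for $\mathcal{E}\geqslant 0$, bound each $\tilde{Z}_L(1;\beta_L^{(i)},\beta_R^{(i)})$ by the $c$-independent twist-sparseness function $B$, control $\tilde{Z}_H$ at the terminal point via the HKS-type estimate together with \eqref{HKSenergy}, and transfer to $\beta_L\beta_R<4\pi^2$ by modular invariance. The only cosmetic difference is that you invoke \eqref{eq:ZHbound} from remark \ref{remark:theorem}(a) and then apply \eqref{HKSenergy}, whereas the paper spells out the equivalent chain $\tilde{Z}_H(1;\beta_L,\beta_R)\leqslant \tilde{Z}_H(1;\beta,\beta)\leqslant Z(\beta,\beta)e^{-c\beta/12}\leqslant A(\beta)/(1-e^{-\epsilon(\beta-4\pi^2/\beta)})$ explicitly.
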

In \cite{Hartman:2014oaa}, HKS presented an iterative argument aimed at proving conjecture \ref{conjecture:HKS}. Their approach involves estimating both the partition function and the entropy. In each iteration step, an improved bound on the entropy is derived from the existing bound on the partition function, and then an improved bound on the partition function is obtained from the new bound on the entropy. After three iterations, HKS demonstrated that the error term in \eqref{eq:HKSconjecture} is \( \text{O}(1) \) in \(c\) for \((\beta_{L}, \beta_{R})\) in a slightly smaller domain than the one proposed in \eqref{eq:HKSconjecture} (see figure 2 in \cite{Hartman:2014oaa}). They conjectured that infinite iterations of their argument would lead to \eqref{eq:HKSconjecture}, with the error term \( \text{O}(1) \) in \(c\) for any \(\beta_{L}\beta_{R} \neq 4\pi^2\).
	
However, already by the third iteration, their estimate became quite complicated, and the regime of \((\beta_{L}, \beta_{R})\) had to be computed numerically. It remains unclear how to continue the iteration process. Nevertheless, the proof of lemma \ref{lemma:functionbound}, which we present in this work, is inspired by \cite{Hartman:2014oaa} and is also iterative in nature. The main difference between our approach and theirs is that we only estimate the partition function, without attempting to estimate the entropy. Our approach enables us to carry out an infinite number of iterations, ultimately leading to the proof of the HKS conjecture. For details, refer to section \ref{sec:prooflemma}, and see figure \ref{fig:5} for a visual representation of the iteration process.
\begin{figure}[!ht]
	\centering
	\includegraphics[scale=0.16]{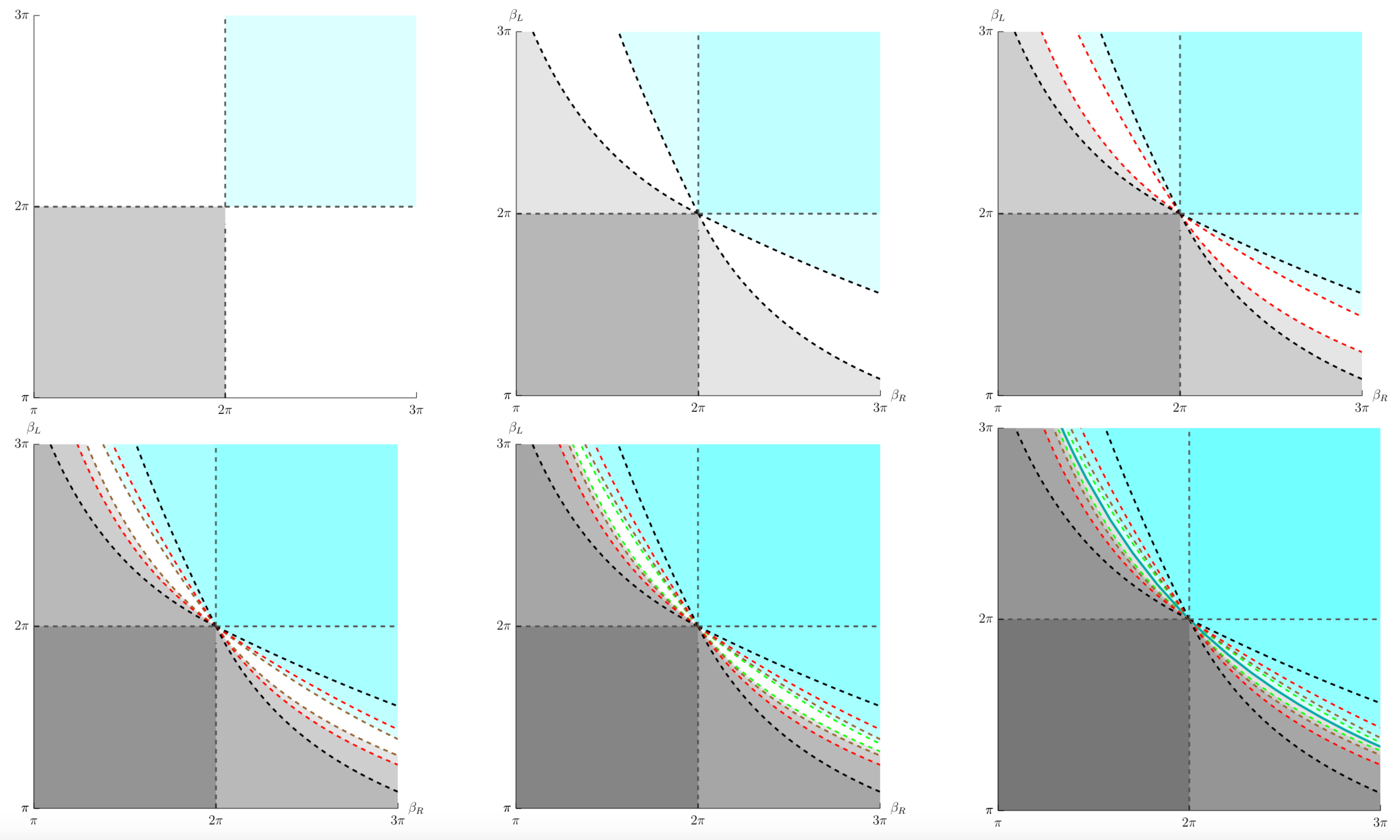}
		\caption{We show how the region of universality grows after each iterations step.  In all the figures,  the cyan region is dominated by thermal AdS while the gray shaded region is dominated by black hole saddle.  The left most figure depicts $0$th iteration, where the universal regime is  union of $\beta_L,\beta_R>2\pi$ and $\beta_L,\beta_R<2\pi$.  With increasing iterations,  more regions become universal.  We denote the boundary of this universal region with dotted lines.  The three figures on the top panel have already obtained in HKS\cite{Hartman:2014oaa}.  In the bottom panel,  we add results from more iteration steps including the one,  obtained analytically after infinite iteration. After infinite iterations,  we cover the whole plane except $\beta_L\beta_R=4\pi^2$, denoted by the solid green line in the rightmost figure on the bottom panel.  }
	\label{fig:5}
\end{figure}

Let us now demonstrate how conjecture \ref{conjecture:HKS} follows from theorem \ref{theorem:logZ} as an immediate consequence. It suffices to only consider the case \(\beta_{L}\beta_{R} > 4\pi^2\), since the case \(\beta_{L}\beta_{R} < 4\pi^2\) follows from modular invariance \eqref{modularinv}.

According to remark \ref{remark:theorem}-(c), we have \(\mathcal{E}(\beta_{L},\beta_{R}) \geqslant 0\) by the first assumption here. Thus, the key point of conjecture \ref{conjecture:HKS} is that \(\mathcal{E}(\beta_{L},\beta_{R})\) is bounded from above by a \(c\)-independent number for \(\beta_{L}\beta_{R} > 4\pi^2\). According to remark \ref{remark:theorem}-(d), this corresponds to the case \(\alpha = 1\) in theorem \ref{theorem:logZ}. Then, from \eqref{eq:errorbound}, it suffices to verify that:
\begin{equation}\label{HKSconj:tocheck}
	\begin{split}
		\tilde{Z}_L(1;\beta_{L},\beta_{R}) &\leqslant B(\beta_{L},\beta_{R}) < \infty, \quad \beta_{L}\beta_{R} > 4\pi^2,\\
		\tilde{Z}_H(1;\beta_{L},\beta_{R}) &\leqslant P(\beta_{L},\beta_{R}) < \infty, \quad \beta_{L}, \beta_{R} > 2\pi,
	\end{split}
\end{equation}
where the quantities \(\tilde{Z}_L\) and \(\tilde{Z}_H\) are defined in \eqref{def:ZtildeLH}, and the functions \(B(\beta_{L},\beta_{R})\) and \(P(\beta_{L},\beta_{R})\) should be independent of \(c\).

The first inequality of \eqref{HKSconj:tocheck} is simply a restatement of \eqref{HKStwist}, which holds by assumption.

The second inequality of \eqref{HKSconj:tocheck} can be derived in two steps. In the first step, we derive the bound for \(\beta_{L} = \beta_{R} = \beta\):
\begin{equation}\label{checkcondition:step1}
	\begin{split}
		Z(\beta,\beta) e^{-\frac{c}{12} \beta} \leqslant \frac{1}{1 - e^{-\epsilon\left(\beta - \frac{4\pi^2}{\beta}\right)}} A(\beta), \quad \beta > 2\pi.
	\end{split}
\end{equation} 
This follows from \eqref{HKSenergy} and the original argument by HKS.\footnote{For the detailed derivation, see the argument from eq.\,(2.5) to eq.\,(2.10) in \cite{Hartman:2014oaa}.}

In the second step, we apply the unitarity condition. Consider the regime \(\beta_{L} \geqslant \beta_{R} > 2\pi\). We obtain the following bound for \(\tilde{Z}_H(1; \beta_{L}, \beta_{R})\):
\begin{equation}\label{checkcondition:step2}
	\begin{split}
		\tilde{Z}_H(1; \beta_{L}, \beta_{R}) \leqslant \tilde{Z}_H(1; \beta_{R}, \beta_{R}) \leqslant Z(\beta_{R}, \beta_{R}) e^{-\frac{c}{12} \beta_{R}} \leqslant \frac{1}{1 - e^{-\epsilon\left(\beta_{R} - \frac{4\pi^2}{\beta_{R}}\right)}} A(\beta_{R}).
	\end{split}
\end{equation}
In the first inequality, we use the fact that all terms are positive and that \(e^{-h \beta_{L}} \leqslant e^{-h \beta_{R}}\) (for \(h \geqslant 0\)) as a consequence of unitarity. In the second inequality, we bound \(\tilde{Z}_H\) by the sum over all CFT states. The final inequality comes from \eqref{checkcondition:step1}. A similar argument applies for the regime \(\beta_{R} \geqslant \beta_{L} > 2\pi\). 

Thus, we arrive at the second inequality of \eqref{HKSconj:tocheck}, with
\begin{equation}\label{HKS:P}
	\begin{split}
		P(\beta_{L}, \beta_{R}) = \frac{1}{1 - e^{-\epsilon\left(\beta - \frac{4\pi^2}{\beta}\right)}} A(\beta), \quad \beta \equiv \min\left\{\beta_{L}, \beta_{R}\right\} > 2\pi.
	\end{split}
\end{equation}
Now, it follows from theorem \ref{theorem:logZ} and \eqref{HKSconj:tocheck} that the right-hand side of \eqref{eq:errorbound} is independent of \(c\). Consequently, when we take the limit \(c \rightarrow \infty\) with \(\beta_{L}\) and \(\beta_{R}\) fixed (\(\beta_{L} \beta_{R} > 4\pi^2\)), the free energy is dominated by the leading term \(\frac{c}{24}(\beta_{L} + \beta_{R})\). This completes the proof of the HKS conjecture.

Before the end of this subsection, we would like to mention that in the work by Anous, Mahajan, and Shaghoulian \cite{Anous:2018hjh}, a slightly stronger twist sparseness condition was imposed:
\begin{itemize}
	\item There exists \(\epsilon > 0\), such that the spectrum of twist below \(\frac{c}{12} + \epsilon\) is sparse.
\end{itemize}
Due to the presence of a positive \(\epsilon\), the HKS argument (for the case without angular potential) extends to the mixed-temperature regime. However, this argument breaks down immediately when \(\epsilon = 0\), which corresponds to the original HKS conjecture \cite{Hartman:2014oaa}.

It is unclear whether the presence or absence of such a positive \(\epsilon\) is physically interesting. Nevertheless, the method we developed to handle the case of \(\epsilon = 0\) proves to be useful for a more general class of CFTs, where the spectrum of twist is sparse below \(\frac{\alpha c}{12}\), with \(\alpha \in (0,1]\) fixed. The case \(\alpha = 1\) reduces to the sparseness condition of HKS. For \(\alpha < 1\), theorem \ref{theorem:logZ} still enables us to explore the mixed-temperature phase diagram at large \(c\). This will be discussed in the next subsection.

\subsection{Large $c$ phase diagram with $\alpha\leqslant1$}\label{subsec:HKSM}
In this subsection, we would like to consider a larger class of unitary, modular invariant CFTs compare to the one in the previous section. We keep the all the assumptions of HKS, except the last one, the sparseness condition on the spectrum of twist, which is relaxed to the following assumption:
\begin{itemize}
	\item There exists a fixed $\alpha\in(0,1]$, such that the spectrum of twist ($\min\left\{2h,2\bar{h}\right\}$) below $\frac{\alpha c}{12}$ is sparse in the following sense
	\begin{equation}\label{alphatwist}
		\begin{split}
			\sum\limits_{\text{min}(h,\bar{h})\leqslant\frac{\alpha c}{24}}n_{h,\bar{h}}\ e^{- h\beta_{L}-\bar{h}\beta_R}\leqslant B(\beta_{L},\beta_{R}),\quad\beta_L\beta_{R}>4\pi^2, \\
		\end{split}
	\end{equation}
	where $B(\beta_L,\beta_{R})<\infty$ for $\beta_L\beta_{R}>4\pi^2$. $B$ does not depend on $c$.
\end{itemize}
The case of $\alpha=1$ is the same as the one that HKS considered in \cite{Hartman:2014oaa}.

Using the same argument as in the proof of HKS conjecture in the previous subsection, we immediately conclude that the large-$c$ behavior of the free energy is universal for all $(\beta_{L},\beta_{R})$ in domain $\mathcal{D}_\alpha$.
\begin{corollary}\label{cor:alpha}
	Suppose the HKS twist sparseness condition replaced by the $\alpha$-dependent one, eq.\,\eqref{alphatwist}, while the other conditions remain the same. Then the CFT free energy satisfies the following asymptotic behavior in the limit $c\rightarrow\infty$ with $\beta_{L}$ and $\beta_{R}$ fixed:
	\begin{equation}\label{eq:logz:alpha}
		\begin{split}
			\log Z(\beta_{L},\beta_{R})=\begin{cases}
				\frac{c}{24}(\beta_{L}+\beta_R)+{\rm O}(1), & (\beta_{L},\beta_{R})\in \mathcal{D}_\alpha, \\
				\frac{\pi^2c}{6}\left(\frac{1}{\beta_{L}}+\frac{1}{\beta_{R}}\right)+{\rm O}(1), & \left(\frac{4\pi^2}{\beta_{L}},\frac{4\pi^2}{\beta_{R}}\right)\in \mathcal{D}_\alpha. \\
			\end{cases}
		\end{split}
	\end{equation} 
	Here, {\rm O(1)} means that the error term \(\mathcal{E}(\beta_{L}, \beta_{R})\) is of order one in $c$, i.e. it is bounded in absolute value by some finite number that only depends on $\beta_{L}$ and $\beta_{R}$.
\end{corollary}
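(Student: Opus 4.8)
The plan is to obtain Corollary~\ref{cor:alpha} as the general-$\alpha$ analogue of the HKS argument in Section~\ref{subsec:HKS}: sandwich the error term $\mathcal{E}(\beta_{L},\beta_{R})$ between $0$ and a $c$-independent constant, and then send $c\to\infty$. The lower bound $\mathcal{E}\geqslant0$ is immediate from the normalizable-vacuum assumption through remark~\ref{remark:theorem}-(c). For the upper bound I would invoke Theorem~\ref{theorem:logZ} directly, which gives $\mathcal{E}(\beta_{L},\beta_{R})\leqslant\log\big[\sum_{i=1}^{N}\tilde{Z}_L(\alpha;\beta_{L}^{(i)},\beta_{R}^{(i)})+\tilde{Z}_H(\alpha;\beta_{L}^{(N)},\beta_{R}^{(N)})\big]$ for $(\beta_L,\beta_R)\in\mathcal{D}_\alpha$. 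Since the points $(\beta_L^{(i)},\beta_R^{(i)})$ and the integer $N$ are produced by the purely geometric recursion of figure~\ref{fig:algorithmbeta}, they depend only on $(\beta_L,\beta_R)$ and $\alpha$, not on $c$; hence it suffices to show that each of the finitely many summands on the right-hand side is bounded by a finite, $c$-independent number.

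The bulk of the work is to control the two types of summands. For the low-twist pieces, Theorem~\ref{theorem:logZ} guarantees $\beta_L^{(i)}\beta_R^{(i)}>4\pi^2$ for $i=1,\dots,N-1$ and $\beta_L^{(N)},\beta_R^{(N)}>2\pi$ (so in particular $\beta_L^{(N)}\beta_R^{(N)}>4\pi^2$ as well); therefore the relaxed twist-sparseness hypothesis \eqref{alphatwist} applies at every one of these points and yields $\tilde{Z}_L(\alpha;\beta_L^{(i)},\beta_R^{(i)})\leqslant B(\beta_L^{(i)},\beta_R^{(i)})<\infty$, with $B$ independent of $c$. For the single high-twist piece I would reuse the two-step estimate \eqref{checkcondition:step1}--\eqref{checkcondition:step2} verbatim: the energy-sparseness condition \eqref{HKSenergy} together with the standard HKS argument gives the diagonal bound $Z(\beta,\beta)e^{-c\beta/12}\leqslant A(\beta)/(1-e^{-\epsilon(\beta-4\pi^2/\beta)})$ for $\beta>2\pi$, and then unitarity ($e^{-h\beta_L}\leqslant e^{-h\beta_R}$ for $h\geqslant0$) lets me dominate $\tilde{Z}_H(\alpha;\beta_L^{(N)},\beta_R^{(N)})$ by its diagonal value $\tilde{Z}_H(\alpha;\beta,\beta)$ with $\beta=\min\{\beta_L^{(N)},\beta_R^{(N)}\}>2\pi$, which is in turn at most the full diagonal partition function $Z(\beta,\beta)e^{-c\beta/12}$. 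Crucially this last step is insensitive to $\alpha$, since dropping the twist restriction only enlarges the sum; this is exactly why the $\alpha=1$ argument transplants without modification.

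Combining these estimates, the right-hand side of \eqref{eq:errorbound} is the logarithm of a finite sum of finite, $c$-independent quantities, hence $\mathrm{O}(1)$ in $c$; together with $\mathcal{E}\geqslant0$ this establishes $\log Z(\beta_L,\beta_R)=\frac{c}{24}(\beta_L+\beta_R)+\mathrm{O}(1)$ throughout $\mathcal{D}_\alpha$, the first line of the claim. The second line follows from modular invariance \eqref{modularinv}: if $(4\pi^2/\beta_L,4\pi^2/\beta_R)\in\mathcal{D}_\alpha$ then $Z(\beta_L,\beta_R)=Z(4\pi^2/\beta_L,4\pi^2/\beta_R)$, and applying the first line at the dual point turns the vacuum term into $\frac{\pi^2c}{6}(1/\beta_L+1/\beta_R)$. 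The only genuinely delicate ingredient --- already packaged inside Theorem~\ref{theorem:logZ}, which I am entitled to assume --- is that the recursion terminates ($N<\infty$) and keeps every intermediate point inside the region where \eqref{alphatwist} is valid; without the guarantee that $\mathcal{D}_\alpha$ is precisely the maximal domain of finite termination one could not be certain the finite, $c$-independent bound survives. Given the theorem, no further estimate is required and the corollary is immediate.
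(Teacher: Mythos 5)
Your proposal is correct and follows essentially the same route as the paper, which proves Corollary~\ref{cor:alpha} by transplanting the $\alpha=1$ HKS argument of section~\ref{subsec:HKS}: lower bound $\mathcal{E}\geqslant 0$ from the normalizable vacuum, upper bound from theorem~\ref{theorem:logZ} with the low-twist summands controlled by \eqref{alphatwist} and the high-twist piece by the $\alpha$-insensitive estimate \eqref{checkcondition:step1}--\eqref{checkcondition:step2}, and modular invariance for the dual region. Your explicit observations that the points $(\beta_L^{(i)},\beta_R^{(i)})$ and $N$ are $c$-independent and that dropping the twist restriction only enlarges the high-twist sum are exactly the checks the paper leaves implicit.
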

Depending on $\alpha$, there are three possible scenarios
\begin{itemize}
	\item $\alpha = 1$: In this case, $\mathcal{D}_1=\{\beta_{L}\beta_{R}>4\pi^2\}$. Therefore, the maximal-validity-domain of \eqref{eq:logz:alpha}, the union of $\mathcal{D}_1$ and its image of $S$-modular transformation, is the whole first quadrant of the $(\beta_{L},\beta_{R})$ plane except for the self-dual line $\beta_{L}\beta_{R}=4\pi^2$.  See figure \ref{fig:1}-left.
	\item $\frac{1}{2} < \alpha < 1$: In this case, the maximal-validity-domain of \eqref{eq:logz:alpha} is strictly smaller than $\{\beta_{L}\beta_{R}\neq4\pi^2\}$. There is a transition point $\beta_* = \frac{2\pi}{2\alpha - 1}$ for the boundary of $\mathcal{D}_\alpha$ (decribed by the function $\phi_\alpha$ in \eqref{def:Dalpha}). When $\beta_{L}\geqslant\beta_*$ or $\beta_{R}\geqslant\beta_*$, the boundary function $\phi_\alpha(\beta_{L,R})$ equals $4\pi^2/\beta_{L,R}$, so this part of the boundary is still the self-dual line $\beta_{L}\beta_{R}=4\pi^2$. However, when $2\pi<\beta_{L,R} \leqslant \beta_*$, the boundary function $\phi_\alpha(\beta_{L,R})$ starts to take the other value in \eqref{def:Dalpha} and deviates from the self-dual line. The function $\phi_\alpha(\beta)$ is continuous, but its first derivative is discontinuous at $\beta = \beta_*$. See figure \ref{fig:1}-right for the maximal-validity-domain of this case.
	\item $0 < \alpha \leqslant \frac{1}{2}$: In this case, there is no transition for the boundary of the maximal-validity-domain of \eqref{eq:logz:alpha}: the function $\phi_\alpha(\beta)$ always takes the second value in \eqref{def:Dalpha}. See figure \ref{fig:2} for the maximal-validity-domain of this case.
\end{itemize}

\section{Proof of lemma \ref{lemma:functionbound}}\label{sec:prooflemma}
In this section, we aim to prove lemma \ref{lemma:functionbound}. Before starting the proof, we would like to provide a conceptual overview of why lemma \ref{lemma:functionbound} holds.

In inequality \eqref{inequality:general}, we observe that the first term on the right-hand side involves a finite sum. The reason why it includes many points \((x_i, y_i)\) is that the bound on \(g(x, y)\) is established through several steps. 

In the first step, we split \(g(x, y)\) into two parts, interpreted as low-twist and high-twist parts:
$$g(x, y) = g_0(x, y) + g_1(x, y).$$
We leave the first term \(g_0\) unchanged, yielding \(x_1 = x\) and \(x_2 = y\) in \eqref{inequality:general}. Then, we estimate the bound on the second term \(g_1\). Using modular invariance, \(g_1(x_1, y_1)\) is bounded by \(g\) evaluated at another point \((x_2, y_2)\), which moves us closer to the regime \(\{x, y > 1\}\):
$$g_1(x_1, y_1) \leqslant g(x_2, y_2).$$
An important point here is that \((x_2, y_2)\) does not depend on which specific function \(g\) is being considered: it is algorithmically determined by \((x_1, y_1)\).

Next, we estimate \(g(x_2, y_2)\) in the same manner. This, combined with the previous process, gives the bound:
$$g(x, y) \leqslant g_0(x_1, y_1) + g_0(x_2, y_2) + g(x_3, y_3),$$
where \((x_3, y_3)\) is determined by \((x_2, y_2)\) and further approaches the regime \(\{x, y > 1\}\).

By repeating the above process, we obtain a sequence of points 
$$(x_1, y_1),\ (x_2, y_2),\ (x_3, y_3),\ \ldots$$ 
which progressively move closer to \(\{x, y > 1\}\). A nontrivial observation is that after a finite number of steps, we reach a point \((x_N, y_N)\) that lies within the regime \(\{x, y > 1\}\). This special property holds for points \((x, y)\) in the domain \(\Omega_\alpha\), as defined in \eqref{def:Omegaalpha}. Thus, inequality \eqref{inequality:general} follows.

We split the proof into two main parts.

In the first part, we introduce an iteration problem involving a sequence of functions 
$$f_0(x), f_1(x), f_2(x), \ldots \quad (x \geqslant 1)\,,$$ 
and demonstrate that they exhibit monotonicity properties. We also determine the pointwise limit of this sequence. This is discussed in section \ref{subsec:iteration}.

In the second part, we prove lemma \ref{lemma:functionbound} inductively for \((x, y)\) in the domains:
$$\Omega_\alpha^{(0)} \subset \Omega_\alpha^{(1)} \subset \Omega_\alpha^{(2)} \subset \ldots$$
The maximal domain \(\Omega_\alpha\), defined by \eqref{def:Omegaalpha}, is the limit of this sequence:
$$\Omega_\alpha = \bigcup\limits_{k=0}^\infty \Omega_\alpha^{(k)}.$$
In each induction step, the boundary of the domain \(\Omega_\alpha^{(k)}\) is described by the function \(f_k\), which arises in the iteration problem in the first part of the proof. This is addressed in section \ref{section:derivationiteration}.

\subsection{Step 1: an iteration problem}\label{subsec:iteration}
We introduce a sequence of functions $\{f_k(x)\}$ for $x \geqslant 1$ by the following recursion relation:
\begin{equation}\label{f:recursion0}
	\begin{split}
		f_0(x) &\equiv 1, \\
		f_{k+1}(x) &= \max \left\{ y \in (0,1] \,\Bigg{|}\, F_k(y) = \alpha x \right\}, \quad k \geqslant 0,
	\end{split}
\end{equation}
where the function $F_k(y)$ is defined by:
\begin{equation}\label{def:Fk}
	F_k(y) \equiv \frac{1}{y} - y - \frac{1-\alpha}{f_{k}\left(\frac{1}{y}\right)} + f_{k}\left(\frac{1}{y}\right) \quad (0 < y \leqslant 1).
\end{equation}
It is not obvious that the above sequence is well-defined, but we will show below that it is.

The motivation for introducing the above iteration problem was mentioned at the beginning of section \ref{sec:prooflemma}: lemma \ref{lemma:functionbound} will be proven inductively domain by domain; in each step, the boundary of domain $\Omega_\alpha^{(k)}$ is described by the function $f_k$.

We would like to prove the following properties for $\{f_k\}$:
\begin{proposition}\label{prop:fsequence}
	Let $0 < \alpha \leqslant 1$ be fixed. The function sequence $\{f_k\}$ is well-defined, and it satisfies the following properties:
	
	(a) For any $k \geqslant 1$, $f_k(x)$ is a strictly decreasing smooth function on $[1, \infty)$. We have $f_k(1) = 1$ and $f_k(\infty) = 0$. For all $x > 1$, $f_k(x) > \frac{1}{x}$.
	
	(b) For any $k \geqslant 0$, $F_k(y)$ (defined in \eqref{def:Fk}) is a strictly decreasing smooth function on $(0, 1]$. We have $F_k(0) = \infty$ and $F_k(1) = \alpha$.
	
	(c) For all $k \geqslant 0$ and $x > 1$, we have $f_{k+1}(x) < f_k(x)$.
	
	(d) The sequence $\{f_k\}$ has a pointwise limit, given by
	\begin{equation}\label{f:infinity}
		f_\infty(x) = \max\left\{f_\alpha^{(1)}(x),\ f_\alpha^{(2)}(x)\right\},
	\end{equation}
	where
	\begin{equation}\label{twosolutions}
		f_\alpha^{(1)}(x) = \frac{1}{x}, \quad f_\alpha^{(2)}(x) = \frac{1}{2} \left(\alpha(2 - x) + \sqrt{\alpha^2(2 - x)^2 + 4 (1 - \alpha)}\right).
	\end{equation}
\end{proposition}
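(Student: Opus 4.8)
The plan is to prove (a)--(d) by a single coupled induction on $k$, carrying along an auxiliary bound on $f_k'$ that is exactly what forces the monotonicity in (b). The base case $f_0\equiv1$ gives $F_0(y)=\tfrac1y-y+\alpha$, which is visibly strictly decreasing on $(0,1]$ with $F_0(0^+)=\infty$ and $F_0(1)=\alpha$, so (b) holds at $k=0$. The inductive engine is this: whenever $F_k$ is a strictly decreasing bijection $(0,1]\to[\alpha,\infty)$, the equation $F_k(f_{k+1}(x))=\alpha x$ has a unique solution (so the ``$\max$'' in \eqref{f:recursion0} is redundant), and $f_{k+1}=F_k^{-1}(\alpha\,\cdot)$ is smooth and strictly decreasing by the inverse function theorem (using $F_k'\neq0$ and $f_k$ smooth), with $f_{k+1}(1)=1$ from $F_k(1)=\alpha$ and $f_{k+1}(\infty)=0$ from $F_k(0^+)=\infty$. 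The inequality $f_{k+1}(x)>1/x$ reduces, since $F_k$ is decreasing, to $F_k(1/x)>\alpha x$; writing $\psi(t):=t-\tfrac{1-\alpha}{t}$ (increasing) one has $F_k(1/x)=x-\tfrac1x+\psi(f_k(x))$, and the inductive hypothesis $f_k(x)>1/x$ together with $\psi$ increasing gives $F_k(1/x)-\alpha x=(1-\alpha)x-\tfrac1x+\psi(f_k(x))>(1-\alpha)x-\tfrac1x+\psi(1/x)=0$. The value $F_k(0^+)=\infty$ itself uses $\alpha>0$ and $f_k(u)>1/u$, since $\tfrac1y-\tfrac{1-\alpha}{f_k(1/y)}>\tfrac1y(1-(1-\alpha))=\tfrac{\alpha}{y}\to\infty$.

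The heart of the argument is propagating the derivative bound. A direct differentiation gives $F_k'(y)=-(u^2+1)+u^2P_k(u)$ with $u=1/y$ and $P_k(u):=|f_k'(u)|\big(1+\tfrac{1-\alpha}{f_k(u)^2}\big)\geqslant0$, so $F_k'<0$ is \emph{equivalent} to $P_k(u)<1+1/u^2$, which is implied by the invariant $P_k\leqslant\alpha\leqslant1$. I therefore carry $P_k\leqslant\alpha$ (with $P_0\equiv0$) inside the induction. Differentiating $F_k(f_{k+1}(x))=\alpha x$ yields $|f_{k+1}'(x)|=\alpha/\big[(u^2+1)-u^2P_k(u)\big]$ with $u=1/f_{k+1}(x)$, and a short manipulation turns this into the recursion
\[
P_{k+1}(x)=\frac{\alpha\,\big(y^2+1-\alpha\big)}{1+y^2-P_k(1/y)},\qquad y=f_{k+1}(x),
\]
whence $P_k(1/y)\leqslant\alpha$ forces the denominator to be $\geqslant1+y^2-\alpha$ and gives $P_{k+1}\leqslant\alpha$. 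This closes the induction for (a) and (b) simultaneously.

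Part (c) follows by a comparison: on $(0,1)$ one has $F_k(y)-F_{k-1}(y)=\psi(f_k(1/y))-\psi(f_{k-1}(1/y))$, so the previous step $f_k<f_{k-1}$ on $(1,\infty)$ (base case $f_1<f_0\equiv1$) and monotonicity of $\psi$ give $F_k<F_{k-1}$ pointwise; then $F_k(f_k(x))<F_{k-1}(f_k(x))=\alpha x=F_k(f_{k+1}(x))$ and $F_k$ decreasing yield $f_{k+1}(x)<f_k(x)$. For (d), parts (a) and (c) show $f_k(x)$ decreases in $k$ and stays above $1/x$, so the pointwise limit $f_\infty\geqslant 1/x$ exists, and passing to the limit in $F_k(f_{k+1}(x))=\alpha x$ gives the fixed-point equation $\Phi(f_\infty(x))+\Psi\!\big(f_\infty(1/f_\infty(x))\big)=\alpha x$, where $\Phi(t)=\tfrac1t-t$, $\Psi(t)=t-\tfrac{1-\alpha}{t}$. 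A direct algebraic check shows both $f_\alpha^{(1)}(x)=1/x$ (using $\Phi(t)+\Psi(t)=\alpha/t$) and $f_\alpha^{(2)}$ (characterized by $\tfrac{1-\alpha}{f}-f=\alpha(x-2)$) solve it. The lower bound $f_\infty\geqslant\max\{f_\alpha^{(1)},f_\alpha^{(2)}\}$ then follows from $f_\infty\geqslant1/x$ together with the observation that the iteration map $Tf\mapsto f_{\text{next}}$ is order-preserving and fixes $f_\alpha^{(2)}$, while $f_0\equiv1\geqslant f_\alpha^{(2)}$, so $f_k\geqslant f_\alpha^{(2)}$ for every $k$.

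The main obstacle is the matching upper bound $f_\infty\leqslant\max\{f_\alpha^{(1)},f_\alpha^{(2)}\}$: the fixed-point equation by itself does \emph{not} determine $f_\infty$, because for $\alpha>\tfrac12$ both branches are genuine solutions and the equation admits an envelope of solutions, so one must exploit that $f_\infty$ is the monotone limit started from $f_0\equiv1$. The route I would take is to analyze the orbit $x_0=x$, $x_{n+1}=1/f_\infty(x_n)$. On the self-dual branch $f_\infty(x)=1/x$ the orbit is stationary; on the region $f_\infty(x)>1/x$ one shows (by ruling out $f_\infty(1/f_\infty(x))=f_\infty(x)$, which would force $f_\infty(x)=1/x$) that the region is invariant and the orbit strictly decreases, hence $x_n\downarrow1$ where $f_\infty(1)=1$. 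Along the orbit the fixed-point equation becomes the scalar recursion $\alpha x_n=x_{n+1}-\tfrac1{x_{n+1}}+\tfrac1{x_{n+2}}-(1-\alpha)x_{n+2}$, and I expect the bulk of the work to be in showing that this boundary-value recursion, with the value pinned at $x=1$ and selected by the monotone limit, forces $f_\infty=f_\alpha^{(2)}$ on that region and $f_\infty=1/x$ elsewhere. Establishing that uniqueness rigorously is the delicate point.
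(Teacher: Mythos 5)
Your treatment of parts (a)--(c) is correct and takes a genuinely different route from the paper. The paper works in a ``Lagrangian'' picture: it introduces the orbit $z_k$ solving the second-order recursion $\alpha z_{k+1}-z_k+\tfrac{1}{z_k}+(1-\alpha)z_{k-1}-\tfrac{1}{z_{k-1}}=0$, proves $z_k'(x)\geqslant 1$ by induction, and reads off the monotonicity and smoothness of $f_k=1/(z_{k-1}\circ z_k^{-1})$ from there. You instead stay entirely in the ``Eulerian'' picture and carry the invariant $P_k(u)=|f_k'(u)|\bigl(1+\tfrac{1-\alpha}{f_k(u)^2}\bigr)\leqslant\alpha$; I checked your recursion $P_{k+1}=\alpha(y^2+1-\alpha)/(1+y^2-P_k(1/y))$ and it is right, and it closes the induction for (a) and (b) cleanly (your $P_k$ bound is in fact the Eulerian shadow of the paper's $z_k'\geqslant 1$). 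Your proofs of $f_{k+1}(x)>1/x$ and of (c) are essentially the paper's arguments in slightly different clothing. Your lower bound in (d) --- $T$ is order-preserving, fixes $f_\alpha^{(2)}$, and $f_0\equiv 1\geqslant f_\alpha^{(2)}$, hence $f_k\geqslant f_\alpha^{(2)}$ and $f_\infty\geqslant\max\{f_\alpha^{(1)},f_\alpha^{(2)}\}$ --- is a nice addition that the paper does not state separately, since its method computes the limit in one shot.

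The genuine gap is exactly where you flag it: the matching upper bound $f_\infty\leqslant\max\{f_\alpha^{(1)},f_\alpha^{(2)}\}$, i.e.\ the identification of the limit. As you correctly observe, the fixed-point equation \eqref{f:fp} does not determine $f_\infty$ (for $\alpha>\tfrac12$ both branches, and gluings of them, solve it), so one must use that $f_\infty$ is the monotone limit started from $f_0\equiv 1$; your proposal for doing so is only a plan (``the route I would take\ldots'', ``I expect the bulk of the work\ldots''), not a proof. The paper closes this step with the conserved quantity $Q_k=\alpha z_k-(1-\alpha)z_{k-1}+\tfrac{1}{z_{k-1}}\equiv\alpha(x+1)$ of the orbit recursion: the ratio identity $\tfrac{z_{k+1}-z_k}{z_k-z_{k-1}}=\tfrac{1-\alpha+1/(z_kz_{k-1})}{\alpha}$ shows each orbit either converges to a finite limit $\geqslant\tfrac{1}{2\alpha-1}$ (determined by the conservation law, giving $f_\infty=1/x$ there) or escapes to infinity; the region not reached by any fixed-initial-condition orbit is then handled by tuning $z_k(x_k)=x$, proving $x_k\to 1$, and passing to the limit in the conservation law to get $\alpha x-\tfrac{1-\alpha}{f_\infty(x)}+f_\infty(x)=2\alpha$, which is your $f_\alpha^{(2)}$ branch. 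Your sketched orbit $x_{n+1}=1/f_\infty(x_n)$ is precisely the time-reversal of the paper's $z_k$, but without a conserved quantity along it (or an equivalent rigidity statement) the claim that the boundary-value recursion ``forces'' $f_\alpha^{(2)}$ remains unproven. So parts (a)--(c) stand on their own, while part (d) as written establishes only $f_\infty\geqslant\max\{f_\alpha^{(1)},f_\alpha^{(2)}\}$ plus the fixed-point equation, which is strictly weaker than the stated result.
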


Depending on the value of $\alpha$, there are three different scenarios:
\begin{itemize}
	\item $\alpha = 1$: In this case, the limit function is simply $f_\infty(x) = \frac{1}{x}$.  
	\item $\frac{1}{2} < \alpha < 1$: In this case, there is a transition point $x_* = \frac{1}{2\alpha - 1}$. The limit function $f_\infty(x)$ equals $f_\alpha^{(1)}(x)$ for $x \geqslant x_*$ and $f_\alpha^{(2)}(x)$ for $1 \leqslant x \leqslant x_*$. The function is continuous, but its first derivative is discontinuous at $x = x_*$. 
	\item $0 < \alpha \leqslant \frac{1}{2}$: In this case, the limit function is $f_\infty = f_\alpha^{(2)}$. 
\end{itemize}

The proof of proposition \ref{prop:fsequence} is organized as follows. Section \ref{sec:EulerLagrangian} introduces the fluid description of \eqref{f:recursion0}, which significantly simplifies the problem. Using the fluid description, we prove parts (a)-(c) of the proposition in section \ref{section:proofabc}. In section \ref{sec:limit}, we prove part (d), where we divide the discussion into the three cases mentioned above.

\subsubsection{From Eulerian to Lagrangian Description}\label{sec:EulerLagrangian}

The iteration equation \eqref{f:recursion0} involves the term $f_k\left(\frac{1}{f_{k+1}(x)}\right)$, which complicates both analytical and numerical analysis. To simplify the problem, we introduce an equivalent version of \eqref{f:recursion0}. This approach is inspired by classical fluid dynamics. In this analogy, we interpret $k$ as time, $x$ as position, and $f_k(x)$ as a velocity field evaluated at the ``spacetime" point $(k, x)$. Thus, \eqref{f:recursion0} resembles the equation of motion in the Eulerian description.

In contrast, the Lagrangian description involves tracking individual fluid particles as they move through space and time. We define the sequence $\{z_k\}$ as follows:
\begin{equation}\label{def:z}
	\begin{split}
		z_k &= \max\left\{ y \geqslant 1 \,\Big{|}\, f_{k}(y) = \frac{1}{z_{k-1}} \right\} \quad (k \geqslant 1), \\
		z_0 &= x, \quad z_{-1} = 1,\quad(x\geqslant1). \\
	\end{split}
\end{equation}
We use ``max" in the definition of $z_k$ because the strict monotonicity of $f_k$ has not yet been established (in fact, even the well-definedness of $z_k$ has not yet been justified). Here, $z_k$ can be interpreted as the position of a fluid particle at time $k$, starting from the initial position $x$.

Using the Lagrangian description, \eqref{f:recursion0} and \eqref{def:Fk} are equivalent to a second-order difference equation:
\begin{equation}\label{eq:recur3}
	\begin{split}
		&\alpha z_{k+1} - z_k + \frac{1}{z_k} + (1 - \alpha)z_{k-1} - \frac{1}{z_{k-1}} = 0 \quad (k \geqslant 0), \\
		&z_0 = x, \quad z_{-1} = 1 \quad (x \geqslant 1).
	\end{split}
\end{equation}
We see that the iteration equation is significantly simplified in the Lagrangian description. Eq.\,\eqref{eq:recur3} can be interpreted as the equation of motion for non-interacting fluid particles, as particles with different initial conditions do not interact with each other in this equation.

To recover the Eulerian description, the following reparametrization for the function $f_k$ is used:
\begin{equation}\label{fk:reparametrization}
	\begin{split}
		X = z_k(x), \quad Y = \frac{1}{z_{k-1}(x)}, \quad Y = f_k(X) \quad (x \geqslant 1).
	\end{split}
\end{equation}
It is not immediately clear that $X = z_k(x)$ provides a valid reparametrization in each iteration step, as it has not been established that $z_k(x)$ is a one-to-one map from $[1, \infty)$ to itself. Later, we will show that this is indeed the case.

From \eqref{eq:recur3}, we observe that the fluid particle has a conserved quantity given by:
\begin{equation}\label{conservation}
	Q_k := \alpha z_{k} - (1 - \alpha)z_{k-1} + \frac{1}{z_{k-1}} \equiv \alpha(x + 1).
\end{equation}
We will see later that the conservation law \eqref{conservation} is crucial for studying the limits of the sequences $\{z_k\}$ and $\{f_k\}$.

\subsubsection{Proof of proposition \ref{prop:fsequence}-(a) (b) (c)}\label{section:proofabc}
In this subsection, we present the proof of proposition \ref{prop:fsequence}-(a), (b), and (c). Since these parts contain multiple statements, we will split the proof into several smaller lemmas and prove them one by one. First, we will establish some properties of the sequence $\{z_k\}$ defined in \eqref{eq:recur3}. Then, we will use these properties to prove the properties of the function sequence $\{f_k\}$. Throughout this subsection, we assume that $\alpha \in (0, 1]$.

From \eqref{eq:recur3}, for each initial condition $(z_{-1}, z_0)$, the sequence $\{z_k\}$ is uniquely determined (since $\alpha > 0$). The well-definedness of $\{z_k\}$ is equivalent to proving that $z_k \neq 0$ for each $k$. This will be shown below.

Let $z_k(x)$ denote the sequence $\{z_k\}$ with the initial conditions $z_{-1} = 1$ and $z_0 = x$. Then, the following lemma holds:

\begin{lemma}\label{lemma:zk}
	For any $k \geqslant 0$, $z_k(x)$ is a well-defined, strictly increasing smooth function of $x$. Additionally, we have $z_k(1) = 1$ and $z_k(\infty) = \infty$.
\end{lemma}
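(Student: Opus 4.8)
The plan is to prove the lemma by induction on $k$, with the decisive simplification coming from the conservation law \eqref{conservation}. Substituting $Q_{k+1}=\alpha(x+1)$ into the definition of $Q_{k+1}$ and solving for $z_{k+1}$ collapses the second-order recursion \eqref{eq:recur3} into a first-order one,
\[
z_{k+1}(x) = (x+1) + \frac{1-\alpha}{\alpha}\,z_k(x) - \frac{1}{\alpha\,z_k(x)},
\]
which expresses $z_{k+1}$ in terms of $z_k$ and $x$ alone, eliminating $z_{k-1}$. This is the crux: differentiating \eqref{eq:recur3} directly produces a term proportional to $-z_{k-1}'(x)$ whose sign is not manifest, whereas the first-order form makes every property transparent. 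I would treat \eqref{conservation} (already an algebraic consequence of \eqref{eq:recur3} recorded in the excerpt) as available and build the induction around this reduction.

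The base case is $z_0(x)=x$, which is well-defined, smooth, strictly increasing on $[1,\infty)$, with $z_0(1)=1$ and $z_0(\infty)=\infty$. For the inductive step I would carry the hypothesis that $z_k(x)$ is well-defined, smooth, strictly increasing on $[1,\infty)$ with $z_k(1)=1$; since a strictly increasing function with $z_k(1)=1$ satisfies $z_k(x)\geqslant 1$ for $x\geqslant 1$, this guarantees $z_k(x)\neq 0$, so the first-order recursion is well-posed and $z_{k+1}$ is well-defined. Smoothness of $z_{k+1}$ follows because $z\mapsto (x+1)+\tfrac{1-\alpha}{\alpha}z-\tfrac{1}{\alpha z}$ is smooth for $z>0$ and composes with the smooth, positive $z_k$. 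Evaluating at $x=1$ with $z_k(1)=1$ gives $z_{k+1}(1)=2+\tfrac{1-\alpha}{\alpha}-\tfrac1\alpha=1$, as required.

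For monotonicity and the limiting values I would differentiate the first-order recursion to obtain
\[
z_{k+1}'(x) = 1 + z_k'(x)\left(\frac{1-\alpha}{\alpha} + \frac{1}{\alpha\,z_k(x)^2}\right).
\]
Because $\alpha\in(0,1]$ forces $\tfrac{1-\alpha}{\alpha}\geqslant 0$, while $z_k'(x)>0$ and $z_k(x)>0$ by the inductive hypothesis, every summand is nonnegative and the leading constant forces $z_{k+1}'(x)\geqslant 1>0$; hence $z_{k+1}$ is strictly increasing. Integrating this bound from $1$ to $x$ and using $z_{k+1}(1)=1$ yields $z_{k+1}(x)\geqslant x$, which simultaneously confirms $z_{k+1}(x)\geqslant 1$ (closing the induction on positivity) and gives $z_{k+1}(\infty)=\infty$. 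This completes the inductive step.

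The only real subtlety I anticipate is bookkeeping in the induction: one must explicitly keep $z_k(x)\geqslant 1$ (equivalently $z_k(x)\neq 0$) inside the hypothesis, because \eqref{conservation}, and therefore the first-order recursion, presuppose that every $1/z_j$ term is defined at each step. Once the reduction to first order is justified along these lines, all remaining assertions reduce to elementary calculus, so the conservation-law reduction carries essentially the entire conceptual weight of the proof.
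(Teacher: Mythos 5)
Your proof is correct and takes essentially the same route as the paper: both arguments hinge on the conservation law to reduce the recursion to a first-order relation in $z_k$, establish $z_{k+1}'(x) = 1 + \tfrac{z_k'(x)}{\alpha}\left(1-\alpha+\tfrac{1}{z_k(x)^2}\right) \geqslant 1$, and deduce $z_k(x)\geqslant x$ to get well-definedness, monotonicity, and $z_k(\infty)=\infty$. Your packaging of everything into a single induction (with the positivity of $z_k$ carried explicitly in the hypothesis) is a slightly tidier bookkeeping of the same proof.
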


\begin{proof}
	We begin by proving that $z_k(1) = 1$. This holds for $k = 0$ by definition. Using induction, assume that $z_k(1) = 1$ for all $k \leqslant n$. For $k = n+1$, using \eqref{conservation}, we have:
	\begin{equation}
		\begin{split}
			z_{n+1}(1) = \frac{1}{\alpha} \left[z_n(1) - \frac{1}{z_n(1)} - (1 - \alpha) z_{n-1}(1) + \frac{1}{z_{n-1}(1)} \right].
		\end{split}
	\end{equation}
	Since $z_n(1) = 1$, this implies that $z_{n+1}(1) = 1$. By induction, we conclude that $z_k(1) = 1$ for all $k \geqslant 0$.
	
	Next, we prove inductively that $z'_k(x) \geqslant 1$ for all $k \geqslant 0$ and $x \geqslant 1$. For $k = 0$, we have $z'_0(x) = 1$ by definition. Now assume that $z'_k(x) \geqslant 1$ for $k \leqslant n$. For $k = n+1$, taking the $x$-derivative of \eqref{conservation} gives:
	\begin{equation}
		\begin{split}
			\alpha z_{n+1}'(x) - z_n'(x) \left[1 - \alpha + \frac{1}{z_n(x)^2}\right] = \alpha.
		\end{split}
	\end{equation}
	Thus, we obtain:
	\begin{equation}
		\begin{split}
			z_{n+1}'(x) = 1 + \frac{z'_n(x)}{\alpha} \left[1 - \alpha + \frac{1}{z_n(x)^2}\right] \geqslant 1.
		\end{split}
	\end{equation}
	This completes the proof of $z'_k(x) \geqslant 1$.
	
	By $z_k'(x)\geqslant1$ and the fact that $z_k(1) = 1$, we conclude that $z_k(x) \geqslant x$ for all $x \geqslant 1$ in each iteration step. Therefore, $z_k(x) \neq 0$ for all $x \geqslant 1$ and $z_k(\infty) = \infty$. In particular, this shows that the sequence $\{z_k\}$ in \eqref{eq:recur3} is well-defined.
	
	Finally, we show that $z_k(x)$ is smooth for each fixed $k \geqslant 0$. By the conservation law \eqref{conservation}, $z_k(x)$ and $z_{k-1}(x)$ are related by:
	\begin{equation}
		\begin{split}
			z_k(x) = \frac{1}{\alpha} \left[(1 - \alpha) z_{k-1}(x) - \frac{1}{z_{k-1}(x)}\right] + x + 1 \quad (x \geqslant 1).
		\end{split}
	\end{equation}
	Since $z_k(x) \neq 0$ for all $k \geqslant 0$ and $x \geqslant 1$, and since $z_0(x) = x$ is smooth in $x$, the above expression shows the smoothness of $z_k(x)$ by induction. This completes the proof of the lemma.
\end{proof}
Now it becomes obvious that $f_k(x)$ is well-defined in each iteration step of \eqref{f:recursion0}. Using \eqref{def:z} and lemma \ref{lemma:zk}, we can reparametrize the function $f_k(x)$ as in \eqref{fk:reparametrization}. Thus, we conclude:

\begin{lemma}\label{lemma:fk}
	For all $k \geqslant 1$, $f_k(x)$ is a strictly decreasing smooth function on $[1, \infty)$. Additionally, we have $f_k(1) = 1$ and $f_k(\infty) = 0$.
\end{lemma}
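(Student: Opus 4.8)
The plan is to derive Lemma \ref{lemma:fk} directly from Lemma \ref{lemma:zk} by unwinding the reparametrization \eqref{fk:reparametrization}. By Lemma \ref{lemma:zk}, for each fixed $k$ the map $z_k:[1,\infty)\to[1,\infty)$ is smooth, strictly increasing (indeed $z_k'(x)\geqslant1$), and satisfies $z_k(1)=1$ and $z_k(\infty)=\infty$; hence it is a smooth bijection of $[1,\infty)$ onto itself, and its inverse $z_k^{-1}$ is again smooth and strictly increasing. This bijectivity is precisely what legitimizes $X=z_k(x)$ as a genuine change of variable, the point flagged as not immediately obvious right after \eqref{fk:reparametrization}.

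First I would convert \eqref{fk:reparametrization} into an explicit formula for $f_k$. Setting $X=z_k(x)$ and inverting gives $x=z_k^{-1}(X)$, so the relation $f_k(X)=1/z_{k-1}(x)$ becomes
$$f_k(X)=\frac{1}{z_{k-1}\!\left(z_k^{-1}(X)\right)},\qquad X\in[1,\infty).$$
This displays $f_k$ as a composition of the smooth maps $z_k^{-1}$ and $z_{k-1}$ followed by the reciprocal $t\mapsto 1/t$. Since $z_{k-1}\geqslant1>0$ on $[1,\infty)$, the reciprocal is applied to a nowhere-vanishing smooth function, so $f_k$ is smooth on $[1,\infty)$.

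Next I would read off the monotonicity and the boundary values. Both $z_k^{-1}$ and $z_{k-1}$ are strictly increasing, so their composition is strictly increasing and everywhere $\geqslant1$; composing with the strictly decreasing map $t\mapsto 1/t$ shows that $f_k$ is strictly decreasing. For the endpoints, $z_k(1)=1$ gives $z_k^{-1}(1)=1$, hence $f_k(1)=1/z_{k-1}(1)=1$; and as $X\to\infty$ we have $z_k^{-1}(X)\to\infty$, so $z_{k-1}\!\left(z_k^{-1}(X)\right)\to\infty$ and therefore $f_k(X)\to0$.

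The only delicate ingredient is the validity of the reparametrization itself, and this is exactly why the substantive work lives in Lemma \ref{lemma:zk} rather than here: once strict monotonicity of $z_k$ together with the limits $z_k(1)=1$, $z_k(\infty)=\infty$ is established, the bijectivity and hence the change of variable follow at once, and Lemma \ref{lemma:fk} is immediate. I would also remark that the equivalence between the Eulerian recursion \eqref{f:recursion0}--\eqref{def:Fk} and the Lagrangian second-order difference equation, set up in section \ref{sec:EulerLagrangian}, guarantees that the $f_k$ reconstructed from \eqref{fk:reparametrization} is the same function defined by the original recursion, so no separate consistency check is required.
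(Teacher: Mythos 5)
Your proposal is correct and takes essentially the same route as the paper: the paper's proof also writes $f_k(x)=1/\bigl(z_{k-1}\circ z_k^{-1}(x)\bigr)$ and reads off smoothness, strict monotonicity, and the endpoint values $f_k(1)=1$, $f_k(\infty)=0$ directly from Lemma \ref{lemma:zk}. Your additional remarks on why the reparametrization is legitimate match the paper's surrounding discussion and add nothing that would change the argument.
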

\begin{proof}
	By \eqref{def:z} and lemma \ref{lemma:zk}, we can express $f_k(x)$ as
	\begin{equation}
		\begin{split}
			f_k(x) = \frac{1}{z_{k-1} \circ z_{k}^{-1}(x)}.
		\end{split}
	\end{equation}
	Here, $z_{k}^{-1}$ denotes the inverse of $z_k$. By lemma \ref{lemma:zk}, for $k \geqslant 1$, both $z_{k-1}$ and $z_k^{-1}$ are strictly increasing, one-to-one, smooth functions on $[1, \infty)$. Therefore, $f_k(x)$ is a strictly decreasing smooth function on $[1, \infty)$.
	
	Furthermore, by lemma \ref{lemma:zk}, for $k \geqslant 1$, we have
	\begin{equation}
		\begin{split}
			f_k(1) = \frac{1}{z_{k-1}\left(z_k^{-1}(1)\right)} = \frac{1}{z_{k-1}(1)} = \frac{1}{1} = 1,
		\end{split}
	\end{equation}
	and
	\begin{equation}
		\begin{split}
			f_k(\infty) = \frac{1}{z_{k-1}\left(z_k^{-1}(\infty)\right)} = \frac{1}{z_{k-1}(\infty)} = \frac{1}{\infty} = 0.
		\end{split}
	\end{equation}
	This completes the proof of the lemma.
\end{proof}

To complete the proof of proposition \ref{prop:fsequence}-(a), we need the following additional lemma:

\begin{lemma}\label{lemma:fklowerbound}
	$f_k(x) > \frac{1}{x}$ for any $k \geqslant 0$ and $x > 1$.
\end{lemma}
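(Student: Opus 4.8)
The plan is to translate the statement into a monotonicity property of the Lagrangian sequence $\{z_k\}$ and then prove that property by a short induction on $k$. Using the reparametrization \eqref{fk:reparametrization}, or equivalently the formula $f_k(x)=1/\bigl(z_{k-1}\circ z_k^{-1}(x)\bigr)$ from the proof of lemma \ref{lemma:fk}, the desired inequality $f_k(x)>\tfrac1x$ for $x>1$ is equivalent to $z_{k-1}(w)<z_k(w)$ at the point $w=z_k^{-1}(x)$. Since $z_k(1)=1$ and $z_k$ is strictly increasing (lemma \ref{lemma:zk}), the condition $x>1$ corresponds precisely to $w>1$, so it suffices to show that the gap $d_k(x):=z_k(x)-z_{k-1}(x)$ is strictly positive for every $x>1$ and every $k\geqslant1$. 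The case $k=0$ is trivial, since $f_0\equiv1>\tfrac1x$ for $x>1$.

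First I would derive a recursion for $d_k$ directly from the difference equation \eqref{eq:recur3}. Writing \eqref{eq:recur3} as $\alpha z_{k+1}=z_k-\tfrac1{z_k}-(1-\alpha)z_{k-1}+\tfrac1{z_{k-1}}$, subtracting $\alpha z_k$ from both sides, and collecting terms using $\tfrac1{z_{k-1}}-\tfrac1{z_k}=\tfrac{z_k-z_{k-1}}{z_k z_{k-1}}$, one finds
$$\alpha\,d_{k+1}(x)=d_k(x)\left[(1-\alpha)+\frac{1}{z_k(x)\,z_{k-1}(x)}\right].$$
Because $z_k(x),z_{k-1}(x)\geqslant1$ by lemma \ref{lemma:zk} and $0<\alpha\leqslant1$, the bracket is strictly positive, so $d_{k+1}$ has the same sign as $d_k$ at every $x$; thus the sign is propagated through the iteration.

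It then remains only to check the base case $d_1(x)>0$. Evaluating \eqref{eq:recur3} at $k=0$ with $z_0=x$ and $z_{-1}=1$ gives $z_1(x)=1+\tfrac{x^2-1}{\alpha x}$, so that
$$d_1(x)=z_1(x)-x=(x-1)\,\frac{x(1-\alpha)+1}{\alpha x},$$
which is manifestly positive for $x>1$. Combined with the sign-propagation above, induction on $k$ yields $d_k(x)>0$ for all $k\geqslant1$ and $x>1$, which is exactly the claimed inequality after undoing the reparametrization.

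I do not expect a genuine analytic obstacle here: the entire difficulty is organizational, namely recognizing that the awkward composition defining $f_k$ collapses to the clean statement that $\{z_k(x)\}$ is strictly increasing in $k$ for $x>1$. The only points needing care are the correct bookkeeping of indices in passing between $f_k$ and $z_k$, and confirming that the multiplicative factor $(1-\alpha)+\tfrac1{z_k z_{k-1}}$ stays positive for all admissible $\alpha$ — both of which follow immediately from the bounds $z_k(x)\geqslant x\geqslant1$ established in lemma \ref{lemma:zk}.
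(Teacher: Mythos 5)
Your proof is correct, and it takes a genuinely different (though closely related) route from the paper's. The paper proves the lemma by induction entirely in the Eulerian picture: assuming $f_n(1/y)>y$, it uses the monotonicity of $t\mapsto t-\frac{1-\alpha}{t}$ to bound $F_n(y)>\alpha/y$, and then reads off $f_{n+1}(x)>1/x$ from the defining relation $F_n(f_{n+1}(x))=\alpha x$. You instead pass to the Lagrangian picture via $f_k=1/\bigl(z_{k-1}\circ z_k^{-1}\bigr)$ and prove the equivalent statement $z_k(w)>z_{k-1}(w)$ for $w>1$. Notably, the paper presents that monotonicity-in-$k$ statement, eq.\ \eqref{z:mono:k}, as a \emph{consequence} of lemma \ref{lemma:fklowerbound}, and your key identity $\alpha\,d_{k+1}=d_k\bigl[(1-\alpha)+\tfrac{1}{z_k z_{k-1}}\bigr]$ is precisely \eqref{ratio:difference}, which the paper only invokes later in the convergence analysis of section \ref{sec:limit}; so you have reversed the logical order while keeping everything consistent. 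There is no circularity: your argument relies only on lemma \ref{lemma:zk} (well-definedness, $z_k(1)=1$, strict monotonicity in $x$, $z_k\geqslant x\geqslant 1$) and on the reparametrization established in the proof of lemma \ref{lemma:fk}, both of which precede this lemma and do not use it. The two inductive steps are in fact algebraically equivalent under the dictionary $y=1/z_n$, but your version makes the sign propagation completely mechanical and yields the explicit base case $d_1(x)=(x-1)\frac{(1-\alpha)x+1}{\alpha x}>0$, at the mild cost of having to translate the statement back and forth through $z_k^{-1}$.
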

\begin{proof}
	For $f_0(x)$, this is obvious from \eqref{f:recursion0}. By induction, suppose the lemma holds for $k \leqslant n$. For $x > 1$, we know that $y = 1$ is not the solution to $F_n(y) = \alpha x$ since $F_n(1) = \alpha$. Then, for $y < 1$, we have:
	\begin{equation}
		\begin{split}
			F_n(y) > \frac{1}{y} - y - \frac{1 - \alpha}{y} + y = \frac{\alpha}{y},
		\end{split}
	\end{equation}
	which implies that the solution to $F_n(y) = \alpha x$ must satisfy
	\begin{equation}
		\begin{split}
			y > \frac{1}{x}.
		\end{split}
	\end{equation}
	Hence, we get $f_{n+1}(x) > \frac{1}{x}$ for all $x > 1$.
\end{proof}

Now proposition \ref{prop:fsequence}-(a) is proven.

As a consequence of lemma \ref{lemma:fk}, we can remove ``max" in the definitions of $\left\{z_k\right\}$ in \eqref{def:z}, because the solution to $f_k(y) = \frac{1}{z_{k-1}}$ is now unique for any $k\geqslant1$.

As a direct consequence of lemma \ref{lemma:fklowerbound}, we have:
\begin{equation}\label{z:mono:k}
	\begin{split}
		z_{k+1}(x) \geqslant z_k(x), \quad \forall k \geqslant 0 \ \text{and}\ x \geqslant 1\,,
	\end{split}
\end{equation}
i.e., with fixed initial conditions, the sequence $\{z_k\}$ is monotonically increasing. This property will be useful later when we examine the limit of the function sequence $\{f_k\}$.

Next, we prove proposition \ref{prop:fsequence}-(b), which is summarized in the following lemma:

\begin{lemma}\label{lemma:FFk}
	For all $k \geqslant 0$, $F_k(y)$ is a strictly decreasing smooth function from $(0, 1]$ to $[\alpha, \infty)$. Additionally, we have $F_k(0) = \infty$ and $F_k(1) = \alpha$.
\end{lemma}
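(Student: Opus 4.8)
\textbf{Proof plan for lemma \ref{lemma:FFk}.} The plan is to avoid differentiating $F_k$ directly—where the sign of $F_k'$ is obscured by the a priori unknown size of $f_k'$—and instead to pass to the Lagrangian/fluid variables, in which $F_k$ linearizes into a single shifted particle position. Concretely, I will first establish the key identity
\[
F_k\!\left(\tfrac{1}{z_k(x)}\right)=\alpha\,z_{k+1}(x),\qquad x\geqslant 1,\ k\geqslant 0,
\]
and then read off all four assertions of the lemma from the monotonicity and boundary behaviour of $z_k$ and $z_{k+1}$ already proven in lemma \ref{lemma:zk}.

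To prove the identity, fix $k\geqslant 0$ and $y\in(0,1]$. Since $z_k$ is a smooth strictly increasing bijection of $[1,\infty)$ (lemma \ref{lemma:zk}), I may write $1/y=z_k(x)$ for the unique $x=z_k^{-1}(1/y)\geqslant 1$, so that $y=1/z_k(x)$. The reparametrization \eqref{fk:reparametrization}—equivalently the formula $f_k(X)=1/\big(z_{k-1}\circ z_k^{-1}(X)\big)$ from the proof of lemma \ref{lemma:fk}—gives $f_k(1/y)=f_k(z_k(x))=1/z_{k-1}(x)$; this holds also for $k=0$ since there $f_0\equiv 1$ and $z_{-1}\equiv 1$. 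Substituting into the definition \eqref{def:Fk} yields
\[
F_k(y)=z_k(x)-\frac{1}{z_k(x)}-(1-\alpha)\,z_{k-1}(x)+\frac{1}{z_{k-1}(x)},
\]
and by the second-order difference equation \eqref{eq:recur3} the right-hand side is exactly $\alpha\,z_{k+1}(x)$. This proves the identity.

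With the identity in hand, the lemma follows. By lemma \ref{lemma:zk}, the map $x\mapsto y=1/z_k(x)$ is a smooth strictly decreasing bijection from $[1,\infty)$ onto $(0,1]$, and its inverse $y\mapsto z_k^{-1}(1/y)$ is smooth (using $z_k'\geqslant 1>0$). Hence $F_k(y)=\alpha\,z_{k+1}\!\big(z_k^{-1}(1/y)\big)$ is a composition of smooth strictly monotone maps: since $z_{k+1}$ is strictly increasing while $y\mapsto z_k^{-1}(1/y)$ is strictly decreasing, $F_k$ is smooth and strictly decreasing on $(0,1]$. The endpoint values come from $z_{k+1}(1)=1$ and $z_{k+1}(\infty)=\infty$: at $y=1$ we have $x=1$, so $F_k(1)=\alpha z_{k+1}(1)=\alpha$; and as $y\to 0^+$ we have $x\to\infty$, so $F_k(0^+)=\alpha z_{k+1}(\infty)=\infty$. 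Together with continuity and strict monotonicity, this shows $F_k$ maps $(0,1]$ bijectively onto $[\alpha,\infty)$, completing the proof.

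The only genuine obstacle is spotting the linearizing substitution $y=1/z_k(x)$; once the combination \eqref{eq:recur3} is recognized inside the transformed $F_k$, the whole statement reduces to lemma \ref{lemma:zk}. As an independent consistency check, the lower bound $F_k(y)>\alpha/y$ established in lemma \ref{lemma:fklowerbound} already forces $F_k(0^+)=\infty$, in agreement with the identity above.
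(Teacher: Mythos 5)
Your proof is correct and is essentially the paper's argument in Lagrangian coordinates: the paper reads the identity $F_k(y)=\alpha f_{k+1}^{-1}(y)$ straight off the definition \eqref{f:recursion0} of $f_{k+1}$ and then cites lemma \ref{lemma:fk}, whereas you rederive the same identity in the form $F_k\left(1/z_k(x)\right)=\alpha z_{k+1}(x)$ from the conservation law \eqref{eq:recur3} and cite lemma \ref{lemma:zk} instead. Since $f_{k+1}^{-1}(y)=z_{k+1}\left(z_k^{-1}(1/y)\right)$ by the reparametrization \eqref{fk:reparametrization}, the two identities coincide, and the remaining monotonicity, smoothness and endpoint arguments are the same.
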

\begin{proof}
	Using \eqref{def:Fk} and lemma \ref{lemma:fk}, for $k \geqslant 0$, we can take $y = f_{k+1}(x)$ and rewrite $F_k$ as:
	\begin{equation}
		\begin{split}
			F_k(y) = \alpha x = \alpha f^{-1}_{k+1}(y).
		\end{split}
	\end{equation}
	Here, $f^{-1}_{k+1}$ denotes the inverse of $f_{k+1}$, which is a strictly decreasing, one-to-one, smooth function from $(0, 1]$ to $[1, \infty)$, as a consequence of lemma \ref{lemma:fk}. It follows that $F_k(y)$ is a strictly decreasing smooth function on $(0, 1]$.
	
	From lemma \ref{lemma:fk}, we deduce:
	\begin{equation}
		\begin{split}
			F_k(0) = \alpha f_{k+1}^{-1}(0) = \infty,
		\end{split}
	\end{equation}
	and
	\begin{equation}
		\begin{split}
			F_k(1) = \alpha f_{k+1}^{-1}(1) = \alpha.
		\end{split}
	\end{equation}
	This completes the proof of the lemma.
\end{proof}

Now proposition \ref{prop:fsequence}-(b) has been proven. Using lemma \ref{lemma:FFk}, we can also remove ``max'' from the definition of $\{f_k(x)\}$ in \eqref{f:recursion0}.

It remains to prove proposition \ref{prop:fsequence}:
\begin{lemma}\label{lemma:fkmono}
	We have $f_{k+1}(x) < f_{k}(x)$ for any $x > 1$.
\end{lemma}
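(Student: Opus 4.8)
The plan is to prove the lemma by induction on $k$, reducing the comparison $f_{k+1}<f_k$ to a pointwise comparison of the auxiliary functions $F_k$ and $F_{k-1}$, which in turn reduces to the comparison $f_k<f_{k-1}$ one index lower. First I would record the defining relations coming from \eqref{f:recursion0} and lemma \ref{lemma:FFk}: for $x>1$ the value $y=f_{k+1}(x)$ is the unique solution in $(0,1)$ of $F_k(y)=\alpha x$, and $y=f_k(x)$ is the unique solution of $F_{k-1}(y)=\alpha x$ (these solutions lie in the open interval since $F_j(1)=\alpha<\alpha x$ and $F_j$ is strictly decreasing). Because both $F_k$ and $F_{k-1}$ are strictly decreasing on $(0,1]$, the inequality $f_{k+1}(x)<f_k(x)$ is equivalent to the pointwise inequality $F_k(y)<F_{k-1}(y)$ on $(0,1)$: assuming the latter and evaluating $F_{k-1}$ at $f_{k+1}(x)$ gives $F_{k-1}(f_{k+1}(x))>F_k(f_{k+1}(x))=\alpha x=F_{k-1}(f_k(x))$, and monotonicity of $F_{k-1}$ forces $f_{k+1}(x)<f_k(x)$.

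Next I would compute the difference of the two $F$'s. Writing $g(t):=t-\frac{1-\alpha}{t}$, the definition \eqref{def:Fk} shows that the $\frac1y-y$ terms cancel, leaving, for $y\in(0,1)$,
\begin{equation*}
F_k(y)-F_{k-1}(y)=g\!\left(f_k(1/y)\right)-g\!\left(f_{k-1}(1/y)\right).
\end{equation*}
Since $\alpha\leqslant1$ we have $g'(t)=1+\frac{1-\alpha}{t^2}>0$, so $g$ is strictly increasing on $(0,\infty)$; the arguments $f_k(1/y),f_{k-1}(1/y)$ are positive, so this monotonicity applies. Consequently $F_k(y)<F_{k-1}(y)$ holds exactly when $f_k(1/y)<f_{k-1}(1/y)$, i.e.\ (setting $x=1/y>1$) exactly when $f_k(x)<f_{k-1}(x)$ for all $x>1$. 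Thus the statement ``$f_{k+1}<f_k$ on $(1,\infty)$'' is equivalent to the statement ``$f_k<f_{k-1}$ on $(1,\infty)$'' one level down.

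This equivalence immediately organizes the induction. The base case is $f_1(x)<f_0(x)=1$ for $x>1$, which is immediate from lemma \ref{lemma:fk}: $f_1$ is strictly decreasing with $f_1(1)=1$, so $f_1(x)<1$ for $x>1$. The inductive step then propagates $f_k<f_{k-1}$ on $(1,\infty)$ upward to $f_{k+1}<f_k$ on $(1,\infty)$ for every $k\geqslant1$. I do not expect a serious obstacle here; the only points requiring care are confirming that the solutions $f_{k+1}(x),f_k(x)$ genuinely lie in the open interval $(0,1)$ for $x>1$ (guaranteed by $F_j(1)=\alpha$ and lemma \ref{lemma:FFk}) so that the strict-monotonicity arguments apply, and verifying the sign of $g'$, which fails only if one were to take $\alpha>1$. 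Both are routine, so the essential content is simply the observation that the recursion is monotone under the single change of variable $x\mapsto1/y$ together with the monotone reshuffling by $g$.
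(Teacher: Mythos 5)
Your proof is correct and is essentially the paper's argument: the paper's inductive step likewise rewrites the recursion as $F_n(f_{n+1}(x))=\alpha x=F_{n-1}(f_n(x))$, applies the induction hypothesis at the point $1/f_{n+1}(x)>1$, uses the monotonicity of $t\mapsto t-\frac{1-\alpha}{t}$ to deduce $F_{n-1}(f_{n+1}(x))>F_{n-1}(f_n(x))$, and concludes via the strict monotonicity of $F_{n-1}$ from lemma \ref{lemma:FFk}. The only cosmetic differences are that you phrase the comparison as a pointwise inequality $F_k<F_{k-1}$ on all of $(0,1)$ before specializing, and you handle the base case via the monotonicity of $f_1$ rather than its explicit formula.
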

\begin{proof}
	For $k = 0$, we have $f_{1}(x) < f_{0}(x)$ ($x > 1$) by the explicit forms:
	\begin{equation}
		\begin{split}
			f_0(x) = 1,\quad f_1(x) = \frac{1}{2} \left( \sqrt{\alpha^2 (x-1)^2 + 4} - \alpha (x-1) \right).
		\end{split}
	\end{equation}
	We prove the case for $k \geqslant 1$ by induction. Suppose the lemma holds for $k \leqslant n-1$. From \eqref{f:recursion0} and \eqref{def:Fk}, we have
	\begin{equation}\label{eq:diffrecur}
		\begin{split}
			&\frac{1}{f_{n+1}(x)} - f_{n+1}(x) - \frac{1 - \alpha}{f_n\left( \frac{1}{f_{n+1}(x)} \right)} + f_n\left( \frac{1}{f_{n+1}(x)} \right) \\
			&= \frac{1}{f_n(x)} - f_n(x) - \frac{1 - \alpha}{f_{n-1}\left( \frac{1}{f_n(x)} \right)} + f_{n-1}\left( \frac{1}{f_n(x)} \right).
		\end{split}
	\end{equation}
	Consider $x > 1$. By lemma \ref{lemma:fklowerbound}, we have $\frac{1}{f_{n+1}(x)} > 1$. Then by the induction, we have
	\begin{equation}
		\begin{split}
			f_n\left( \frac{1}{f_{n+1}(x)} \right) < f_{n-1}\left( \frac{1}{f_{n+1}(x)} \right).
		\end{split}
	\end{equation}
	This inequality, combined with \eqref{eq:diffrecur}, implies that
	\begin{equation}
		\begin{split}
			&\frac{1}{f_{n+1}(x)} - f_{n+1}(x) - \frac{1 - \alpha}{f_{n-1}\left( \frac{1}{f_{n+1}(x)} \right)} + f_{n-1}\left( \frac{1}{f_{n+1}(x)} \right) \\
			&> \frac{1}{f_n(x)} - f_n(x) - \frac{1 - \alpha}{f_{n-1}\left( \frac{1}{f_n(x)} \right)} + f_{n-1}\left( \frac{1}{f_n(x)} \right),
		\end{split}
	\end{equation}
	for any $x > 1$. Using \eqref{def:Fk}, the above inequality can be rewritten as:
	\begin{equation}
		\begin{split}
			F_{n-1}(f_{n+1}(x)) > F_{n-1}(f_n(x)), \quad (x > 1).
		\end{split}
	\end{equation}
	Then, by lemma \ref{lemma:FFk}, we get $f_{n+1}(x) < f_n(x)$ for all $x > 1$.
\end{proof}

This completes the proof of proposition \ref{prop:fsequence}-(a) (b) (c). Part (a) follows from lemmas \ref{lemma:fk} and \ref{lemma:fklowerbound}, part (b) follows from lemma \ref{lemma:FFk}, and part (c) follows from lemma \ref{lemma:fkmono}.

\subsubsection{The limit of the sequence: proof of proposition \ref{prop:fsequence}-(d)}\label{sec:limit}
Proposition \ref{prop:fsequence}-(a) and (c) imply that the function sequence $\left\{ f_k \right\}$ has a pointwise limit \( f_\infty \). In this section, we aim to find the explicit expression of this limit.

As demonstrated in the previous section, we can remove ``max" in the definition \eqref{f:recursion0} by the monotonicity of \( F_k \). Thus, \eqref{f:recursion0} can be rewritten as:
\begin{equation}\label{eq:recur2}
	\begin{split}
		&f_0(x) \equiv 1, \\
		&\alpha x + f_{k+1}(x) - \frac{1}{f_{k+1}(x)} + \frac{1 - \alpha}{f_k\left( \frac{1}{f_{k+1}(x)} \right)} - f_k\left( \frac{1}{f_{k+1}(x)} \right) = 0, \quad k \geqslant 0. \\
	\end{split}
\end{equation}
The limit function, denoted as \( f_\infty(x) \), satisfies the fixed-point equation of \eqref{eq:recur2}:
\begin{equation}\label{f:fp}
	\alpha x + f_{\infty}(x) - \frac{1}{f_{\infty}(x)} + \frac{1 - \alpha}{f_\infty\left( \frac{1}{f_{\infty}(x)} \right)} - f_\infty\left( \frac{1}{f_{\infty}(x)} \right) = 0, \quad x \geqslant 1.
\end{equation}
The function \( f(x) = \frac{1}{x} \) solves the fixed-point equation \eqref{f:fp}. However, this does not necessarily mean it is the correct limit of our iteration process, as there might be other solutions to \eqref{f:fp}. Intuitively, if a fixed point exists that is larger than \( \frac{1}{x} \), it is more likely to be the limit function since the sequence \( f_{k}(x) \) is monotonically decreasing in \( k \). Indeed, we find that there are two solutions to \eqref{f:fp}, given in \eqref{twosolutions}. Both solve \eqref{f:fp} and satisfy several properties expected of the limit of our iteration procedure.\footnote{In this paper, we do not prove that \eqref{twosolutions} are the only solutions to the fixed-point equation \eqref{f:fp} under certain constraints. However, we will show that these two solutions are the relevant ones for describing the limit of the iteration process \eqref{eq:recur2}.} So it requires further analysis to decide which should be the correct limit. This will be done in this subsection.
\newline
\newline \textbf{Case 1: $\alpha=1$}
\newline
\newline When $\alpha = 1$, the conservation law \eqref{conservation} simplifies to:
\begin{equation}
	z_{k+1} + \frac{1}{z_k} = x + 1.
\end{equation}
Given that $z_0 = x$, the solution to this difference equation is:
\begin{equation}
	\begin{split}
		z_n(x) = &\Bigg[\left((x+2)\sqrt{x-1} + x\sqrt{x+3}\right) \left(\sqrt{x+3}+\sqrt{x-1}\right)^{2n} \\
		&- \left((x+2)\sqrt{x-1} - x\sqrt{x+3}\right) \left(\sqrt{x+3} - \sqrt{x-1}\right)^{2n}\Bigg] \\
		&\times \Bigg[\left(\sqrt{x-1} + \sqrt{x+3}\right)^{2n+1} + \left(\sqrt{x+3} - \sqrt{x-1}\right)^{2n+1} \Bigg]^{-1}.
	\end{split}
\end{equation}
It can be verified that $z_n(1) = 1$ and $z_n(\infty)=\infty$ for any $n\geqslant0$, as expected according to lemma \ref{lemma:fk}. In the limit $n \to \infty$ with fixed $x$, the function $z_n(x)$ converges to a finite limit:
\begin{equation}
	z_{\infty}(x) = \frac{(x+2)\sqrt{x-1} + x\sqrt{x+3}}{\sqrt{x-1} + \sqrt{x+3}}.
\end{equation}
As $x$ increases from $1$ to $\infty$, $z_{\infty}(x)$ also ranges from $1$ to $\infty$. Returning to the Eulerian description, the above argument shows that the iteration process defined by \eqref{eq:recur2} converges to the pointwise limit:
\begin{equation}
	f_{\infty}(x) = \frac{1}{x} \quad (\alpha = 1).
\end{equation}
\newline
\newline \textbf{Case 2: $\frac{1}{2} < \alpha < 1$}
\newline
\newline When $\frac{1}{2} < \alpha < 1$, an analytic solution to \eqref{eq:recur3} is not readily available, but we can rewrite the equation in the following form:
\begin{equation}\label{ratio:difference}
	\frac{z_{k+1} - z_k}{z_k - z_{k-1}} = \frac{1 - \alpha + \frac{1}{z_k z_{k-1}}}{\alpha}.
\end{equation}
This expression indicates that the difference $z_{k+1} - z_k$ grows exponentially until $z_{k} > \frac{1}{\sqrt{2\alpha - 1}}$ at some iteration. By the monotonicity \eqref{z:mono:k} of the sequence $\left\{ z_k \right\}$, once $z_{k}$ exceeds this value, it remains greater than $\frac{1}{\sqrt{2\alpha - 1}}$, and the difference in the sequence starts to decay exponentially. Consequently, for any $x > 1$, the sequence $\{z_k\}$ converges to a limit determined by the conservation law \eqref{conservation}:
\begin{equation}
	z_\infty(x) = \frac{\alpha (x+1) + \sqrt{\alpha^2 (x+1)^2 - 4(2\alpha - 1)}}{4 \alpha - 2}.
\end{equation}
We observe that the function $z_{\infty}(x)$ is monotonically increasing in $x$, with the lower and upper bounds given by $\lim\limits_{x\rightarrow1^+}z_{\infty}(x) = \frac{1}{2\alpha - 1}$ and $z_{\infty}(\infty) = \infty$. Returning to the Eulerian description, we conclude that:
\begin{equation}\label{firstsolution}
	f_{\infty}(x) = \frac{1}{x} \quad \left( x \geqslant \frac{1}{2\alpha - 1} \right).
\end{equation}

Now, we determine the function $f_{\infty}(x)$ in the range $1 \leqslant x < \frac{1}{2\alpha - 1}$. From the previous argument, we know that for any fluid particle starting from $z_0 = x > 1$, it eventually ends at $z_\infty(x) > \frac{1}{2\alpha - 1}$. To capture the particles that end at $x \leqslant \frac{1}{2\alpha - 1}$, we need to adjust the initial condition at each iteration step.

Specifically, we take the limit \(k \to \infty\) with \(z_k(x_k) = x \in \left[1, \frac{1}{2\alpha-1}\right)\) fixed. Since \(z_k(x)\) is monotonically increasing in both \(x\) and \(k\), by lemma \ref{lemma:zk} and \eqref{z:mono:k}, this tuning leads to a monotonically decreasing sequence of \(x_k\):
\begin{equation}
	\begin{split}
		x_1 \geqslant x_2 \geqslant x_3 \geqslant \dots
	\end{split}
\end{equation}
Additionally, \(x_k \geqslant 1\), so the sequence \(\left\{x_k\right\}\) has a limit \(x_\infty\). We claim that \(x_\infty = 1\). Otherwise, suppose \(x_\infty > 1\), we then have
\begin{equation}\label{xk:contradiction}
	\begin{split}
		x = \lim\limits_{k\rightarrow\infty} z_{k}(x_k) \geqslant \lim\limits_{k\rightarrow\infty} z_{k}(x_\infty) \geqslant \frac{1}{2\alpha-1}.
	\end{split}
\end{equation}
In the first step, we use \(z_k(x_k) = x\), in the second step, we apply \(x_k \geqslant x_\infty\) and lemma \ref{lemma:zk}, and in the final step, we use the fact that for any initial condition \(z_0 = x > 1\), the limit satisfies \(z_\infty(x) \geqslant \frac{1}{2\alpha-1}\), as demonstrated above.

This shows that \eqref{xk:contradiction} contradicts our assumption that \(x \in \left[1, \frac{1}{2\alpha-1}\right)\). Hence, we must have \(x_\infty = 1\).

Using \eqref{conservation} with the fine-tuned initial condition \(z_0 = x_k\) for \(Q_k\), we obtain:
\begin{equation}\label{conservation:tuning}
	\begin{split}
		\alpha x - (1-\alpha) z_{k-1}(x_k) + \frac{1}{z_{k-1}(x_k)} = \alpha(x_k + 1).
	\end{split}
\end{equation}
Recalling that \(f_{k}(z_{k}) = \frac{1}{z_{k-1}}\) for the fixed initial condition, and since we now fix \(z_k(x_k) = x\), we have:
\begin{equation}
	\begin{split}
		f_{k}(x) \equiv f_{k}(z_{k}(x_{k})) = \frac{1}{z_{k-1}(x_{k})}.
	\end{split}
\end{equation}
Taking the limit \(k \rightarrow \infty\), we find:
\begin{equation}\label{limit:zk-1}
	\begin{split}
		\lim\limits_{k\rightarrow\infty} z_{k-1}(x_{k}) = \frac{1}{f_\infty(x)}.
	\end{split}
\end{equation}
Now, taking the limit \(k \rightarrow \infty\) in \eqref{conservation:tuning}, and using \(x_k \rightarrow 1\) along with \eqref{limit:zk-1}, we get:
\begin{equation}
	\alpha x - \frac{1 - \alpha}{f_\infty(x)} + f_\infty(x) = 2\alpha, \quad x \in \left[1, \frac{1}{2\alpha-1}\right).
\end{equation}
This equation has two solutions for \(f_\infty(x)\), but one solution is always below \(\frac{1}{x}\), which contradicts lemma \ref{lemma:fklowerbound}, so we discard it. The relevant solution is:
\begin{equation}\label{secondsolution}
	f_\infty(x) = \frac{1}{2} \left(\alpha(2 - x) + \sqrt{\alpha^2(2 - x)^2 + 4 (1 - \alpha)}\right).
\end{equation}
We can explicitly verify that \eqref{secondsolution} satisfies the fixed-point equation \eqref{f:fp}, and it yields:
\begin{equation}
	f_\infty(1) = 1, \quad f_\infty\left(\frac{1}{2\alpha - 1}\right) = 2\alpha - 1.
\end{equation}
Therefore, the two solutions \eqref{firstsolution} and \eqref{secondsolution} merge continuously at \(x = \frac{1}{2\alpha-1}\), resulting in:
\begin{equation}\label{case2:solution}
	f_\infty(x) = 
	\begin{cases}
		\dfrac{1}{x}, & x \geqslant \frac{1}{2\alpha - 1}, \\
		& \\
		\dfrac{1}{2} \left(\alpha(2 - x) + \sqrt{\alpha^2(2 - x)^2 + 4 (1 - \alpha)}\right), & 1 \leqslant x \leqslant \frac{1}{2\alpha - 1}. \\ 
	\end{cases}
\end{equation}
We confirm \eqref{case2:solution} numerically by performing many iterations. See figure \ref{fig:case2numerics}.
\begin{figure}[ht]
	\begin{minipage}{0.48\textwidth}
		\centering
		\includegraphics[width=\linewidth]{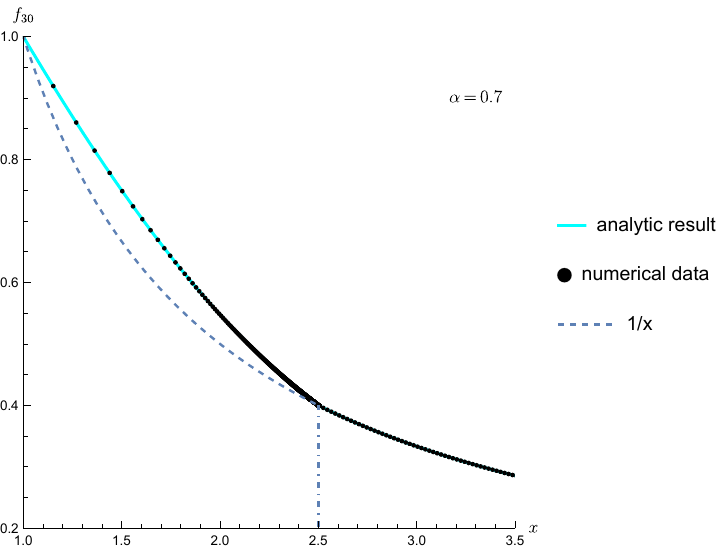}
	\end{minipage}\hfill
	\begin{minipage}{0.48\textwidth}
		\centering
		\includegraphics[width=\linewidth]{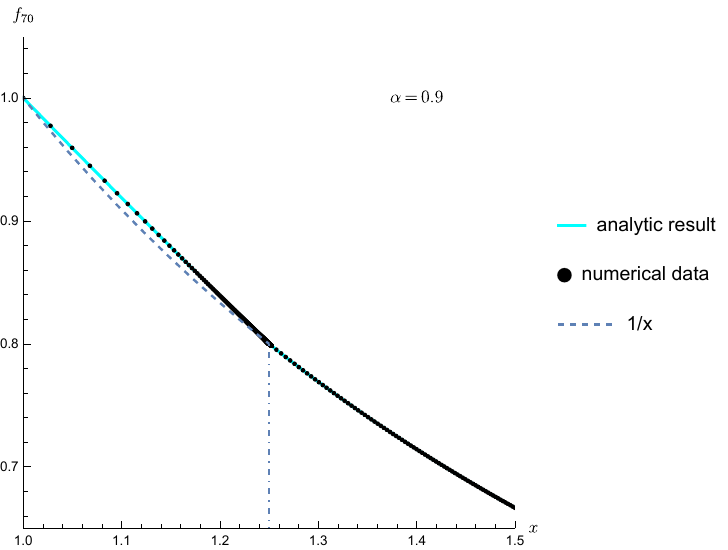}
	\end{minipage}
	\vspace{0.5cm} 
	\begin{minipage}{0.48\textwidth}
		\centering
		\vspace{0.5cm} 
		\includegraphics[width=\linewidth]{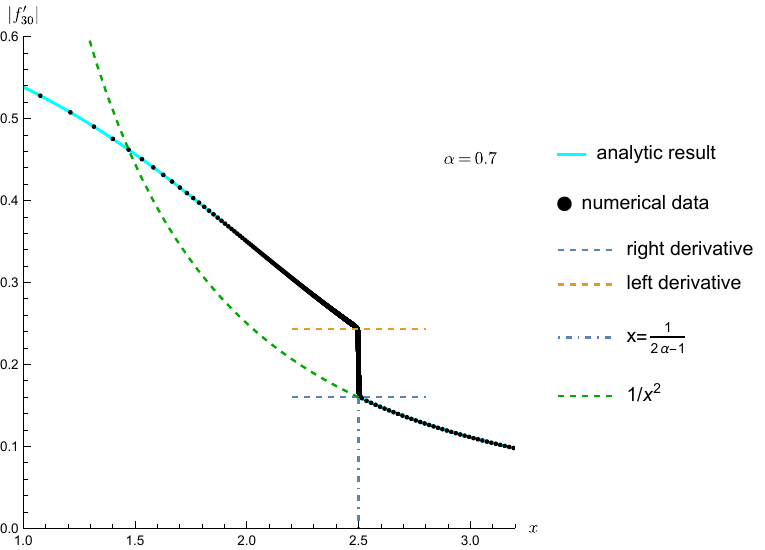}
	\end{minipage}\hfill
	\begin{minipage}{0.48\textwidth}
		\centering
		\vspace{0.5cm} 
		\includegraphics[width=\linewidth]{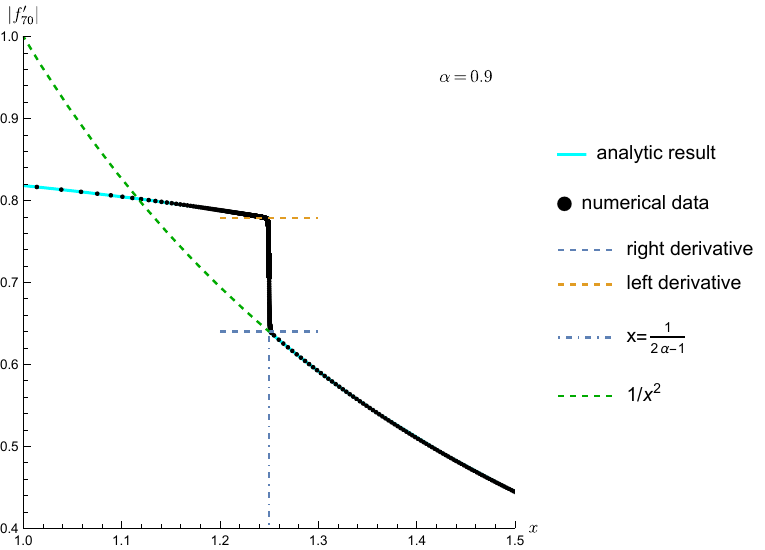}
	\end{minipage}
	\caption{Numerical tests of $f_\infty(x)$ (first row) and its derivative (second row) at $\alpha=0.7$ (30 iterations) and $\alpha=0.9$ (70 iterations). The black dots are numerical data after finite iterations. The light-blue curves are the analytic predictions of $f_\infty(x)$ given in \eqref{case2:solution}. The vertical blue dashed lines are the analytic prediction of the transition point $x_*=\frac{1}{2\alpha-1}$. In the second row, the horizonal dashed lines are the analytic prediction of the left- and right-derivatives at the transition point.}
	\label{fig:case2numerics}
\end{figure}
\newline
\newline
\noindent\textbf{Case 3: $0 < \alpha \leqslant \frac{1}{2}$}
\newline
\newline When $0 < \alpha \leqslant \frac{1}{2}$, it follows from \eqref{ratio:difference} that the difference $z_{k+1} - z_k$ never decreases:
\begin{equation}
	\frac{z_{k+1} - z_k}{z_k - z_{k-1}} = \frac{1 - \alpha + \frac{1}{z_k z_{k-1}}}{\alpha} \geqslant \frac{1 - \alpha}{\alpha} \geqslant 1.
\end{equation}
Therefore, the fluid particle starting from any fixed $x > 1$ eventually diverges to infinity:
\begin{equation}
	z_{\infty}(x) = \infty, \quad \forall x > 1 \ \text{fixed}.
\end{equation}
This situation is very similar to $f_\infty(x)$ in the regime $1 \leqslant x \leqslant \frac{1}{2\alpha - 1}$ when $\frac{1}{2} < \alpha < 1$: we need to take the limit $k \rightarrow \infty$ with $z_k(x_k) = x$ fixed. Following the same approach, we obtain the limit function:
\begin{equation}\label{case3:solution}
	f_\infty(x) = \frac{1}{2} \left(\alpha(2 - x) + \sqrt{\alpha^2(2 - x)^2 + 4 (1 - \alpha)}\right) \quad (x \geqslant 1).
\end{equation}
We also confirm \eqref{case3:solution} numerically by performing many iterations. See figure \ref{fig:case3numerics}.
\begin{figure}[ht]
	\begin{minipage}{0.26\textwidth}
		\centering
		\includegraphics[width=\linewidth]{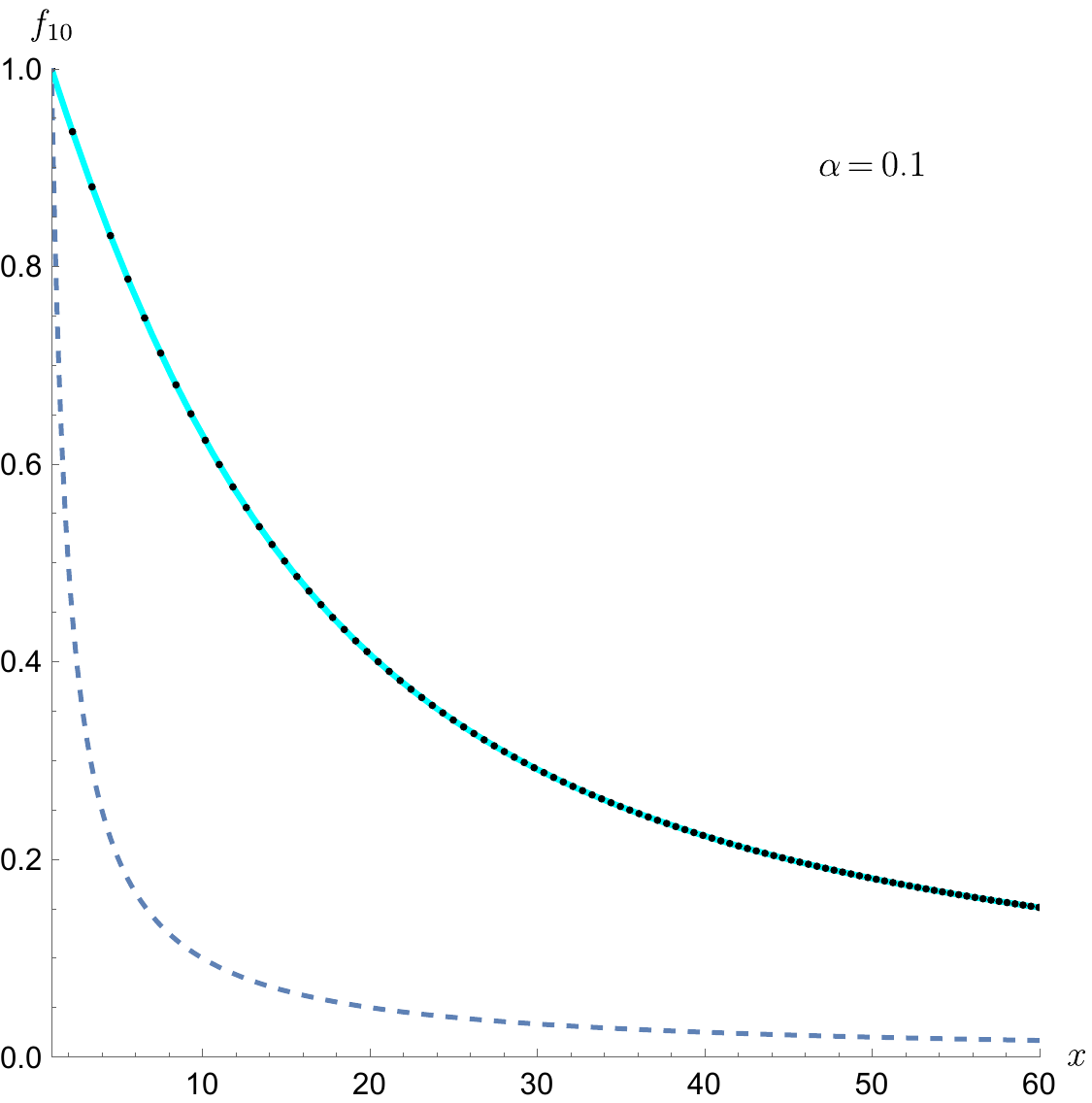}
	\end{minipage}\hfill
	\begin{minipage}{0.26\textwidth}
		\centering
		\includegraphics[width=\linewidth]{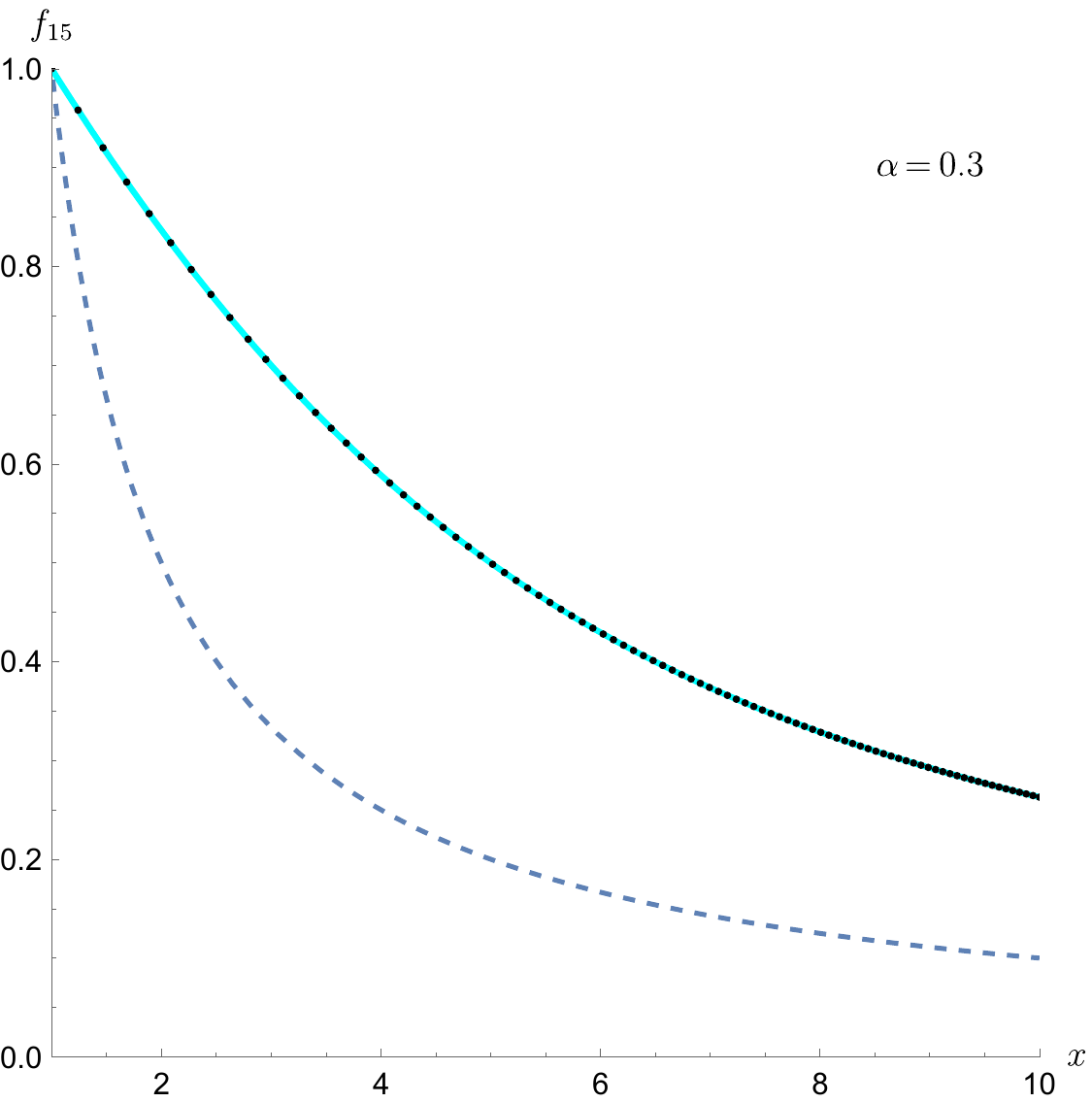}
	\end{minipage}\hfill
	\begin{minipage}{0.34\textwidth}
		\centering
		\includegraphics[width=\linewidth]{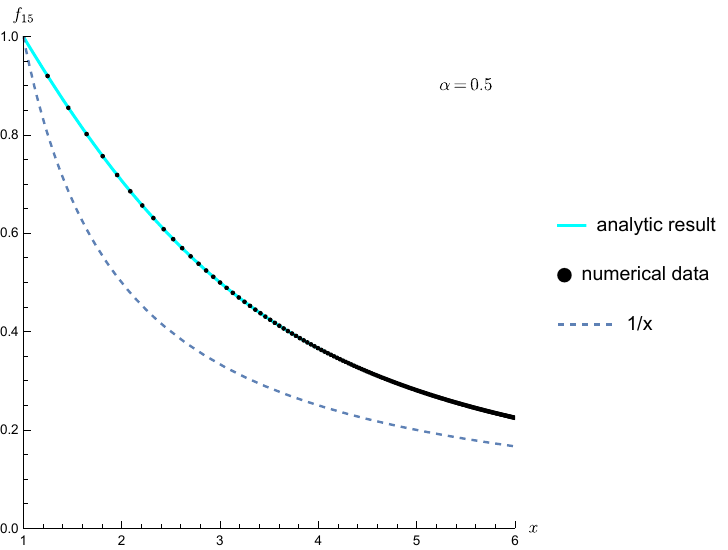}
	\end{minipage}
	\caption{Numerical tests of $f_\infty(x)$ at $\alpha=0.1$ (10 iterations), $\alpha=0.3$ (15 iterations) and $\alpha=0.5$ (15 iterations). The black dots are numerical data after finite iterations. The light-blue curves are the analytic predictions of $f_\infty(x)$ given in \eqref{case3:solution}.}
	\label{fig:case3numerics}
\end{figure}

\subsection{Step 2: proof by induction}\label{section:derivationiteration}
Having established the function sequence $\left\{f_k\right\}$, now we are ready to finish the proof of lemma \ref{lemma:functionbound}.

For $(x,y)$ in the regime $x,y>1$, the inequality \eqref{inequality:general} trivially holds by choosing
\begin{equation}
	\begin{split}
		N=1,\quad x_1=x,\quad y_1=y.
	\end{split}
\end{equation}
For $(x,y)$ in $\Omega_\alpha\backslash\{x,y>1\}$, we show inequality \eqref{inequality:general} using the following inductive argument. 
\begin{lemma}\label{lemma:iteration}
	Let $\alpha \in (0,1]$ be fixed. Suppose the function satisfies all the assumptions in section \ref{section:lemma} and suppose inequality \eqref{inequality:general} holds for $(x,y)$ in the regime
	\begin{equation}\label{def:Omegak}
		\Omega_\alpha^{(k)} = \left\{(x,y) \,\Big{|}\, y > 1,\ x > f_k(y);\ {\rm or}\ x > 1,\ y > f_k(x) \right\},
	\end{equation}
	with $N\leqslant N_k$. Here the function $f_k(x)$ was defined in \eqref{f:recursion0}.
	
	Then for each $(x,y)\in\Omega_\alpha^{(k+1)}$, the inequality \eqref{inequality:general} holds with some $N\leqslant N_k+1$.
\end{lemma}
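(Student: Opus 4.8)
The plan is to prove Lemma \ref{lemma:iteration} as a single ``peeling'' step: starting from the bound on $\Omega_\alpha^{(k)}$, I perform exactly one update of the algorithm in figure \ref{fig:algorithmxy} and then invoke the induction hypothesis. Since $f_{k+1}<f_k$ on $(1,\infty)$ by Proposition \ref{prop:fsequence}-(c), we have $\Omega_\alpha^{(k)}\subset\Omega_\alpha^{(k+1)}$, so for $(x,y)\in\Omega_\alpha^{(k)}$ the claim is immediate with $N\leqslant N_k$. It therefore suffices to treat $(x,y)\in\Omega_\alpha^{(k+1)}\setminus\Omega_\alpha^{(k)}$. First I would note that any such point has a coordinate below $1$: if $x,y\geqslant1$ then $y>f_k(x)$ (since $f_k\leqslant 1$ with equality only at argument $1$, by Lemma \ref{lemma:fk}), so $(x,y)\in\Omega_\alpha^{(k)}$, a contradiction. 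Because the hypotheses \eqref{g:modinv}--\eqref{twistprop}, the domains \eqref{def:Omegak}, and $f_k$ are all symmetric under $x\leftrightarrow y$, I may assume $x<1<y$; then membership in $\Omega_\alpha^{(k+1)}$ forces $x>f_{k+1}(y)$, while non-membership in $\Omega_\alpha^{(k)}$ gives $x\leqslant f_k(y)<1$.

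The analytic core is the one-step inequality $g_1(x,y)\leqslant g(x',y')$, where $(x',y')$ is the image of $(x,y)$ under the $x\leqslant1$ branch of figure \ref{fig:algorithmxy}, namely $x'=1/x$ and $y'$ the positive root of $y'-\tfrac{1-\alpha}{y'}=\alpha y+x-\tfrac1x$ (which reduces to $y'=y+x-\tfrac1x$ when $\alpha=1$). Setting $\tilde y:=1/y'$, I would chain the twist-gap decay \eqref{twistprop} in the second slot, positivity $g\geqslant g_1$ from \eqref{positivity}, and modular invariance \eqref{g:modinv}:
\begin{equation*}
	g_1(x,y)\leqslant e^{-\alpha\gamma(y-\tilde y)}\,g_1(x,\tilde y)\leqslant e^{-\alpha\gamma(y-\tilde y)}\,g(x,\tilde y)=e^{\gamma\left(\frac1x+\frac1{\tilde y}-x-\tilde y\right)-\alpha\gamma(y-\tilde y)}\,g\!\left(\tfrac1x,\tfrac1{\tilde y}\right).
\end{equation*}
Since $\tfrac1{\tilde y}=y'$, the total exponent equals $\gamma\bigl[\tfrac1x-x+y'-\tfrac{1-\alpha}{y'}-\alpha y\bigr]$, which vanishes precisely by the defining equation of $y'$. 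Hence $g_1(x,y)\leqslant g(1/x,y')=g(x',y')$, which is exactly the key inequality quoted in the introduction; the choice of the image point $y'$ is dictated by forcing this combined modular/twist exponent to zero.

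The geometric heart is to verify $(x',y')\in\Omega_\alpha^{(k)}$. As $x'=1/x>1$, this means $y'>f_k(x')=f_k(1/x)$. Because $t\mapsto t-\tfrac{1-\alpha}{t}$ is strictly increasing and $y'$ solves $y'-\tfrac{1-\alpha}{y'}=\alpha y+x-\tfrac1x$, the condition $y'>f_k(1/x)$ is equivalent to $\alpha y+x-\tfrac1x>f_k(1/x)-\tfrac{1-\alpha}{f_k(1/x)}$, i.e.\ to $\alpha y>F_k(x)$ with $F_k$ as in \eqref{def:Fk}. By Lemma \ref{lemma:FFk}, $F_k$ is strictly decreasing with $F_k(f_{k+1}(y))=\alpha y$ (the defining relation \eqref{f:recursion0} of $f_{k+1}$), so $\alpha y>F_k(x)\iff x>f_{k+1}(y)$ --- exactly the region we are in. Thus the landing condition collapses onto the domain condition, and the same monotonicity, together with the lower bound $f_k(s)>1/s$ of Lemma \ref{lemma:fklowerbound}, also gives $y'>f_k(1/x)>x>f_{k+1}(y)>1/y$, hence $\tilde y<y$ as was used above.

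Finally I would combine. The induction hypothesis applied to $(x',y')\in\Omega_\alpha^{(k)}$ yields a chain $(x'_i,y'_i)_{i=1}^{N'}$ with $N'\leqslant N_k$ and $g(x',y')\leqslant\sum_{i=1}^{N'}g_0(x'_i,y'_i)+g_1(x'_{N'},y'_{N'})$, whence
\begin{equation*}
	g(x,y)=g_0(x,y)+g_1(x,y)\leqslant g_0(x,y)+g(x',y')\leqslant g_0(x,y)+\sum_{i=1}^{N'}g_0(x'_i,y'_i)+g_1(x'_{N'},y'_{N'}),
\end{equation*}
which is \eqref{inequality:general} for $(x,y)$ with $N=N'+1\leqslant N_k+1$ and the relabeled points $(x_1,y_1)=(x,y)$, $(x_{i+1},y_{i+1})=(x'_i,y'_i)$, consistent with one extra pass of the algorithm. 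I expect the main obstacle to be the coordination in the previous two paragraphs: engineering the update $y'$ so that the modular/twist exponent cancels and, \emph{simultaneously}, the resulting landing inequality $y'>f_k(1/x)$ collapses, via the strict monotonicity of $F_k$, onto the very inequality $x>f_{k+1}(y)$ that defines $\Omega_\alpha^{(k+1)}$. The $x\leftrightarrow y$-symmetric branch ($y<1<x$) is handled identically, completing the induction step.
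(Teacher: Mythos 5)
Your argument reproduces the paper's proof in all essentials: the same split into $\Omega_\alpha^{(k+1)}\setminus\Omega_\alpha^{(k)}$, the same chain (twist-gap decay in the second slot, positivity $g\geqslant g_1$, modular invariance), the same choice of the auxiliary temperature by forcing the combined exponent $\gamma\bigl[\tfrac1x-x+y'-\tfrac{1-\alpha}{y'}-\alpha y\bigr]$ to vanish, and the same identification of the landing condition $y'>f_k(1/x)$ with $\alpha y>F_k(x)$, hence with $x>f_{k+1}(y)$ via the recursion \eqref{f:recursion0}. Your use of the strict monotonicity of $F_k$ (Lemma \ref{lemma:FFk}) to get $F_k(x)<F_k(f_{k+1}(y))=\alpha y$ directly is a slightly cleaner route than the paper's intermediate-value contradiction, and solving for $y'$ as the unique positive root of the increasing map $t\mapsto t-\tfrac{1-\alpha}{t}$ replaces the paper's IVT argument for $\kappa(x,y,y')=0$; these are cosmetic differences.

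There is, however, one genuine gap: your opening claim that every point of $\Omega_\alpha^{(k+1)}\setminus\Omega_\alpha^{(k)}$ has a coordinate strictly below $1$ is false for $k=0$. Since $f_0\equiv1$ (Lemma \ref{lemma:fk} only covers $k\geqslant1$), a point such as $(x,y)=(1,2)$ satisfies $x>f_1(y)$ but \emph{not} $x>f_0(y)=1$, so it lies in $\Omega_\alpha^{(1)}\setminus\Omega_\alpha^{(0)}$ with both coordinates $\geqslant1$. For such a point your construction breaks down concretely: the image $x'=1/x=1$ sits on the boundary of $\Omega_\alpha^{(0)}=\{x>1,\,y>1\}$, so $(x',y')\notin\Omega_\alpha^{(0)}$ and the induction hypothesis cannot be invoked. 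The paper treats this case separately: for $x=1$, $y>1$ one first uses the twist-gap property \eqref{twistprop} in the \emph{first} slot to replace $x$ by $x'=\tfrac{2}{\alpha(y-1)+2}$, which satisfies $f_1(y)<x'<1$, and then runs your argument at $(x',y)$. Your proof needs this (or an equivalent) extra step to cover the full domain $\Omega_\alpha^{(1)}$; everything else is sound.
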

\begin{proof}
	According to proposition \ref{prop:fsequence}-(a), the domain $\Omega_\alpha^{(k+1)}\backslash\Omega_\alpha^{(k)}$ consists of two disconnected components
	\begin{equation}
		\begin{split}
			{\rm I}.&\quad y>1,\quad f_{k+1}(y)<x\leqslant f_{k}(y), \\
			{\rm II}.&\quad x>1,\quad f_{k+1}(x)<y\leqslant f_{k}(x). \\
		\end{split}
	\end{equation}
	Since the arguments for two parts are similar, it suffices to show that \eqref{inequality:general} holds for component I, i.e.
	\begin{equation}\label{xyregime1}
		\begin{split}
			y>1,\quad f_{k+1}(y)<x\leqslant f_{k}(y).
		\end{split}
	\end{equation}
	We decompose the function $g(x,y)$ into
	$$g(x,y)=g_0(x,y)+g_1(x,y)$$
	as demonstrated in section \ref{section:lemma}. By proposition \ref{prop:fsequence}-(a), we have
	\begin{equation}
		\begin{split}
			x>f_{k+1}(y)>\frac{1}{y}.
		\end{split}
	\end{equation}
	This allows us to choose
	\begin{equation}
		\begin{split}
			x_1=x,\quad y_1=y
		\end{split}
	\end{equation}
	for inequality \eqref{inequality:general}. So it remains to show that $g_1(x,y)$ is bounded from above by the remaining terms in the right-hand side of \eqref{inequality:general}.
	
	Let us first consider the case of $k\geqslant1$ since the case of $k=0$ requires some extra justification.
	
	We introduce an auxiliary variable $y'$ within the range:
	\begin{equation}\label{range:y'}
		\begin{split}
			0<y'\leqslant y.
		\end{split}
	\end{equation}
	In this regime, $g_1(x,y)$ is bounded from above by:
	\begin{equation}\label{boundg1:step1}
		\begin{split}
			g_1(x,y)&\leqslant e^{-\alpha\gamma(y-y')}g_1(x,y') \\
			&\leqslant e^{-\alpha\gamma(y-y')}g(x,y') \\
			&=e^{-\alpha\gamma(y-y')} e^{\gamma\left(\frac{1}{x}+\frac{1}{y'}-x-y'\right)}g\left(\frac{1}{x},\frac{1}{y'}\right).
		\end{split}
	\end{equation}
	In the first line, we use assumption \eqref{twistprop}; in the second line, we use the assumption that both $g_0$ and $g_1$ are non-negative; in the last line, we use the modular invariance condition \eqref{g:modinv}.
	
	By proposition \ref{prop:fsequence}-(c), we have
	\begin{equation}
		\begin{split}
			\frac{1}{x}\geqslant\frac{1}{f_k(y)}>\frac{1}{f_0(y)}=1,\quad k\geqslant1
		\end{split}
	\end{equation}
	which is within the domain of $f_k$. We can choose
	\begin{equation}\label{choice:y'}
		\begin{split}
			\frac{1}{y'}>f_k\left(\frac{1}{x}\right).
		\end{split}
	\end{equation}
	Any $y'$ satisfying \eqref{choice:y'} will also satisfy \eqref{range:y'} because
	\begin{itemize}
		\item $f_k$ is positive, implying that $y'$ is also positive;
		\item $f_k\left(\frac{1}{x}\right)>x$ by proposition \ref{prop:fsequence}-(a), implying that $y'<\frac{1}{x}<y$. 
	\end{itemize}
	Now by the assumption of the lemma, the inequality \eqref{inequality:general} holds with the above choice of $y'$, this allows us to continue the estimate \eqref{boundg1:step1}:
	\begin{equation}\label{boundg1:step2}
		\begin{split}
			g_1(x,y)&\leqslant e^{-\alpha\gamma(y-y')} e^{\gamma\left(\frac{1}{x}+\frac{1}{y'}-x-y'\right)}g\left(\frac{1}{x},\frac{1}{y'}\right) \\
			&\leqslant e^{-\gamma\left(\alpha y+x-\frac{1}{x}+(1-\alpha)y'-\frac{1}{y'}\right)} \left[\sum\limits_{i=1}^{N}g_0\left(x_i,y_i\right)+g_1(x_{N},y_{N})\right]. \\
		\end{split}
	\end{equation}
	Here $x_i$'s and $y_i$'s are determined by $\frac{1}{x}$, $\frac{1}{y'}$ and $\alpha$, and $N\leqslant N_k$ (they are independent of $g$).
	
	It remains to show that for $(x,y)$ in regime \eqref{xyregime1}, we can always choose $y'$ within the range \eqref{choice:y'}, such that the exponential factor in \eqref{boundg1:step2} is equal to 1,which is equivalent to
	\begin{equation}\label{def:kappa}
		\begin{split}
			\kappa(x,y,y')\equiv\alpha y+x-\frac{1}{x}+(1-\alpha)y'-\frac{1}{y'}=0.
		\end{split}
	\end{equation}
	For $(x,y)$ in the regime \eqref{xyregime1}, we have:
	\begin{equation}
		\begin{split}
			\sup\limits_{y'<\frac{1}{f_k(1/x)}}\kappa(x,y,y')=\alpha y-F_k(x).
		\end{split}
	\end{equation}
	Here, we use the monotonicity of $\kappa(x,y,y')$ in $y'$ and the definition \eqref{def:Fk} of $F_k$.
	
	Since $f_{k+1}(y)<x\leqslant f_{k}(y)<1$ (for $k\geqslant1$), we must have $F_k(x)<\alpha y$. Otherwise, suppose $F_k(x)\geqslant\alpha y$. By proposition \ref{prop:fsequence}-(b), we know that $F_k(1)=\alpha$ and $F_k$ is continuous. So there exists $\xi\in[x,1]$ that solves $F_k(\xi)=\alpha y$. Then by the definition \eqref{f:recursion0} of $f_{k+1}$ we get $f_{k+1}(y)\geqslant\xi\geqslant x$, which contradicts our assumption that $x>f_{k+1}(y)$. Therefore, we conclude that $\kappa(x,y,y')$ is positive for $y'$ sufficiently close to $\frac{1}{f_k(1/x)}$.
	
	On the otherhand, $\kappa(x,y,y')$ is negative for $y'$ sufficiently close to 0. By the continuity of $\kappa(x,y,y')$ in $y'$, there exists an $y'\in\left(0,\frac{1}{f_k(1/x)}\right)$ that solves \eqref{def:kappa}. This finishes the proof of the lemma for the case of $k\geqslant1$.
	
	Now we consider the special case $k=0$. The only reason that we take extra care of this case is that the regime
	\begin{equation}
		\begin{split}
			y>1,\quad f_1(y)<x\leqslant f_0(y)\equiv1
		\end{split}
	\end{equation}
	contains the points with $x=1$, which is not what we want for $g_1$ in the inequality \eqref{inequality:general}. However, for this case it suffices to show the lemma in the regime
	\begin{equation}\label{regime:k0}
		\begin{split}
			y>1,\quad f_1(y)<x<1.
		\end{split}
	\end{equation}
	This is because for any $x=1$, we can choose
	$x'=\frac{2}{\alpha(y-1)+2}$ and obtain the following bound on $g_1(x,y)$ using \eqref{twistprop}:
	\begin{equation}
		\begin{split}
			g_1(x,y)\leqslant e^{-\alpha\gamma(x-x')}g_1(x',y)\leqslant g_1(x',y).
		\end{split}
	\end{equation}
	One can explicitly check that $x'>f_1(y)$, so $(x',y)$ is still in domain $\Omega_\alpha^{(1)}$. Then the argument for the case $k\geqslant1$ applies here. This finishes the proof of the lemma.
\end{proof}
\begin{remark}\label{remark:iterationproof}
	\item (a) The argument in the proof of lemma \ref{lemma:iteration} is inspired by HKS \cite{Hartman:2014oaa}. There, they considered the special case of $\alpha=1$ and computed $f_0$, $f_1$ and $f_2$.\footnote{In \cite{Hartman:2014oaa}, the explicit exressions of $f_0$ and $f_1$ was given (in their convention). $f_3$ was computed numerically and only the plot was presented there, see figure 2 in \cite{Hartman:2014oaa}. We have checked by overlapping their plot with the $f_3$ plot generated using our definition, and confirmed that the HKS result agrees with ours.} However, their argument relies on the iterative estimate of the spectral density and the partition function. In this work, we only need the estimate on the partition function itself.
	\item (b) The proof of lemma \ref{lemma:iteration} also provides an iterative algorithm to determine $N$, $x_i$'s and $y_i$'s in lemma \ref{lemma:functionbound}. The algorithm is already summarized in figure \ref{fig:algorithmxy}. The detailed rules are as follows:
	\begin{itemize}
		\item $(x,y)\in\Omega_\alpha^{(0)}$. 
		
		For $x,y>1$, we choose $N=1$, $x_1=x$ and $y_1=y$.
		\item $(x,y)\in\Omega_\alpha^{(1)}\backslash\Omega_\alpha^{(0)}$
		
		If $y>1$ and $x<1$, we choose
		\begin{equation}
			\begin{split}
				N=2,\quad x_1=x,\ y_1=y,\quad x_2=\frac{1}{x},\ y_2=\frac{1}{y'},
			\end{split}
		\end{equation}
		where $x$, $y$ and $y'\in\left(0,1\right)$ satisfies \eqref{def:kappa}. If $x>1$ and $y<1$, the rule is similar.
		
		For $y>1$ and $x=1$, we need to let $x$ decrease a little bit but still larger than $f_1(y)$. A proper choice is $x\rightarrow x'=\frac{2}{\alpha(y-1)+2}$. Then we choose
		\begin{equation}
			\begin{split}
				N=2,\quad x_1=x,\ y_1=y,\quad x_2=\frac{1}{x'},\ y_2=\frac{1}{y'},
			\end{split}
		\end{equation}
		where $x'$, $y$ and $y'\in\left(0,1\right)$ satisfies \eqref{def:kappa} (with $x$ there replaced by $x'$). If $x>1$ and $y=1$, the rule is similar. 
		\item $k\geqslant1$, $(x,y)\in\Omega_\alpha^{(k+1)}\backslash\Omega_\alpha^{(k)}$
		
		For $k\geqslant1$, $y>1$ and $f_{k+1}(y)<x\leqslant f_k(y)$,  $N$ and other $x_i$'s and $y_i$'s are defined as follows. We choose $y'\in\left(0,1\right)$ such that $x$, $y$ and $y'$ solves \eqref{def:kappa}. Then we get a collection of points\footnote{Here we use the notation $\tilde{x}_i$ and $\tilde{y}_i$ for the points determined by $\left(\frac{1}{x},\frac{1}{y'}\right)$, as the notation $x_i$ and $y_i$ is reserved for the points determined by $(x, y)$.}
		$$(\tilde{x}_1,\tilde{y}_1)=\left(\frac{1}{x},\frac{1}{y'}\right),\ (\tilde{x}_2,\tilde{y}_2),\ldots,(\tilde{x}_{N},\tilde{y}_{N})$$
		determined inductively by $\left(\frac{1}{x},\frac{1}{y'}\right)$. We choose
		\begin{equation}
			\begin{split}
				x_1&=x,\ y_1=y, \\
				 x_i&=\tilde{x}_{i-1},\ y_i=\tilde{y}_{i-1}\quad (i=2,3,\ldots, N+1).
			\end{split}
		\end{equation}
		Similar rule for $x>1$ and $f_{k+1}(x)< y\leqslant f_k(x)$.
	\end{itemize}
\end{remark}

\section{Virasoro-primary partition function}\label{section:vir}
In the previous sections, our discussion did not fully use the Virasoro symmetry: the argument mainly relied on the diagonalizability of the Virasoro generators \(L_0\) and \(\bar{L}_0\). In this section, we provide comments on how to strengthen our results by taking advantage of the full Virasoro symmetry.

\subsection{Setup using Virasoro symmetry}
Let us consider a CFT with central charge \(c > 1\). We assume that the CFT Hilbert space is a direct sum of representations of the Virasoro algebra:
\begin{equation}\label{H:VirDecomp}
	\mathcal{H}_{\rm CFT}=\bigoplus_{h,\bar{h}} {\rm Vir}_{h}\otimes {\rm Vir}_{\bar{h}},
\end{equation}
where \({\rm Vir}_{h}\otimes {\rm Vir}_{\bar{h}}\) denotes the irreducible representation of the Virasoro algebra with highest conformal weights \((L_0, \bar{L}_0) = (h, \bar{h})\).

Using \eqref{H:VirDecomp}, and recalling the definition of the torus partition function in \eqref{def:Z}, we can express \(Z(\beta_{L}, \beta_{R})\) as a sum of Virasoro characters \(\chi_h(\beta_L)\chi_{\bar{h}}(\beta_R)\) over primaries:
\begin{equation}\label{Z:charexp}
	Z(\beta_L,\beta_R)=\sum\limits_{h,\bar{h}}n_{h,\bar{h}}\ \chi_h(\beta_L)\chi_{\bar{h}}(\beta_R),
\end{equation}
where \(n_{h,\bar{h}}\) counts the degeneracy of Virasoro primaries with conformal weights \(h\) and \(\bar{h}\).

For \(c > 1\), the characters of unitary Virasoro representations are given by:
\begin{equation}\label{def:Vircharacter}
	\chi_h (\beta) \equiv \text{Tr}_{{\rm Vir}_h} \left( e^{- \beta \left( L_0 -
		\frac{c}{24} \right)}  \right) = \frac{e^{\frac{c - 1}{24} \beta}}{\eta
		(\beta)} \times \begin{cases}
		1 - e^{- \beta} &\text{if } h = 0,\\
		e^{- h \beta }&\text{if } h > 0,
	\end{cases}
\end{equation}
where the Dedekind eta function \(\eta(\beta) \equiv e^{-\beta/24}\prod\limits_{n=1}^{\infty}(1-e^{-n \beta})\) accounts for the contribution of descendants. The factor \((1 - e^{-\beta})\) for \(h = 0\) arises because the level-1 null descendant and its descendants are subtracted.

The above setup allows us to define the Virasoro-primary partition function:
\begin{equation}\label{def:Zvir}
	Z^{\rm Vir}(\beta_{L},\beta_{R}) := Z(\beta_{L},\beta_{R}) \eta(\beta_{L}) \eta(\beta_R).
\end{equation}
Using eqs.\,\eqref{Z:charexp}, \eqref{def:Vircharacter}, and \eqref{def:Zvir}, we obtain the following expansion for the Virasoro-primary partition function:
\begin{equation}
	\begin{split}
		Z^{\rm Vir}(\beta_{L},\beta_{R}) = e^{\frac{c - 1}{24}(\beta_L + \beta_R)} \Bigg{[}&(1 - e^{- \beta_L}) (1 - e^{- \beta_R})+ \sum_{h, \bar{h}>0} n_{h,\bar{h}}\,e^{-h \beta_L  - \bar{h}\beta_R } \\
		&+ \sum\limits_{J=1}^{\infty} n_{J,0}\, e^{-J\beta_{L}} (1 - e^{- \beta_R}) 
		+ \sum\limits_{J=1}^{\infty} n_{0,J}\, (1 - e^{- \beta_L}) e^{-J \beta_{R}} \Bigg{]}. \\
	\end{split}
\end{equation}
The first term in the square brackets is the vacuum contribution. Here, \(n_{h,\bar{h}}\) counts the degeneracy of Virasoro primaries with conformal weights \((h, \bar{h})\), classified as follows:
\begin{equation}
	\begin{split}
		&{\rm ordinary\ weights}: \quad h, \bar{h} \neq 0, \\
		&{\rm spin-}J\ {\rm left\ currents}: \quad h = J,\ \bar{h} = 0, \\
		&{\rm spin-}J\ {\rm right\ currents}: \quad h = 0,\ \bar{h} = J. \\
	\end{split}
\end{equation}
Using the modular invariance condition \eqref{modularinv} and the modular property of the Dedekind eta function, \(\eta(\beta) = \sqrt{\frac{2\pi}{\beta}} \eta\left(\frac{4\pi^2}{\beta}\right)\), we obtain the modular invariance condition for the Virasoro-primary partition function:
\begin{equation}\label{modularinv:Vir}
	Z^{\rm Vir}(\beta_{L},\beta_{R}) = \sqrt{\frac{4\pi^2}{\beta_{L}\beta_{R}}}\, Z^{\rm Vir}\left(\frac{4\pi^2}{\beta_{L}},\frac{4\pi^2}{\beta_{R}}\right).
\end{equation}

\subsection{Generalization of theorem \ref{theorem:logZ}}
Similarly to the statement in theorem \ref{theorem:logZ}, we fix $\alpha \in (0, 1]$ and define the low- and high-twist parts of the Virasoro-primary partition function, with the factor of $e^{\frac{c-1}{24}}$ factorized out, as follows:
\begin{equation}\label{def:ZVirtildeLH}
	\begin{split}
		\tilde{Z}^{\rm Vir}_L(\alpha; \beta_{L}, \beta_{R}) & := (1 - e^{- \beta_L}) (1 - e^{- \beta_R}) + \sum_{0 < \text{min}(h, \bar{h}) < \frac{\alpha (c - 1)}{24}} n_{h, \bar{h}}\,e^{- h \beta_L - \bar{h} \beta_R } \\
		& \quad + \sum\limits_{J = 1}^{\infty} n_{J, 0} e^{- J \beta_L} (1 - e^{- \beta_R}) + \sum\limits_{J = 1}^{\infty} n_{0, J} (1 - e^{- \beta_L}) e^{- J \beta_R}, \\
		\tilde{Z}^{\rm Vir}_H(\alpha; \beta_{L}, \beta_{R}) & := \sum\limits_{h, \bar{h} \geqslant \frac{\alpha (c - 1)}{24}} n_{h, \bar{h}}\,e^{- h \beta_L - \bar{h} \beta_R}. \\
	\end{split}
\end{equation}
We define the error term for the Virasoro-primary free energy as follows:
\begin{equation}\label{def:errorVir}
	\mathcal{E}^{\rm Vir}(\beta_L, \beta_R) := \log Z^{\rm Vir}(\beta_L, \beta_R) - \frac{c - 1}{24} (\beta_L + \beta_R).
\end{equation}
Using the same argument as in theorem \ref{theorem:logZ}, we obtain the following result for the Virasoro-primary partition function:

\begin{theorem}\label{theorem:logZVir}
	Let $\alpha \in (0, 1]$ be fixed. Then for any $c > 1$ unitary, Virasoro invariant, modular invariant 2D CFT, the error term $\mathcal{E}^{\rm Vir}(\beta_L, \beta_R)$ of the Virasoro-primary free energy $\log Z^{\rm Vir}(\beta_L, \beta_R)$ (as defined in \eqref{def:errorVir}) is bounded from above by
	\begin{equation}\label{eq:Virerrorbound}
		\begin{split}
			\mathcal{E}^{\rm Vir}(\beta_L, \beta_R) \leqslant \log \Bigg{[} &\sum_{i = 1}^{N} \left( \prod\limits_{j = 2}^{i} \sqrt{\frac{\beta_L^{(j)} \beta_R^{(j)}}{4 \pi^2}} \right) \tilde{Z}^{\rm Vir}_L\left( \alpha; \beta_L^{(i)}, \beta_R^{(i)} \right) \\
			& + \left( \prod\limits_{j = 2}^{N} \sqrt{\frac{\beta_L^{(j)} \beta_R^{(j)}}{4 \pi^2}} \right) \tilde{Z}^{\rm Vir}_H\left( \alpha; \beta_L^{(N)}, \beta_R^{(N)} \right) \Bigg{]}, \quad (\beta_L, \beta_R) \in \mathcal{D}_\alpha.
		\end{split}
	\end{equation}
	In inequality \eqref{eq:Virerrorbound}, $\tilde{Z}^{\rm Vir}_L$ and $\tilde{Z}^{\rm Vir}_H$ are defined in \eqref{def:ZVirtildeLH}; the domain $\mathcal{D}_\alpha$ is defined in \eqref{def:Dalpha}; the numbers $N$, $\beta_L^{(i)}$'s, and $\beta_R^{(i)}$'s are defined by the algorithm in figure \ref{fig:algorithmbeta}.
\end{theorem}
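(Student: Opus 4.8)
The plan is to reduce Theorem \ref{theorem:logZVir} to a modular-\emph{covariant} analogue of Lemma \ref{lemma:functionbound}, whose proof is identical except for tracking one extra prefactor. Set $\gamma=\frac{\pi(c-1)}{12}$, $x=\beta_L/2\pi$, $y=\beta_R/2\pi$, and define $g(x,y)=e^{-\frac{c-1}{24}(\beta_L+\beta_R)}Z^{\rm Vir}(\beta_L,\beta_R)$ with the splitting $g=g_0+g_1$ given by $g_0(x,y)=\tilde Z^{\rm Vir}_L(\alpha;\beta_L,\beta_R)$ and $g_1(x,y)=\tilde Z^{\rm Vir}_H(\alpha;\beta_L,\beta_R)$. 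The only hypothesis of Lemma \ref{lemma:functionbound} that changes is the modular-invariance condition: rewriting \eqref{modularinv:Vir} in these variables gives the covariant relation
\begin{equation}
e^{\gamma(x+y)}g(x,y)=\tfrac{1}{\sqrt{xy}}\,e^{\gamma\left(\frac1x+\frac1y\right)}g\!\left(\tfrac1x,\tfrac1y\right),
\end{equation}
which differs from \eqref{g:modinv} only by the prefactor $1/\sqrt{xy}$. The twist-gap property \eqref{twistprop} continues to hold verbatim: after the shift of the twist cutoff from $\frac{\alpha c}{24}$ to $\frac{\alpha(c-1)}{24}$, every term of $g_1$ has $h,\bar h\geqslant\frac{\alpha(c-1)}{24}$, so that $2\pi h,\,2\pi\bar h\geqslant\alpha\gamma$, which is exactly the decay rate demanded in \eqref{twistprop}; the vacuum and the left/right current contributions all carry $\min(h,\bar h)=0$ and hence sit inside $g_0$, consistent with \eqref{def:ZVirtildeLH}.

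Next I would re-run the induction of Lemma \ref{lemma:iteration} with this modified function. The crucial observation is that all the data controlling the iteration — the domain $\Omega_\alpha$, the algorithmically generated points $(x_i,y_i)$, and in particular the auxiliary variable $y'$ chosen to solve $\kappa(x,y,y')=0$ in \eqref{def:kappa} — depend only on the exponential factors, and are therefore completely unchanged by the extra $1/\sqrt{xy}$ prefactor. Tracking this prefactor through the single modular inversion in \eqref{boundg1:step1}–\eqref{boundg1:step2}, the bound $g_1(x_i,y_i)\leqslant g(x_{i+1},y_{i+1})$ used in the $\alpha c$ case is replaced by
\begin{equation}
g_1(x_i,y_i)\leqslant\sqrt{x_{i+1}\,y_{i+1}}\;g(x_{i+1},y_{i+1}),
\end{equation}
because $\tfrac{1}{\sqrt{x\,y'}}=\sqrt{(1/x)(1/y')}=\sqrt{x_{i+1}y_{i+1}}$ at the new point $(x_{i+1},y_{i+1})=(1/x,1/y')$. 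Since all functions are non-negative and each factor $\sqrt{x_{i+1}y_{i+1}}>0$, iterating this bound and retaining each $g_0$ term as it is produced gives the telescoping product: the term $g_0(x_i,y_i)$ acquires the factor $\prod_{j=2}^{i}\sqrt{x_j y_j}$, and the terminal term $g_1(x_N,y_N)$ the factor $\prod_{j=2}^{N}\sqrt{x_j y_j}$.

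Finally I would translate back via $\beta_L^{(j)}=2\pi x_j$, $\beta_R^{(j)}=2\pi y_j$, so that $\sqrt{x_j y_j}=\sqrt{\beta_L^{(j)}\beta_R^{(j)}/4\pi^2}$, and take logarithms; this reproduces exactly \eqref{eq:Virerrorbound}, with the same $N$, the same $\beta_L^{(i)},\beta_R^{(i)}$, and the same domain $\mathcal{D}_\alpha$ as in Theorem \ref{theorem:logZ}. The part to get right — more a matter of careful bookkeeping than a genuine obstacle — is the accumulation of the $\sqrt{\cdot}$ factors and their index range $j=2,\dots,i$ (the $i=1$ term carries an empty product equal to $1$): one must verify that in the recursive substitution of remark \ref{remark:iterationproof}, where the points generated by $(1/x,1/y')$ are relabelled $x_{i+1}=\tilde x_i,\,y_{i+1}=\tilde y_i$ and the initial point is prepended, the overall factor $\sqrt{\tilde x_1\tilde y_1}$ composes with the inductive product $\prod_{j=2}^{i}\sqrt{\tilde x_j\tilde y_j}=\prod_{m=2}^{i+1}\sqrt{x_m y_m}$ to give precisely $\prod_{m=2}^{i+1}\sqrt{x_m y_m}$ after relabelling. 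Everything else is identical to the proof of Lemma \ref{lemma:functionbound}, so no new analytic input about the sequence $\{f_k\}$ or the finiteness of $N$ is required.
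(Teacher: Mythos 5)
Your proposal is correct and follows exactly the route the paper intends: the paper omits the detailed proof of Theorem \ref{theorem:logZVir}, stating only that it is identical to that of Theorem \ref{theorem:logZ} except for tracking the extra $\sqrt{4\pi^2/(\beta_L\beta_R)}$ factor in the modular covariance \eqref{modularinv:Vir}, which is precisely the bookkeeping you carry out. Your verification that the twist-gap property \eqref{twistprop} holds with $\gamma=\tfrac{\pi(c-1)}{12}$ and the shifted cutoff, and that the telescoping product $\prod_{j=2}^{i}\sqrt{x_j y_j}$ emerges from iterating $g_1(x_i,y_i)\leqslant\sqrt{x_{i+1}y_{i+1}}\,g(x_{i+1},y_{i+1})$, correctly fills in the details the paper leaves implicit.
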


We omit the proof of theorem \ref{theorem:logZVir} since almost all the technical details are the same as the proof of theorem \ref{theorem:logZ}. The only difference is that for the Virasoro-primary partition function, the modular invariance condition \eqref{modularinv:Vir} includes an additional factor of $\sqrt{\frac{4 \pi^2}{\beta_L \beta_R}}$, compared to the modular invariance condition \eqref{modularinv} for the full partition function. This extra factor leads to the additional factors of $\sqrt{\frac{\beta_L^{(j)} \beta_R^{(j)}}{4 \pi^2}}$ on the right-hand side of inequality \eqref{eq:Virerrorbound}.

\subsection{Strengthened large-$c$ universality}
We now consider a unitary, Virasoro invariant, modular invariant 2D CFT that satisfies the properties listed in section \ref{subsec:HKS}, but with weaker sparseness conditions:
\begin{itemize}
	\item The theory has a normalizable vacuum.
	\item The central charge of the theory is tunable and allows the limit \(c \rightarrow \infty\).
	\item There exists a fixed \(\epsilon > 0\), and the spectrum of Virasoro primaries of scaling dimensions \(\Delta := h + \bar{h}\) below \(\frac{c}{12} + \epsilon\) is sparse, such that:
	\begin{equation}\label{Vir:HKSenergy}
		\widehat{Z}^{\rm Vir}_L(\beta) := \sum\limits_{h + \bar{h} \leqslant \frac{c}{12} + \epsilon} n_{h, \bar{h}} \ e^{-(h + \bar{h}) \beta} \leqslant A(\beta), \quad \beta > 2\pi,
	\end{equation}
	where \(A(\beta) < \infty\) for \(\beta > 2\pi\), and \(A\) does not depend on \(c\).
	
	\item The spectrum of Virasoro primaries of twist (defined by \(\min\left\{2h, 2\bar{h}\right\}\)) below \(\frac{c-1}{12}\) is sparse, such that:
	\begin{equation}\label{Vir:HKStwist}
		\tilde{Z}^{\rm Vir}_L(\alpha; \beta_L, \beta_R) \leqslant B(\beta_L, \beta_R), \quad \beta_L \beta_R > 4\pi^2,
	\end{equation}
	where $\tilde{Z}^{\rm Vir}_L(\alpha; \beta_L, \beta_R)$ is defined in \eqref{def:ZVirtildeLH}, \(B(\beta_L, \beta_R) < \infty\) for \(\beta_L \beta_R > 4\pi^2\), and \(B\) does not depend on \(c\).
\end{itemize}
Under these assumptions, we would like to show that the conclusion of corollary \ref{cor:alpha} still holds, despite the weaker sparseness conditions.

Using \eqref{FE:lowT}, \eqref{def:Zvir}, and \eqref{def:errorVir}, we rewrite the error term $\mathcal{E}(\beta_L, \beta_R)$ of the free energy $\log Z$ as follows:
\begin{equation}\label{error:rewrite}
	\begin{split}
		\mathcal{E}(\beta_L, \beta_R) = \mathcal{E}^{\rm Vir}(\beta_L, \beta_R) - \frac{1}{24}(\beta_L + \beta_R) + \log \eta(\beta_L) + \log \eta(\beta_R).
	\end{split}
\end{equation}
By theorem \ref{theorem:logZVir} and the sparseness conditions above, $\mathcal{E}^{\rm Vir}(\beta_L, \beta_R)$ is of order one in $c$ when $(\beta_L, \beta_R) \in \mathcal{D}_\alpha$, where $\mathcal{D}_\alpha$ is given in \eqref{def:Dalpha}. Additionally, the other terms on the right-hand side of \eqref{error:rewrite} do not depend on $c$. Therefore, the error term $\mathcal{E}(\beta_L, \beta_R)$ is also of order one in $c$ when $(\beta_L, \beta_R) \in \mathcal{D}_\alpha$. By modular invariance \eqref{modularinv}, we conclude:
\begin{corollary}\label{cor:Virlargec}
	Under the above assumptions, the CFT free energy satisfies the following asymptotic behavior in the limit \(c \rightarrow \infty\) with \(\beta_L\) and \(\beta_R\) fixed:
	\begin{equation}\label{eq:Virlargec}
		\begin{split}
			\log Z(\beta_L, \beta_R) = \begin{cases}
				\frac{c}{24} (\beta_L + \beta_R) + {\rm O}(1), & (\beta_L, \beta_R) \in \mathcal{D}_\alpha, \\
				\frac{\pi^2 c}{6} \left( \frac{1}{\beta_L} + \frac{1}{\beta_R} \right) + {\rm O}(1), & \left( \frac{4\pi^2}{\beta_L}, \frac{4\pi^2}{\beta_R} \right) \in \mathcal{D}_\alpha. \\
			\end{cases}
		\end{split}
	\end{equation}
	Here, by {\rm O(1)} we mean that the error term \(\mathcal{E}(\beta_L, \beta_R)\) is of order one in $c$, i.e., it is bounded in absolute value by some finite number that only depends on $\beta_L$ and $\beta_R$.
\end{corollary}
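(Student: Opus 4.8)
The plan is to deduce Corollary \ref{cor:Virlargec} from Theorem \ref{theorem:logZVir} by controlling every term on the right-hand side of the bound \eqref{eq:Virerrorbound} in the large-$c$ limit, and then transferring the resulting estimate on $\mathcal{E}^{\rm Vir}$ back to $\mathcal{E}$ through the identity \eqref{error:rewrite}. The crucial structural observation is that \eqref{error:rewrite} differs from $\mathcal{E}^{\rm Vir}$ only by the terms $-\frac{1}{24}(\beta_L+\beta_R)+\log\eta(\beta_L)+\log\eta(\beta_R)$, all of which are manifestly independent of $c$ and finite for $\beta_L,\beta_R>0$. Hence it suffices to show that $\mathcal{E}^{\rm Vir}(\beta_L,\beta_R)$ is bounded, above and below, by $c$-independent quantities for $(\beta_L,\beta_R)\in\mathcal{D}_\alpha$; the second line of \eqref{eq:Virlargec} then follows for free from the modular invariance \eqref{modularinv}, since $Z(\beta_L,\beta_R)=Z(4\pi^2/\beta_L,4\pi^2/\beta_R)$ maps the dual regime onto $\mathcal{D}_\alpha$.

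For the upper bound I would examine the three ingredients of \eqref{eq:Virerrorbound} in turn. First, the prefactors $\prod_{j=2}^{i}\sqrt{\beta_L^{(j)}\beta_R^{(j)}/4\pi^2}$ are finite products (the algorithm of figure \ref{fig:algorithmbeta} terminates at a finite $N$ on $\mathcal{D}_\alpha$) whose factors depend only on $(\beta_L,\beta_R,\alpha)$ and not on $c$. Second, each $\tilde{Z}^{\rm Vir}_L(\alpha;\beta_L^{(i)},\beta_R^{(i)})$ is evaluated at a point with $\beta_L^{(i)}\beta_R^{(i)}>4\pi^2$ (guaranteed for $i<N$ by Theorem \ref{theorem:logZ}, and for $i=N$ since $\beta_L^{(N)},\beta_R^{(N)}>2\pi$), so the twist-sparseness hypothesis \eqref{Vir:HKStwist} bounds it by the $c$-independent $B(\beta_L^{(i)},\beta_R^{(i)})$.

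The main obstacle is the remaining high-twist term $\tilde{Z}^{\rm Vir}_H(\alpha;\beta_L^{(N)},\beta_R^{(N)})$, evaluated at the terminal point where $\beta_L^{(N)},\beta_R^{(N)}>2\pi$. Here I would reproduce, for the Virasoro-primary partition function, the two-step HKS argument of section \ref{subsec:HKS}. The first step is a diagonal estimate: for $\beta>2\pi$ one bounds $Z^{\rm Vir}(\beta,\beta)\,e^{-\frac{c-1}{12}\beta}$ by a $c$-independent multiple of $A(\beta)$, using the energy-sparseness hypothesis \eqref{Vir:HKSenergy} together with the modular-covariance relation \eqref{modularinv:Vir}; this is the Virasoro analog of \eqref{checkcondition:step1}, and the only genuinely new feature is that the weight $\sqrt{4\pi^2/(\beta_L\beta_R)}$ in \eqref{modularinv:Vir} produces additional finite, $c$-independent factors relative to HKS's derivation. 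The second step is a unitarity argument: assuming $\beta_L^{(N)}\geqslant\beta_R^{(N)}$ and using $e^{-h\beta_L^{(N)}}\leqslant e^{-h\beta_R^{(N)}}$, one has $\tilde{Z}^{\rm Vir}_H(\alpha;\beta_L^{(N)},\beta_R^{(N)})\leqslant\tilde{Z}^{\rm Vir}_H(\alpha;\beta_R^{(N)},\beta_R^{(N)})\leqslant Z^{\rm Vir}(\beta_R^{(N)},\beta_R^{(N)})\,e^{-\frac{c-1}{12}\beta_R^{(N)}}$, after which the diagonal estimate closes the bound. I expect this to be the delicate step, precisely because the $\eta$-function weights obstruct a verbatim copy of the HKS bound and force one to track the covariance factors carefully.

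For the lower bound I would retain only the vacuum character in the Virasoro-primary expansion: since every other term is non-negative, $Z^{\rm Vir}(\beta_L,\beta_R)\geqslant e^{\frac{c-1}{24}(\beta_L+\beta_R)}(1-e^{-\beta_L})(1-e^{-\beta_R})$, whence $\mathcal{E}^{\rm Vir}\geqslant\log[(1-e^{-\beta_L})(1-e^{-\beta_R})]$, which is $c$-independent; this is where the normalizable-vacuum assumption enters. Combining the upper bound from the previous two paragraphs with this lower bound shows $\mathcal{E}^{\rm Vir}(\beta_L,\beta_R)={\rm O}(1)$ uniformly in $c$ on $\mathcal{D}_\alpha$. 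Feeding this into \eqref{error:rewrite} gives $\mathcal{E}(\beta_L,\beta_R)={\rm O}(1)$ on $\mathcal{D}_\alpha$, which is the first line of \eqref{eq:Virlargec}; the second line then follows by applying this to the $S$-dual point and invoking \eqref{modularinv}, completing the proof.
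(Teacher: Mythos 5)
Your proposal is essentially the paper's own (very terse) argument for this corollary: deduce it from Theorem \ref{theorem:logZVir} via the identity \eqref{error:rewrite}, bound each $\tilde{Z}^{\rm Vir}_L\bigl(\alpha;\beta_L^{(i)},\beta_R^{(i)}\bigr)$ by the hypothesis \eqref{Vir:HKStwist} (using $\beta_L^{(i)}\beta_R^{(i)}>4\pi^2$), control $\tilde{Z}^{\rm Vir}_H$ by the HKS-style diagonal-plus-unitarity estimate, take the vacuum character for the lower bound, and obtain the second line of \eqref{eq:Virlargec} from \eqref{modularinv}. The one step you flag as delicate does close, and it is worth recording why: in the Virasoro analog of \eqref{checkcondition:step1} the Jacobian $\frac{\beta}{2\pi}$ from \eqref{modularinv:Vir} is compensated by the extra decay $e^{-\frac{1}{12}\left(\beta-\frac{4\pi^2}{\beta}\right)}$ arising from normalizing by $e^{\frac{c-1}{12}\beta}$ rather than $e^{\frac{c}{12}\beta}$ (one checks $\frac{1}{12}\left(\beta-\frac{4\pi^2}{\beta}\right)>\log\frac{\beta}{2\pi}$ for all $\beta>2\pi$, since $u^2-\frac{6}{\pi}u+1>0$ has negative discriminant), provided the current terms --- whose $(1-e^{-\beta})$ factors would otherwise spoil the geometric series for $\beta$ near $2\pi$ --- are absorbed into $\tilde{Z}^{\rm Vir}_L$ and bounded directly by $B$ rather than fed through the modular transformation.
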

A special case of corollary \ref{cor:Virlargec} is when $\alpha = 1$. In this case, we observe that the sparseness condition on the twist is slightly weaker than the original one proposed by HKS in \cite{Hartman:2014oaa}: here, we only assume sparseness for \(\min(h, \bar{h}) < \frac{c - 1}{24}\), instead of \(\frac{c}{24}\).

This shift of \(\frac{1}{24}\) in the sparseness condition is particularly interesting because it may avoid the necessity of knowing an infinite amount of data.\footnote{We thank Tom Hartman for bringing this to our attention.} It is known that a 2D CFT with \(c > 1\) and a twist gap must contain infinitely many heavy states near the line \(h = \frac{c - 1}{24}\) and the line \(\bar{h} = \frac{c - 1}{24}\) in the \((h, \bar{h})\)-plane \cite{Collier:2016cls,Afkhami-Jeddi:2017idc,Benjamin:2019stq,Pal:2022vqc}. Furthermore, in some partition functions, the spectrum of Virasoro primaries below \(\frac{c - 1}{24}\) may be extremely sparse. For example, one can modify the MWK partition function \cite{Maloney:2007ud, Keller:2014xba} by adding conical-defect states \cite{Benjamin:2020mfz} or stringy states \cite{DiUbaldo:2023hkc}, along with their PSL$(2; \mathbb{Z})$ images, to construct a unitary, modular invariant partition function. For these examples, applying corollary \ref{cor:alpha} may be challenging due to the presence of infinitely many states with \(\min(h, \bar{h})\) between \(\frac{c - 1}{24}\) and \(\frac{c}{24}\). However, corollary \ref{cor:Virlargec} (with $\alpha=1$) applies directly since there are only finitely many primary states with \(\min(h, \bar{h})\) below \(\frac{c - 1}{24}\).

\section{Conclusion}\label{sec:conclusion}
In this paper,  we proved a bound on the vacuum-subtracted free energy:
$$\mathcal{E}(\beta_L,\beta_R)=\log Z(\beta_{L},\beta_{R})-\frac{c}{24}(\beta_{L}+\beta_{R})\,,$$ 
for $\beta_L,\beta_R\in\mathcal{D}_\alpha \subseteq \{(\beta_L,\beta_R): \beta_L\beta_R>4\pi^2\}$ in terms of the low lying spectrum,\footnote{$\tilde{Z}_H$ appearing in the theorem \ref{theorem:logZ} can easily be bounded by $\widehat{Z}_{L}(\beta)$ using an argument from \cite{Hartman:2014oaa},  see \eqref{modifiedTh}.}  in particular,  
\begin{equation*}
	\begin{split}
		\tilde{Z}_L(\alpha;\beta_{L},\beta_{R}):=\sum\limits_{\text{min}(h,\bar{h})\leqslant\frac{\alpha c}{24}}n_{h,\bar{h}}\ e^{- h\beta_L-\bar{h}\beta_{R}}\,,\quad 
		\widehat{Z}_{L}(\beta):=\sum_{\Delta\leqslant c/12+\epsilon} e^{-\beta\Delta}\,.
	\end{split}
\end{equation*}
The bound is summarized in theorem \ref{theorem:logZ}. The domain $\mathcal{D}_\alpha$ is given by \eqref{def:Dalpha}.   A similar bound on $\mathcal{E}(\beta_L,\beta_R)$ for $\beta_L,\beta_R\in\mathcal{D}'_\alpha$,  where $\mathcal{D}'_\alpha$ is the image of $S$ modular transformation of $\mathcal{D}_\alpha$:
 \begin{equation}
	\begin{split}
	\mathcal{D}'_\alpha=\left\{(\beta_{L},\beta_{R})\,\Big{|}\, \left(\frac{4\pi^2}{\beta_{L}},\frac{4\pi^2}{\beta_{R}}\right)\in\mathcal{D}_\alpha\right\} \subseteq \{(\beta_L,\beta_R): \beta_L\beta_R<4\pi^2\}\,,
	\end{split}
\end{equation}
follows from the modular invariance of the partition function.  The domain $\mathcal{D}_\alpha$ is obtained using an iteration equation for the domain of validity of the bound on the $(\beta_L,\beta_R)$ plane and solving the limit of the iteration.  

As a consequence of our bound,  we derived the phase diagram of the free energy for CFTs with large central charge,  assuming sparseness in the low lying energy ($E<\epsilon$) and twist spectra $(\tau \leqslant \frac{\alpha c}{12}$),  as a function of $\beta_L$ and $\beta_R$.  The universal domain depends on the parameter $\alpha$, which quantifies the amount of sparseness in the twist. For $0<\alpha\leqslant1$,  we showed that for $(\beta_L,\beta_R)$ in the union of two disconnected regimes $\mathcal{D}_\alpha\sqcup\mathcal{D}'_\alpha$, the large-$c$ free energy has the universal asymptotic behavior, demonstrated in corollary \ref{cor:alpha}. 

Note that $\mathcal{D}_1\sqcup\mathcal{D}'_1$ covers the whole plane except the hyperbola $\beta_L\beta_R=4\pi^2$.  This amounts to proving the HKS conjecture \ref{conjecture:HKS}.  Using holography,  we deduce that with $\alpha=1$ sparseness, the black hole saddle dominates if $\beta_L\beta_R<4\pi^2$ while thermal AdS$_3$ dominates if $\beta_L\beta_R>4\pi^2$.  For $\alpha<1$,  $\mathcal{D}_\alpha\sqcup\mathcal{D}'_\alpha$ does not cover the whole plane,  see figure \ref{fig:1} and \ref{fig:2}.  This is intuitively expected since we are assuming the weaker sparseness condition $\alpha<1$.  The weaker sparseness condition allows us to have more low lying states,  leading to non-universal regimes on the $(\beta_L,\beta_R)$ plane, where neither the black hole nor the thermal AdS is known to dominate the free energy in the semiclassical limit. 

Theorem \ref{theorem:logZ} and its applications do not use Virasoro symmetry. Using Virasoro symmetry, we extended the results to a refined version, summarized in theorem \ref{theorem:logZVir}, which concerns the Virasoro-primary partition function that counts only Virasoro primaries. This extension allows us to relax the twist sparseness condition: the Virasoro primaries with twist below $\frac{\alpha(c-1)}{12}$ (in contrast to all states with twist below $\frac{\alpha c}{12}$) are sparse, yet we can still prove the large-$c$ universal behavior of the free energy (corollary \ref{cor:Virlargec}), particularly the HKS conjecture. This result is advantageous as it potentially enables us to avoid imposing a sparseness condition on an infinite amount of CFT data. Notably, there are infinitely many heavy states near the line \(h = \frac{c - 1}{24}\) and the line \(\bar{h} = \frac{c - 1}{24}\) for a 2D CFT with \(c > 1\) and a non-zero twist gap \cite{Collier:2016cls,Afkhami-Jeddi:2017idc,Benjamin:2019stq,Pal:2022vqc}.


A testbed for verifying these predictions, in particular, the $\alpha=1$ case, is symmetric product orbifolds in $N\to\infty$ limit,  which appear widely in the counting of black hole microstates \cite{Strominger:1996sh, Sen:2007qy,  Sen:2012cj}.  Indeed the free energy is verified to be universal as a fucntion of $\beta_L,\beta_R$ in such CFTs in $N\to\infty$ limit \cite{Keller:2011xi,Hartman:2014oaa}.  It would be interesting to have an example where the sparseness condition is satisfied with $\alpha<1$ and verify our predictions for $\alpha<1$.

We end the concluding section by making some remarks about near extremal black holes and laying out the potential utility of our methods in a wider context.\\

\noindent$\bullet$ \textbf{Comments about near extremal black holes}\\

 As our result predicts the regime where the black hole dominates the grand-canonical ensemble,  it is meaningful to ask whether we can make any statement about near extremal black holes,  which has garnered lot of attentions in recent years.  In particular, 
the Schwarzian/JT gravity theory has been used to study near-extremal black holes \cite{Almheiri:2014cka,Almheiri:2016fws,Mertens:2017mtv,Mertens:2022irh,Kitaev:2018wpr,Yang:2018gdb,Nayak:2018qej,Moitra:2018jqs,Moitra:2019bub,Castro:2018ffi,Castro:2019crn,Hernandez-Cuenca:2024icn}; see \cite{Ghosh:2019rcj} for a perspective from $2$D CFT, and \cite{Pal:2023cgk} for its rigorous avatar. For an analysis without dimensional reduction to JT gravity, see \cite{Rakic:2023vhv,Kolanowski:2024zrq}, and for a calculation in the stringy regime, refer to \cite{Ferko:2024uxi}.  A concrete prediction of our result in this context can be summarized as follows:

\textit{We consider a putative quantum gravitational path integral defining a grand-canonical ensemble with parameters $\beta_R,\beta_L$.  In the $c\to\infty$ limit,  we can trust semiclassical weakly coupled Einstein gravity.   If we make sure that $\beta_L\beta_R<4\pi^2$ with $\beta_R=O(c^0)$, $\beta_L=O(c^0)$   the dominating saddle is given by a rotating black hole.  Now if we fix $\beta_R=O(c^0)$ and keep increasing $\beta_L$ to approach extremality,  owing to the non-perturbative corrections in $c$ (as estimated by performing a CFT computation),  the black hole ceases to dominate at some point,  well before the  one-loop correction ($1/c$)  coming from the Schwarzian action has a chance to become important i.e.  for $\beta_R=O(c^0)$,  the one-loop correction ($1/c$)  coming from the Schwarzian action always remains subdominant in the grandcanonical ensemble.}


To elaborate on the above,  recall that with $\alpha=1$ sparseness, the black hole saddle dominates if $\beta_L\beta_R<4\pi^2$ while thermal AdS$_3$ dominates if $\beta_L\beta_R>4\pi^2$. When $\beta_L$ is very large ($1/c\ll\beta_R\ll1\ll\beta_L\ll c$) yet $\beta_L\beta_R<4\pi^2$, the corresponding black hole saddle is nothing but near extremal BTZ black hole.\footnote{Let us clarify here why we restrict ourselves to the regime $1/c \ll \beta_R \ll 1 \ll \beta_L \ll c$. First, for our analysis to be valid, we need both $\beta_{L}\beta_{R} < 4\pi^2$ (for the black hole dominance) and $1/\beta_{L}, 1/\beta_{R} \ll c$ (for the large $c$ limit). Then we want an extremal BTZ black hole, which means that either $\beta_{L} \ll 1 \ll \beta_{R}$ or $\beta_{R} \ll 1 \ll \beta_{L}$. Consider the case $\beta_{R} \ll 1 \ll \beta_{L}$, then we have $\beta_{L} < \frac{4\pi^2}{\beta_{R}} \ll c$. This gives $1/c \ll \beta_R \ll 1 \ll \beta_L \ll c$.
} 
Note that in the near extremal limit,  $\beta \sim \beta_L/2$. 
We note that the $1/c$ correction from Schwarzian naively has a chance to become important\footnote{In \cite{Ghosh:2019rcj} ,  the authors consider $\beta_L\sim c$ and $\beta_R\sim 1/c$ to make the Schwarzian effect on a same footing as the leading answer.  In particular,  under the dimensional reduction of $3$D gravity,  the Schwarzian approximation should give a contribution of the form $\frac{\pi^2c}{6\beta_L}- \frac{3}{2}\log (\beta_L/c)$ and consequently $\beta_L\sim c$ put them on same footing.  However,  there should be an extra factor of $3/2\log c$ coming from comparing the $2$D path integral with that of $3$D one,  making the full answer $\frac{\pi^2c}{6\beta_L}- \frac{3}{2}\log (\beta_L)$,  as expected from $2$D CFT.  So the one-loop effect becomes important when $\beta_L\log\beta_L\sim c$. We thank Joaquin Turiaci for making the suggestion of possible presence of the $\log c$ factor coming from comparing the path integral measure. } when $\beta_L\log\beta_L\sim c$,  however,  this regime of parameter definitely violates the bound $\beta_L\beta_R<4\pi^2$ if $\beta_R=O(c^0)$ is kept fixed.
This follows from 3D gravity saddle-point computation, and is consistent with the statement of HKS conjecture \ref{conjecture:HKS}  that we prove here.  As mentioned earlier,  in the limit that we are studying, one-loop correction coming from the Schwarzian i.e.\ the $\log \beta$ term is supposed to be a subleading one and one cannot cleanly separate this effect from the order one contributions coming from the sparse low lying spectrum.  We leave a more careful treatment of this for future.  
%
We make a further remark that the presence of the lower bound on temperature naturally resolves the puzzle of  entropy turning negative in a UV complete theory.
See \cite{Engelhardt:2020qpv, Hernandez-Cuenca:2024icn} for a more general discussion beyond AdS$_3$/CFT$_2$ set up. 

To summarize,  in the (grand-)canonical ensemble,  only above the critical temperature,  the black hole dominates and it makes sense to discuss Schwarzian approximation.  It might still make sense to talk about near extremal black hole microstates in the micro-canonical ensemble,  for example,  see the appendix A of \cite{Hartman:2014oaa},  and discuss a regime of parameter where schwarzian correction might posssibly become important,  however we do not make any statement in the microcanonical ensemble in this paper. 



The correlation function of light operators in the double lightcone limit is expected to show similar universal features \cite{Kraus:2018pax}  and capture the physics of near extremal black holes \cite{Ghosh:2019rcj}.  It would be an useful pursuit to extend our rigorous analysis in the context of correlation functions. This requires a careful estimation of thermal and/or four point Virasoro blocks \cite{Collier:2018exn,Kusuki:2018wpa,Eberhardt:2023mrq} in the appropriate limit.

 Recently it has been pointed out that in the higher dimensional AdS,  due to superradiance,  non-supersymmetric near extremal black holes do not dominate the canonical ensemble.  Rather the dominating saddle is \textit{grey galaxy} \cite{Kim:2023sig}.  Thus a more appropriate set up to study the near extremal black holes is supersymmetric one \cite{Choi:2024xnv}.  It would be nice to extend our analysis to the supersymmetric set up,  at least in the context of AdS$_3$/CFT$_2$.  It is straightforward to work out the $\mathcal{N}=1$ case,  which is  therefore left as an excercise to the readers.  \\

\noindent$\bullet$ \textbf{Potential utility in wide context.}\\

We envision our results to be useful in a wider context.  We mention some of them here.  The references \cite{Benjamin:2015hsa} and \cite{Benjamin:2015vkc} extended the ideas of HKS to weak Jacobi forms by examining the elliptic genus of CFT$_2$.  They compared the asymptotic growth of the fourier coefficients of this genus to the entropy of BPS black holes in AdS$_3$ supergravity. Further studies of these coefficients appeared in \cite{Belin:2019rba, Belin:2020nmp,Benjamin:2022jin} in the context of charting the space of holographic CFTs.  The weak Jacobi forms with appropriate sparseness condition have recently been studied in \cite{Apolo:2024kxn},  where the authors identified the logarithmic correction to the leading behavior of the asymptotic growth of the fourier coefficients.  Furthermore,  the techniques of HKS have proven to be useful in the context of ensemble of Narain CFTs \cite{Dymarsky:2020pzc}, and in making the Cardy formula mathematically precise \cite{Mukhametzhanov:2019pzy,Mukhametzhanov:2020swe,Pal:2019zzr,Das:2020uax}.  Furthermore,  it is conceivable to extend our analysis to CFTs with global symmetry. We expect that the free energy computed within each charged sector will exhibit the same kind of universality as described in the HKS framework.

\section*{Acknowledgements}
We are grateful to Nathan Benjamin, Anatoly Dymarsky, Victor Gorbenko, Shiraz Minwalla, Mukund Rangamani, Ashoke Sen, Sandip Trivedi and Joaquin Turiaci for insightful discussions. We especially thank Tom Hartman, Mrunmay Jagadale, Hirosi Ooguri, Eric Perlmutter, Slava Rychkov and Allic Sivaramakrishnan for their valuable feedback on the draft. JQ would also like to thank Queen Mary University of London for their hospitality during the final stages of the draft preparation. ID acknowledges support from the Government of India, Department of Atomic Energy, under Project Identification No. RTI 4002, and from the Quantum Space-Time Endowment of the Infosys Science Foundation. SP is supported by the U.S. Department of Energy, Office of Science, Office of High Energy Physics, under Award Number DE-SC0011632, and by the Walter Burke Institute for Theoretical Physics. JQ is supported by Simons Foundation grant 994310 (Simons Collaboration on Confinement and QCD Strings).

\small
	
\bibliography{largec}
\bibliographystyle{utphys}
	
\end{document}